\documentclass[preprint,12pt]{elsarticle}




\usepackage[utf8]{inputenc}
\usepackage[ruled,vlined,linesnumbered]{algorithm2e}
\usepackage{amsthm,booktabs}
\usepackage{color}
\usepackage{tikz,comment}
\usepackage{subfiles}
\usepackage{soul}

\newtheorem{theorem}{Theorem}

\newtheorem{lemma}[theorem]{Lemma}

\def \polylog{\operatorname{polylog}}
\usepackage{mathtools}

\DeclarePairedDelimiter\floor{\lfloor}{\rfloor}

\usepackage{array}
\newcolumntype{P}[1]{>{\centering\arraybackslash}p{#1}}
\usepackage{tabularx}


\usepackage{hyperref}
\usepackage{mathtools}

\usepackage{array}
\newcolumntype{M}[1]{>{\centering\arraybackslash}m{#1}}
\usepackage{multirow,booktabs}

\newcommand{\anis}[1]{{\color{blue}\underline{\textsf{Anisur:}}} {\color{blue} \emph{#1}}}
\newcommand{\manish}[1]{{\color{magenta}\underline{\textsf{Manish:}}} {\color{olive} \emph{#1}}}

\SetCommentSty{mycommfont}

\newcommand{\cR}{\mathcal{R}}


\journal{Arxiv}

\begin{document}

\begin{frontmatter}



\title{Agentic Distributed Computing
\tnoteref{t1}}
\tnotetext[t1]{The results for the case of $k=n$ were appeared as a brief announcement in the Proceedings of DISC 2024~\cite{KKMS24}.}



\author[inst1]{Ajay D. Kshemkalyani}\ead{ajay@uic.edu}
\author[inst2]{Manish Kumar\corref{mycorrespondingauthor}\fnref{fund2}}\ead{manishsky27@gmail.com}
\author[inst3]{Anisur Rahaman Molla\fnref{fund3}}\ead{molla@isical.ac.in}
\author[inst4]{Gokarna Sharma}\ead{gsharma2@kent.edu}

\address[inst1]{Department of Computer Science, 
University of Illinois Chicago, Chicago, IL 60607, USA}
\address[inst2]{Department of Computer Science \& Engineering, IIT Madras, Chennai 600036, India}
\address[inst3]{Cryptology and Security Research Unit, Indian Statistical Institute, Kolkata,  700108, India}
\address[inst4]{Department of Computer Science, Kent State University, Kent, OH 44242, USA}

\cortext[mycorrespondingauthor]{Corresponding author.}

\fntext[fund2]{The work of Manish Kumar is supported by CyStar at IIT Madras.}
\fntext[fund3]{The work of A. R. Molla was supported, in part, by ISI DCSW Project, file no. H5413.}
\begin{abstract}
The most celebrated and extensively studied model of distributed computing is the {\em message-passing model,} in which each vertex/node of the (distributed network) graph corresponds to a static computational device that communicates with other devices through passing messages. 
In this paper, we consider the {\em agentic model} of distributed computing which extends the message-passing model in a new direction. In the agentic model,  computational devices are modeled as relocatable or mobile computational devices (called agents in this paper), i.e., each vertex/node of the graph serves as a container for the devices, and hence communicating with another device requires relocating to the same node. 
We study two fundamental graph level tasks, leader election, and minimum spanning tree, in the agentic model, which will enhance our understanding of distributed computation across paradigms. 
The objective is to minimize both time and memory complexities. 
Following the literature, we consider the synchronous setting in which each agent performs its operations synchronously with others, and hence the time complexity can be measured in rounds. 
In this paper, we present two deterministic algorithms for leader election: one for the case of $k<n$ and another for the case of $k=n$, minimizing both time and memory complexities, where $k$ and $n$, respectively, are the number of agents and number of nodes of the graph.    
Using these leader election results, we develop deterministic algorithms for agents to construct a minimum spanning tree of the graph, minimizing both time and memory complexities. To the best of our knowledge, this is the first study of distributed graph level tasks in the agentic model with $k\leq n$. Previous studies only considered the case of $k=n$.
\end{abstract}

\begin{keyword}
Distributed algorithms \sep message-passing model \sep agentic model \sep agents \sep local communication \sep leader election \sep MST 
\sep time and memory complexity

\end{keyword}

\end{frontmatter}

\section{Introduction}
\label{section:introduction}
The well-studied {\em message-passing model} of distributed computing assumes an underlying distributed network represented as an undirected graph $G=(V,E)$, where each vertex/node corresponds to a {\em computational device} (such as a computer or a processor), and each edge corresponds to a bi-directional communication link. Each node $v\in V$ has a distinct $\Theta(\log n)$-bit identifier, $n=|V|$. 
The structure of $G$ (topology, latency) is assumed to be unknown in advance, and each node typically knows only its neighboring nodes. 
The nodes interact with one another by sending messages (hence the name message-passing) to achieve a common goal. 
The computation proceeds according to synchronized {\em rounds}. In each round, each node $v$ can perform unlimited local computation and may send a distinct message to each of its neighbors. Additionally, each node $v$ is assumed to have no restriction on storage. 
In the {\sf LOCAL} variant of this model, there is no limit on bandwidth, i.e., a node can send any size message to each of its neighbors. In the {\sf CONGEST} variant, bandwidth is taken into account, i.e.,  a node may send only a, possibly distinct, $O(\log n)$-bit message to each of its neighbors.

In this paper,  we consider the {\em agentic} model of distributed computing which extends the message-passing model in a new direction. 
In the agentic model, the computational devices are modeled as  {\em relocatable or mobile computational devices} (called agents in this paper and hence the name `agentic' for the computing model). Departing from the notion of vertex/node as a {\em static} device in the message-passing model,  the vertices/nodes serve as {\em containers} for the devices in the agentic model. 
The agentic model has two major differences with the message-passing model (Table \ref{table:model-comparison} compares the properties). 

\begin{table}[h!]
\footnotesize{
\centering
\begin{tabular}{ccccc}
\toprule
{\bf Model} & {\bf Devices} & {\bf Local com-} & {\bf Device} & {\bf Neighbor} \\ 
& & {\bf putation} & {\bf storage} & {\bf communication} \\ 
\toprule
Message-passing & Static & Unlimited & Unrestricted& Messages \\ 
\hline
Agentic & Mobile & Unlimited & Limited & Relocation \\ 
\bottomrule
\end{tabular}
\caption{Comparison of the message-passing and agentic models. 
\label{table:model-comparison}
}
}
\end{table}

\begin{itemize}
\item [] {\bf Difference I.} The graph nodes neither have identifiers, computation ability, nor storage,  but the devices are assumed to have distinct $O(\log n)$-bit identifiers, computation ability, and limited storage. 
\item [] {\bf Difference II.} The devices cannot send messages to other devices except the ones co-located at the same node. To send a message to a device positioned at a neighboring node, a device needs to relocate to the neighbor. 
\end{itemize}
Difference II is the major problem for the agentic model. 
To complicate further, while a device relocates to a neighbor, the device at that neighbor might relocate to another neighbor. 
%
%
%
This was not an issue in the message-passing model since, when a (static) device sends messages, it is always the case that neighboring nodes receive those messages. 
In this paper, we study the fundamental graph level tasks, such as leader election and minimum spanning tree in the agentic model,  
which will enhance our understanding of distributed computation across paradigms.

The agentic model is useful, applicable, and preferred to the message-passing model in scenarios like private networks in the military or sensor networks in inaccessible terrain where direct access to the network is possibly obstructed, but small battery-powered relocatable computational devices can learn network structures and their properties for overall network management. Prominent use of the agentic model for network management is in underwater navigation \cite{Cong2021}, network-centric warfare in military systems \cite{LEE2018}, modeling social network \cite{Zhuge2018}, studying social epidemiology \cite{ELSAYED2012}, etc. 


First of all, with $k\leq n$ agents working on $G$, we develop two sets of deterministic algorithms for leader election, one for the case of $k<n$ and another fo the case of $k=n$,  with provable guarantees on two performance metrics that are fundamental to the agentic model: {\em time complexity} of a solution and {\em storage requirement} per agent. We focus on the {\em deterministic} algorithms since they may be more suitable for relocatable devices. 
%
Our consideration of storage complexity as the second performance metric is important as this metric was often neglected in the message-passing model with the implicit assumption that the devices have no restriction on the amount of storage needed to run the algorithm successfully. Instead, the focus was given on {\em message complexity} -- the total number of messages sent by all nodes for a solution \cite{Pandurangan0S18} as the second performance metric in the message-passing model. 

Using the proposed leader election algorithms, 
we 
give stabilizing and explicit algorithms to construct a minimum spanning tree (MST) of $G$, another fundamental and well-studied problem in the message-passing model of distributed computing, in the agentic model, and 
provide both time and memory complexities. 

To the best of our knowledge, this is the first time these two fundamental graph level tasks were studied in the agentic model for any $k\leq n$ in a systematic way, which will be more apparent in the following. To the best of our knowledge, no previous studies studied graph level tasks, including leader election and MST, for the case of $k<n$. Leader election and MST were studied for the case of $k=n$ \cite{
kshemkalyani2024agent,KshemkalyaniAAMAS25,KshemkalyaniICDCIT25}, which our work in this paper subsumes as we consider $k\leq n$.

\vspace{2mm}
\noindent{\bf Computing Model.}
We model the network as a connected, undirected graph $G=(V,E)$ with $|V|=n$ nodes and $|E|=m$ edges.  Each node $v_i \in V$ has $\delta_i$ ports corresponding to each edge incident to it labeled $[1, \ldots, \delta_i]$. For a node $w$, we denote its neighbor nodes in $G$ by $N(u)$. Consider an edge $(u,v)$ that connects node $u$ and node $v$ (and node $v$ to node $u$). We note the port at $u$ leading to $v$ by $p_{uv}$ and the port at $v$ leading to $u$ by $p_{vu}$.   
We assume that the set $\cR=\{r_1,r_2,\ldots,r_k\}$ of $k\leq n$  agents are initially positioned on the nodes of $G$. 
The agents have unique IDs in the range $[1,k^{O(1)}]$.

Initially, a node may have zero, one, or multiple agents positioned. An agent at a node can communicate with some (or all) agents present at that node, but not with the agents that are situated at some other nodes (i.e., we consider the {\em local communication} model \cite{Augustine:2018}). 

An agent can move from node $v$ to node $u$ along the edge $(v,u)$.
Following the message-passing literature, e.g., \cite{GarayKP93}, we assume that an agent can traverse an edge in a round, irrespective of its (edge) weight (we consider  $G$ weighted only for the MST problem in this paper; leader election does not require $G$ to be weighted). 
An agent that moves from node $v$ along the port $p_{vu}$ is aware of the port $p_{uv}$ when it arrives at $u$.
Additionally, at any node $v$, it is aware of the weight $w(e)$ (if $G$ is weighted) of the edge $(v,u)$ that connects $v$ to its neighbor node $u$. We assume that there is no correlation between the two port numbers of an edge. 

Any number of agents are allowed to move along an edge at any time; that is, the agentic model does not put restrictions on how many agents can traverse an edge at a time.

The agents use the synchronous setting as in the standard ${\sf CONGEST}$ model: In each round, each agent $r$ positioned at a node $v$ can perform some local computation based on the information available in its storage as well as the (unique) port labels at the positioned node and decide to either stay at that node or exit it through a port to reach a neighboring node. Before exiting, the agent might write information on the storage of another agent that is staying at that node.   An agent exiting a node always ends up reaching another node (i.e., an agent is never positioned on an edge at the end of a round).  

The time complexity is the number of rounds of operations until a solution. The storage (memory) complexity is the number of bits of information stored at each agent throughout the execution.  

Notice that at any round, the agent positions on $G$ may satisfy the following: 
\begin{itemize}
\item {\em dispersed} -- $k (\leq n)$ agents are on $k$ nodes,  

\item {\em general} -- not {\em dispersed} (i.e., at least two agents on at least one node). 
\end{itemize}
If the agents are in a dispersed (respectively, general) configuration initially, then we say the agents satisfy a dispersed (respectively, general) initial configuration.  We say that an agent is {\em singleton} if it is alone at a node, otherwise {\em non-singleton} (or multiplicity).

\vspace{2mm}
\noindent{\bf Contributions.}
The leader election results are summarized in Table \ref{table:leaderelection}.
Our results are of two types characterized by whether the elected leader is {\em overtaken} -- the elected leader does not know whether its election is final or if other leaders might overtake it later; nonetheless, the election eventually stabilizes to a unique leader.   

\sloppy

\begin{itemize}
    \item {\bf Stabilizing.} The elected leader may be overtaken. 
    \item {\bf Explicit.} The elected leader is not overtaken. 
\end{itemize}

For the case of $k<n$, we develop two algorithms one stabilizing and another explicit. We design first the stabilizing algorithm and then modify it to make explicit. The stabilizing algorithm works as follows.
For the case of $k<n$, only $k$ nodes of $G$ will be occupied even in a dispersed configuration. Let $C_i\subset G$ be a component (subgraph) of $G$ with exactly one agent positioned per node.  For any two components $C_i,C_j$, if $v_i\in C_i$ then any node $v_j\in C_j, C_j\neq C_i,$ is at least 2 hops away, i.e., $hop(v_i,v_j)\geq 2$. In other words, if $hop(v_i,v_j)=1$, then both $v_i,v_j$ belong to the same component.  For any $k<n$, there will be $1\leq \kappa\leq k$ components $C_1,C_2,\ldots,C_{\kappa}$ such that (i) $C_1\cup C_2\cup \ldots \cup C_{\kappa}=k$ and $C_1\cap C_2\cap \ldots \cap C_{\kappa}=\emptyset$. 
Our leader election algorithms elect
$\kappa$ agents as leaders of $\kappa$ components, one per component.  For time and memory complexities, we need the following parameters which we define now.
\begin{itemize}
    \item Let $|C_i|$ be the number of nodes in $C_i$ (i.e., size of $C_i$) 
    \item Let $\ell_i$ be the number of non-singleton nodes in $C_i$
    \item Let $\zeta_i$ be the number of agents that become local leaders (we define local leaders later) in $C_i$ 
\item    
Let $|C_{max}|:=\max_i |C_i|$, $\ell_{\max}:=\max_i \ell_i$, and $\zeta_{max}:=\max_i \zeta_i$ 
\end{itemize}
The time complexity is $O((|C_{max}|+\log^2k)\Delta)$ 
and memory complexity is $O(\max\{\ell_{\max},\zeta_{\max},\log (k+\Delta)\} \log (k+\Delta))$ bits per agent. 
Therefore, our stabilizing algorithm guarantees that if an elected leader is not overtaken by another leader until $O((|C_{max}|+\log^2k)\Delta)$, then that elected leader remains as a leader, i.e.,  the system as a whole stabilizes to a fixed leader after $O((|C_{max}|+\log^2k)\Delta)$ rounds. Notice that the algorithm does not have parameter knowledge.


\begin{table*}[!t]
\centering
{\footnotesize
\begin{tabular}{|c|c|c|c|}
\toprule
{\bf Algorithm} & {\bf Time} & {\bf Memory/Agent} & {\bf Known} \\ 
\toprule
Stabilizing, $k<n$ &$O((|C_{max}|+\log^2k)\Delta)$ & $O(\max\{\ell_{max},\zeta_{max},\log(k+\Delta)\} \log (k+\Delta))$ & $\times$\\
Explicit, $k<n$ &$O(k\Delta)$&$O(\max\{\ell_{max},\zeta_{max}\} \log (k+\Delta))$ & $k,\Delta$ \\
Explicit, $k=n$ &$O(m)$&$O(\max\{\ell,\zeta, \log n\} \log n)$ & $\times$ \\
\bottomrule
\end{tabular}
\caption{Summary of the results for leader election in the agentic model for $k\leq n$. $|C_{max}|$ is the size of the largest component, $\ell_{\max}$ and $\zeta_{\max}$, respectively, are the maximum number of non-singleton nodes and maximum number of local leaders among components, and $\ell$ and $\zeta$, respectively, are the number of non-singleton nodes and the number of local leaders in $G$ since there is a single component when $k=n$. $m$ is the number of edges in $G$ and $\Delta$ is the maximum degree of $G$.} 
\label{table:leaderelection}
}
\end{table*}

\begin{table*}[!t]
\centering
{\footnotesize
\begin{tabular}{|c|c|c|c|}
\toprule
{\bf Algorithm} & {\bf Time} & {\bf Memory/Agent} & {\bf Known} \\ 
\toprule
Stabilizing, $k<n$ &$O(|C_{max}|(\Delta+\log |C_{max}|)$ & $O(\max\{\ell_{max},\zeta_{max},\Delta,$ & $\times$\\
& $+\Delta \log^2k)$ & $\log(k+\Delta)\} \log (k+\Delta))$& \\
Explicit, $k<n$ &$O(k\Delta)$&$O(\max\{\ell_{max},\zeta_{max},\Delta\} \log (k+\Delta))$ & $k,\Delta$ \\
Explicit, $k=n$ &$O(m+n\log n)$&$O(\max\{\ell,\zeta,\Delta, \log n\} \log n)$ & $\times$ \\
\bottomrule
\end{tabular}
\caption{Summary of the results for MST  in the agentic model for $k\leq n$.} 
\label{table:MST}
}
\end{table*}

We now discuss how we make the above 
stabilizing algorithm for $k<n$ an explicit algorithm.
For this we assume the agents to have parameter knowledge (specifically,  $k$ and $\Delta$).
Given the knowledge of $k$ and $\Delta$, in this algorithm, an agent eligible to become a unique leader in a component $C_i$ waits until round $c_1 \cdot k\Delta$, for some constant $c_1$, before elevating itself as a leader. We will show that if an eligible agent to become a unique leader does not become aware of another eligible agent until $c_1 \cdot k\Delta$, then it can safely become a leader. 
We then prove that the runtime becomes $O(k\Delta)$ and memory becomes $O(\max\{\ell_{max},\zeta_{max}\} \log (k+\Delta))$ bits per agent. 

For the case of $k=n$, we develop an explicit deterministic algorithm for leader election
without requiring any knowledge (neither exact nor upper bound) on parameters $k$, $n$, $\Delta$, and $D$. Notice that for $k<n$, the explicit algorithm for leader elected needed knowledge of $k,\Delta$. Additionally, we elected a single fixed leader (with no overtaking), i.e., $\kappa=1$ (a single component $C$). The time complexity of this algorithm is $O(m)$ rounds and the memory complexity is $O(\max\{\ell,\zeta,\log n\} \log n)$ bits per agent, where $\ell,\zeta$ are the non-singleton nodes in the initial configuration and the number of local leaders in  the single component $C$. 

Using these leader election results for both $k<n$ and $k=n$, we develop stabilizing and explicit algorithms for minimum spanning tree (MST) 
for both $k<n$ and $k=n$. The algorithms are either stabilizing or explicit based on the stabilizing or explicit leader election result used.  The MST results are summarized in Table \ref{table:MST}.

To the best of our knowledge, the results for both leader election and MST for  $k<n$ are established for the first time in this paper. previously, both leader election and MST were studied in the agentic model only for $k=n$. Since our results are for $k\leq n$, our results subsume those results. 

\vspace{2mm}
\noindent{\bf 
Challenges in the Agentic Model.}
In the message-passing model,  in a single round, a node can send a message to all its neighbors and receive messages from all its neighbors. 
Consider the problem of leader election. In the message-passing model, since nodes have IDs, in $D-1$ rounds, all nodes can know the IDs of all other nodes, where $D$ is the diameter of the graph. They then can simply pick the smallest/highest ID node as a leader, solving leader election in $O(D)$ rounds \cite{Peleg90L}. 
In contrast, in the agentic model, the messages from an agent, if any, that are to be sent to the agents in the neighboring nodes have to be delivered by the agent visiting those neighbors. Furthermore, it might be the case that when the agent reaches that node, the agent at that node may have already moved to another node.  
Therefore, it is not clear how a leader could be elected, and if possible to do so, how much would be the time complexity. Additionally, in the agentic model, we have that $k<n$ or $k=n$. For the $k<n$ case, even in a dispersed configuration, there may not be an agent positioned at each node, and the challenge is how to deal with such scenarios.  Therefore, computing in the agentic model is challenging compared to the message-passing model. 

\vspace{2mm}
\noindent{\bf Simulating Message-passing Model to the Agentic Model.}
One may suggest solving tasks in the agentic model simulating the rich set of techniques available in the message-passing model. This is indeed possible when $k=n$ and $n,\Delta$ are known to agents a priori.  When $k=n$ and $n,\Delta$ are known, if agents are not initially in a dispersed configuration, they can be dispersed in $O(n)$ rounds \cite{Kshemkalyani2025}. Since $n$ is known, agents wait until $O(n)$ rounds to start solving graph level tasks.  We can then show that an agent at a node meets all its neighbors in $O(\Delta\log n)$ rounds, knowing both $n,\Delta$. The meeting helps us to guarantee that an agent can communicate with all its neighbors in $O(\Delta\log n)$ rounds. The argument is as follows. Each agent ID is $c\log n$ bits, for some constant $c \geq 1$. Encode the agent ID in $c\log n$ bits. Starting from the most significant bit until reaching the least significant bit, if the current bit is 1, $r_u$ visits all its neighbors in the increasing order of port numbers, which finishes in $2\delta_u$ rounds. If the current bit is $0$, $r_u$ waits at $u$ for $2\Delta$ rounds. Since agent IDs are unique, there must be a round at which when $r_u$ visits a neighbor $v$, the agent $r_v$ is waiting at $v$, hence a {\em meeting}. Therefore, $r_u$ visits all its neighbors in $c\log n  \cdot \Delta=O(\Delta\log n)$ rounds. This meeting means that a round in the message-passing model for $r_u$ can be simulated in $O(\Delta \log n)$ rounds in the agentic model. Therefore, any deterministic algorithm $\mathcal{A}$ that runs for $O(T)$ rounds in the message-passing model can be simulated in the agentic model in $O(T\Delta\log n)$ rounds. Adding the time for dispersion, we have a total time $O(T\Delta\log n+n)$ rounds for simulation in the agentic model.  

Although this simulation seems to work nicely for the agentic model, it suffers from two major problems: (i) assumption of $k=n$ and (ii) assumption of  $n,\Delta$ known to agents a priori. In the agentic model, it may be the case that $k<n$ and $n,\Delta$ may not be known to agents a priori. Actually, it is the quest to design solutions in the agentic model that are oblivious to parameter knowledge. Not knowing the parameters, it is not clear how to meet neighbors. 
Additionally, we would like to solve tasks when $k\leq n$.
It is not known whether complexity bounds better than those through simulation could be obtained by designing algorithms directly in the agentic model. This paper sheds light in this direction.

\vspace{2mm}
\noindent{\bf Techniques.}
We develop a 2-stage technique  
which can solve leader election in the agentic model, even when the parameters are not known and for any $k\leq n$. In Stage 1, the agents compete to become a `local' leader.  
The singleton agents run a {\em Singleton\_Election} procedure to become a local leader. The non-singleton agents run a {\em Non\_Singleton\_Election} procedure to become a local leader. The {\em Singleton\_Election} procedure elects an agent at a node $u$ as a local leader based on 1-hop neighborhood information of $u$. The {\em Non\_Singleton\_Election} procedure first disperses all the co-located agents to different nodes (one per node) running a DFS traversal and finally, an agent becomes a local leader.   It is guaranteed that starting from any initial configuration, at least one agent becomes a local leader in each component $C_i$. 

In Stage 2, the local leaders in each component $C_i$ compete to become a `global' leader at $C_i$.
The local leader runs a {\em Global\_Election} procedure to becomes a global leader. The {\em Global\_Election} procedure runs a DFS traversal and checks for the conditions on whether a local leader can elevate itself as a global leader. The challenge is to handle the possibility of multiple local leaders being elected at different times. Care should be taken so that the whole process does not run into a deadlock situation. We do so by giving priority to the agent that becomes the local leader later in time so that such situations are avoided. In the case of known parameters, the agents run Stage 1 for $O(k\Delta)$ rounds and then Stage 2. In case of unknown parameters, the agents run Stage 2 as soon as Stage 1 finishes.

Denote by  {\em home nodes} the graph nodes where agents became local leaders in Stage 1. $Global\_Election$ for a local leader starts from its home node and ends at its home node. During {\em Global\_Election}, the home nodes of the local leaders become empty since the agents are traversing $G$. Therefore, when a {\em Global\_Election} procedure for a local leader reaches an empty node, it needs to confirm whether the empty node is, in fact, {\em unoccupied} (no agent was ever settled at that node) or {\em occupied} (a home node of some local leader that is currently away from home running {\em Global\_Election}).
This situation also applies for   a {\em Non\_Singleton\_Election} procedure when it reaches an empty node; it needs to confirm whether the node is unoccupied or occupied.  
We overcome this difficulty by asking the local leaders to keep their home node information at an agent positioned at a neighbor. 

However, this technique needs to handle two situations. 
There may be the case the there is no neighboring agent. There may also be the case the one neighboring agent may be responsible for storing the home node information about multiple local leaders.
We develop an approach that allows the local leader in the first situation to become leader without running $Global\_Election$. It becomes $non\_candidate$ if some other agent running $Global\_Election$ or $Non\_Singleton\_Election$ visits it. We handle the second situation through the wait and notify approach.  
Suppose one agent is responsible for storing the home node information for multiple local leaders. We ask it to store the information about only one local leader. All other local leaders wait. After a while that node becomes free (i.e., it does not need to store anymore the home node information of the current local leader since that leader finished $Global\_Election$). It then picks the local leader among the ones waiting to  start $Global\_Election$ and keeps its home node information. All other local leaders in the group become $non\_candidate$. The node that keeps home node information {\em oscillates} on the edge connecting it to the home node of the local leader running $Global\_Election$. We will show that the wait time is bounded for each local leader and hence our claimed bounds are achieved.  


After electing a leader in each component $C_i$, as an application, we use it to solve another fundamental problem of MST construction. 

\vspace{2mm} 
\noindent{\bf 
Related Work.}
%
%
In the message-passing model, the leader election problem was first stated by Le Lann \cite{Lann77} in the context of token ring networks, and since then it has been central to the theory of distributed computing and studied heavily in the literature.  Gallager, Humblet, and Spira \cite{Gallager83} provided a deterministic algorithm for any $n$-node graph $G$ with time complexity $O(n\log n)$ rounds and message complexity $O(m + n\log n)$. Awerbuch \cite{Awerbuch87} provided a deterministic algorithm with time complexity $O(n)$ and message complexity $O(m+n\log n)$. Peleg \cite{Peleg90L} provided a deterministic algorithm with optimal time complexity $O(D)$ and message complexity $O(mD)$. Recently, an algorithm was given in \cite{KPP0T15} with message complexity $O(m)$ but no bound on time complexity, and another algorithm with $O(D\log n)$ time complexity and $O(m\log n)$ message complexity. Additionally, it was shown in~\cite{KPP0T15} that the message complexity lower bound of $\Omega(m)$ and time complexity lower bound of $\Omega(D)$ for deterministic leader election in graphs.  

In the message-passing model, the minimum spanning tree (MST) problem was first studied by  Gallager, Humblet, and Spira  \cite{Gallager83}. They gave  
a deterministic algorithm with  
time complexity $O(n\log n)$ and message complexity $O(m + n\log n)$. The time complexity was improved to $O(n)$ in \cite{Awerbuch87}  and to 
$O(\sqrt{n}\log ^{*}n+D)$ in \cite{GarayKP93,KuttenP98}. Peleg and Rubinovich \cite{PelegR00} established a time complexity lower bound of  
$\Omega (\sqrt {n}/{\log n}+D)$. 


In the agentic model,  gathering is the most studied problem which asks for agents initially positioned arbitrarily on the graph to be positioned at a single node not known a priori.  %
%
The recent results are  \cite{MMM23,Ta-ShmaZ14} that solve Gathering under known $n$. 
Ta-Shma and Zwick \cite{Ta-ShmaZ14} provided a $\tilde{O}(n^5\log \beta)$ time solution to gather $k\leq n$ agents in arbitrary graphs, where $\tilde{O}$ hides polylog factors and $\beta$ is the smallest ID among agents. 
Molla {\it et al.} \cite{MMM23} provided improved time bounds for large values of $k$ assuming $n$ is known but not $k$: (i) $O(n^3)$ rounds, if $k \geq \floor*{\frac{n}{2}} +1$ (ii) $\Tilde{O}(n^4)$ rounds, if $ \floor*{\frac{n}{2}} +1 \leq k < \floor*{\frac{n}{3}} +1$, and (iii) $\Tilde{O}(n^5)$ rounds, if $ \floor*{\frac{n}{3}} +1 > k$. Each agent requires $O(M + m \log n)$ bits, where $M$ is the memory required to implement the universal traversal sequence (UXS) \cite{Ta-ShmaZ14}. 

The opposite of Gathering is the problem of Dispersion which asks the $k\leq n$ agents starting from arbitrary initial configurations in the agentic model to be positioned on $k$ distinct nodes (one per node). Notice that we also solve Dispersion in this paper while electing a leader. In fact, for the case of $k=n$, solving Dispersion while electing a leader helps to make a single component $C$ of $k=n$ nodes and hence only a single unique leader could be elected. For the case of $k<n$, Dispersion helps to come up with the minimum number of components possible, for the given initial configuration.  Dispersion has been introduced by Augustine and Moses Jr. \cite{Augustine:2018}.  The state-of-the-art is a $O(k)$-round solution with $O(\log(k+\Delta))$ bits at each agent without any parameter knowledge, see \cite{Kshemkalyani2025}.  

Maximal Independent Set (MST) and Maximal Dominating Set (MDS) problems were also studied recently in the agentic model \cite{PattanayakBCM24,ChandMS23} assuming $k=n$ and parameters $n,\Delta$ (additionally $m$ and $\gamma$, the number of clusters in the initial configuration, for MDS \cite{ChandMS23}) are known to agents a priori. 
Leader election and MST were studied in the agentic model for the case of $k=n$ with known parameters in \cite{KshemkalyaniICDCIT25} and without known parameters in \cite{kshemkalyani2024agent,KshemkalyaniAAMAS25}. The proposed results in this paper are the first for any graph level task, including leader election and MST, in the agentic model for the case of $k<n$. 

\vspace{2mm}
\noindent{\bf Paper Organization.}
We discuss the stabilizing and explicit deterministic algorithms for leader election for $k<n$ in Section \ref{section:leaderkn}. An explicit deterministic algorithm for leader election for $k=n$ is discussed in Section \ref{section:leader}. We then discuss stabilizing and explicit algorithms for MST for $k\leq n$ in Section \ref{sec: MST construction}. Finally, we conclude in Section \ref{section:conclusion} with a short discussion.

\begin{algorithm}[bt!]
{
\footnotesize
\SetKwInput{KwInput}{Input}
\SetKwInput{KwStates}{States}
\SetKwInput{KwEnsure}{Ensure}
\KwInput{A set $\cR$ of $k< n$ agents with unique IDs positioned initially arbitrarily on the nodes of an $n$-node, $m$-edge anonymous graph $G$ of degree $\Delta$.}

\KwEnsure{Let $C$ be a component (subgraph) of $G$ such that exactly one agent is positioned on each node of $C$. An agent in each connected component $C$ of $G$ of agents  is elected as a leader with status $leader$.}          

\KwStates{Initially, each agent $\psi(u)$ positioned at node $u$ has $\psi(u).status\leftarrow candidate$, $\psi(u).init\_alone \leftarrow true$ if alone at $u$,  $\psi(u).init\_alone \leftarrow false$ otherwise, and $\psi(u).all\_component\_edges\_visited\leftarrow false$. The $init\_alone$ variable is never updated for $\psi(u)$ throughout the algorithm, but the $status$ variable can take values $\in\{candidate, non\_candidate, local\_leader, leader\}$ and the $all\_component\_edges\_visited$ variable can take the value $true$.  
} 












\If{$\psi(u).status=candidate$}
{   

    
    \If{$\psi(u).init\_alone=true$}
    { 
        $Singleton\_Election(\psi(u))$ (Algorithm \ref{algorithm:local_electionk})
    }
    \Else
    {
    \If{$\psi(u)$ is the minimum ID agent at $u$}
    {
        $Non\_Singleton\_Election(\psi(u))$ (Algorithm \ref{algorithm:dispersionk})
    }
}
}
\If{$\psi(u).status=local\_leader$}
{
$Oscillation(\psi(u))$ (Algorithm \ref{algorithm:oscillation})\\
$Global\_Election(\psi(u))$ (Algorithm \ref{algorithm:leader-election})
}
}
\caption{Leader election for agent $\psi(u)$ positioned at node $u$}
\label{algorithm:leader_election}
\end{algorithm}

\begin{algorithm}[bt!]
{
\footnotesize
$N(u) \leftarrow$ neighbors of node $u$, $|N(u)|=\delta_u$\\ 


    $\psi(u)$ visits all neighbors in $N(u)$ (in the increasing order of port numbers)\\ 
    \If{$\forall v\in N(u)$, $\delta_u<\delta_v$}
    {
    \If {no node in $N(u)$ is occupied or all occupied nodes in $N(u)$ are singleton}
    {
    $\psi(u).status\leftarrow local\_leader$
    }
    \Else
    {
    $\psi(u).status\leftarrow non\_candidate$
    }
    }
    \If{$\forall v\in N(u)$, $\delta_u$ is the smallest but $\exists v \in N(u)$ with $\delta_v=\delta_u$}
    {
    \If {(at least one such) $v$ is empty}
    {
    $Neighbor\_Exploration\_with\_Padding(\psi(u))$ (Algorithm \ref{algorithm:padding})\\
    \If {$v$ is still empty and all occupied nodes in $N(u)$ are singleton
    }
    {
    $\psi(u).status\leftarrow local\_leader$
    }
    }
    \If{no such $v$ is empty}
    {
    \If{all occupied nodes in $N(u)$ are singleton (including $\psi(v)$), and $\psi(u).ID>\psi(v).ID$}
    {
    $\psi(u).status\leftarrow local\_leader$
    }
    \Else
    {
    $\psi(u).status\leftarrow non\_candidate$
    }
    }
    }
}

        \caption{$Singleton\_Election(\psi(u))$}
\label{algorithm:local_electionk}
\end{algorithm}

\begin{algorithm}[bt!]
{
\footnotesize
$b\leftarrow$ number of bits in the ID of $\psi(u)$

$b+2b^2 \leftarrow$ number of bits in the ID of $\psi(u)$ after padding a sequence of `10' bits $b^2$ times after the LSB in the original $b$-bit ID

starting from MSB and ending on LSB, if the bit is `1', $\psi(u)$ visits the nodes in $N(u)$ in the increasing order of port numbers, otherwise $\psi(u)$ stays at $u$ for $2\delta_u$ rounds 



        }
        \caption{$Neighbor\_Exploration\_with\_Padding(\psi(u))$}
 \label{algorithm:padding}
\end{algorithm}

\begin{algorithm}[t!]
{
\footnotesize
        run $DFS(r_{min})$ until all $\mathcal{R}(u)$ agents initially at $u$ settle at $|\mathcal{R}(u)|$ different unoccupied nodes of $G$ (one per node). The minimum ID agent $r_{min}$  which settles last  becomes $local\_leader$. All others become $non\_candidate$ once settled\\
        \If{$DFS(r_{min})$ reaches an empty node} 
        {run $Confirm\_Empty()$ procedure (Algorithm \ref{algorithm:confirm-empty}) to classify that node as 
        occupied or unoccupied}  
        } 
\caption{$Non\_Singleton\_Election(r_{min})$}
         \label{algorithm:dispersionk}
\end{algorithm}

\begin{algorithm}[bt!]
{
\footnotesize
        $x\leftarrow$ empty node  $DFS(\psi(u))$ is currently visiting\\ 
        wait at $x$ for a round\\
        \If{$\psi(u)$ does not meet any agent} 
        {$x$ is unoccupied}
        \Else
        {x is occupied}
}
        \caption{$Confirm\_Empty(\psi(u))$}
 \label{algorithm:confirm-empty}
\end{algorithm}

\begin{algorithm}[bt!]
{
\footnotesize
\If{$\psi(u)$ became local leader through $Singleton\_Election(\psi(u))$}
{
\If{$\psi(u)$ found at least a node $w\in N(u)$ with a settled agent $\psi(w)$ while running $Singleton\_Election(\psi(u))$}
{
    \If{$\psi(w)$ is currently not oscillating}
        {
            $\psi(w)$ keeps the information that node $u$ is  $home(\psi(u))$ \\
            $\psi(u)$ starts $Global\_Election(\psi(u))$ (Algorithm \ref{algorithm:leader-election}) and asks 
            $\psi(w)$ to oscillate on the edge $(w,u)$\\
        }
    \Else
        {
            $\psi(u)$ waits until $\psi(w)$ finishes oscillation and notifies $\psi(u)$\\
            \If{multiple local leaders waiting for $\psi(w)$ to finish oscillation}
            {
            \If{$\psi(u)$ has highest priority among the waiting local leaders}
            {
            $\psi(u)$ starts $Global\_Election(\psi(v))$ and asks
            $\psi(w)$ to oscillate on the edge $(w,u)$
            }
            \Else{
            $\psi(u)$ become $non\_candidate$
            }  
            }
        }
 }

  
\Else
 {
 $\psi(u).status\leftarrow leader$\\
 \If{$\psi(u)$ meets any agent running $Global\_Election()$ or $Non\_Singleton\_Election()$}
  {$\psi(u).status\leftarrow non\_candidate$}
  }

}
\If{$\psi(u)$ became local leader through $Non\_Singleton\_Election(\psi(u))$}
{
$w\leftarrow$ the parent node of $u$ in DFS tree built by $DFS(\psi(u))$\\
   \If{$w$ has an agent $\psi(w)$ which is a $non\_candidate$}
        {
            $\psi(u)$ starts $Global\_Election(\psi(u))$ and asks $\psi(w)$ to oscillate on the edge $(w,u)$
        }

    \If{$w$ has an agent $\psi(w)$ which is a $local\_leader$}
        {
            $\psi(u)$ waits until $\psi(w)$ notifies $\psi(u)$ that it became a $non\_candidate$
        }

  \If{$w$ is empty and $\psi(w')$ is oscillating to $w$}
        {
            $\psi(u)$ waits until $\psi(w')$ notifies $\psi(u)$ that it finished oscillating (i.e., $\psi(w)$ returned to $w$ finishing $Global\_Election$)\\
            \If{no local leader agent $\psi(w'')$ is asking $\psi(u)$ to notify}
            {
            $\psi(u)$ starts $Global\_Election(\psi(u))$ and asks $\psi(w)$ to oscillate on the edge $(w,u)$
            }
            \Else
            {$\psi(u)$ notifies $\psi(w'')$ and becomes $non\_candidate$ }
        }
    
 }
}
\caption{$Oscillation(\psi(u))$}
\label{algorithm:oscillation}
\end{algorithm}

\begin{algorithm}[bt!]
 {
 \footnotesize
        run $DFS(roundNo_u,\psi(u))$, where $roundNo_u$ is the round $\psi(u)$ became local leader, and try to explore component $C(\psi(u))$. \\

\If{$DFS(roundNo_u,\psi(u))$ head reaches an empty node} 
{ 
 
\If{$Confirm\_Empty()$ returns occupied}
{continue $DFS(roundNo_u,\psi(u))$ in the forward phase}
\Else
{backtrack to the node from which $DFS(roundNo_u,\psi(u))$ entered that node}
}        
        \If{$\psi(u).all\_component\_edges\_visited=true$}
        {
        $\psi(u).status\leftarrow leader$; return $home(\psi(u))$\\
        }
        \ElseIf {head of $DFS(roundNo_u,\psi(u))$ meets    $DFS(roundNo_v,\psi(v))$ (head or nonhead)}
        {
        \If{$roundNo_u>roundNo_v$ or $(roundNo_u=roundNo_v \And \psi(u).ID>\psi(v).ID)$}
        {
        $DFS(roundNo_u,\psi(u))$ continues
        }
        \Else
        {
        $DFS(roundNo_u,\psi(u))$ stops, $\psi(u)$ becomes $non\_candidate$ and  returns $home(\psi(u))$
        }}
        \ElseIf{$DFS(roundNo_u,\psi(u))$ meets  head of $DFS(\psi(v))$ running  $Non\_Singleton\_Election(\psi(v))$ (Algorithm \ref{algorithm:dispersionk})}
        {
        $DFS(roundNo_u,\psi(u))$ stops, $\psi(u)$ becomes $non\_candidate$ and returns $home(\psi(u))$
        }
        
}
   \caption{$Global\_Election(\psi(u))$}
    \label{algorithm:leader-election}       
\end{algorithm}

\begin{figure}[bt!]
    \centering
    \includegraphics[width=1.0\linewidth]{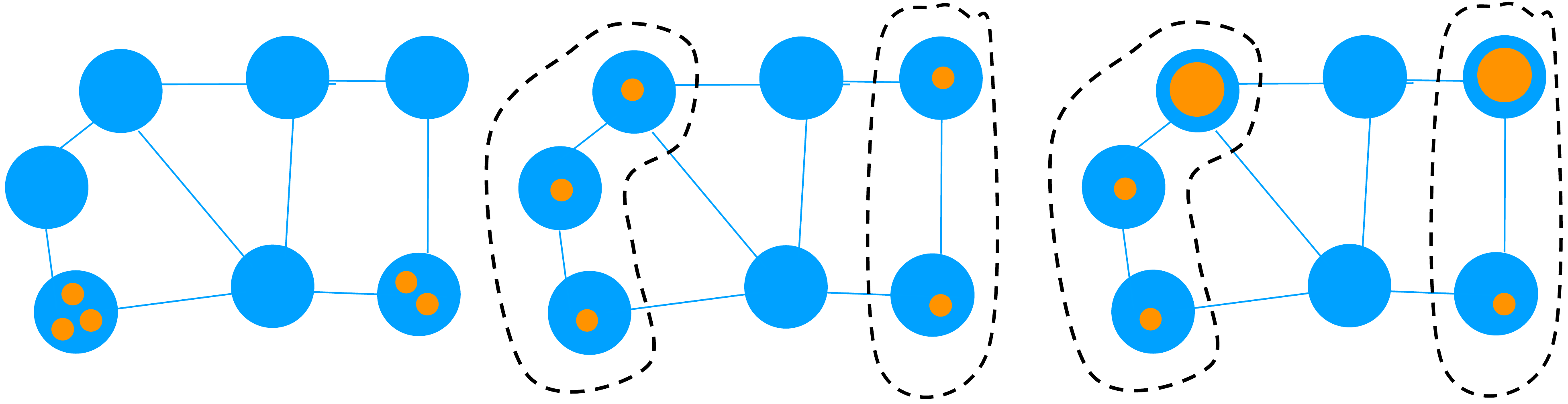}
    \caption{An illustration of leader election. ({\bf left}) A general initial configuration of 5 agents on a 7-node graph $G$ with 2 non-singleton nodes and no singleton node.  ({\bf middle}) Two components $C_1$ and $C_2$ are formed (shown with dashed lines) after two $Non\_Singleton\_Election$ procedures run by two non-singleton nodes finish; agent are now in a dispersed configuration in each component. ({\bf right}) An agent at each component becomes a global leader (shown as big circles inside blue circles). Each agent in $C_1$ is  (at least) 2 hops apart from an agent in $C_2$ (and vice-versa).}
    \label{fig:components}
\end{figure}

\section{Leader Election, $k<n$}
\label{section:leaderkn}
In the message-passing model, the case of $k<n$ does not occur, i.e., each node corresponds to a static computational device. In contrast, in the agentic model, when $k<n$, a node may not necessarily correspond to a computational device even in a dispersed configuration. 
%
We present our leader election algorithm which, starting from any initial configuration (dispersed or general) of $k<n$ agents on an $n$-node graph $G$, ensures that one agent is elected as a global leader in each component $C_i$, where $C_i$ represents a subgraph of $G$ on which each node has an agent positioned. 
As a byproduct, our algorithm transforms the general initial configuration to a dispersed configuration.
%
%
%
Notice that there will be $1\leq \kappa\leq k$ leaders elected if $\kappa$ components are formed in the dispersed configuration. 
Fig.~\ref{fig:components} provides an illustration of leader election for 5 agents initially positioned at two non-singleton nodes of a 7-node graph $G$. 
We develop a stabilizing algorithm (which does not need any parameter knowledge, but overtaking of the  elected leader in each component may occur for a certain time before stabilizing to a single leader). We will discuss later how to make it explicit (using knowledge of $k,\Delta$ but avoiding overtaking of the elected leader in each component).  


\subsection{Stabilizing Algorithm}
The pseudocode of the algorithm is given in Algorithm \ref{algorithm:leader_election}. 
Initially, a graph node may have zero, one, or multiple agents. An agent at each node is `candidate' to become a leader. 
%
Our algorithm runs in 2 stages, Stage 1 and Stage 2. Stage 2 runs after Stage 1 finishes. 
In Stage 1, if an agent is initially singleton at a node, it runs {\em Singleton\_Election} to become a `local leader'. However, if an agent is initially non-singleton, then it runs {\em Non\_Singleton\_Election} to become a local leader. As soon as becoming a local leader, Stage 2 starts in which the local leader agent runs {\em Global\_Election} to become a `global leader'. To do so, a local leader may need to wait for a while. After the bounded wait, it either directly becomes non\_candidate or starts to run $Global\_Election$.   
After {\em Global\_Election} finishes, a local elevates itself to a global leader and returns to its home node (if it is not at home already). 
The elected global leader may be overtaken by another global leader but we show that the process stabilizes to a global leader after $O((|C_{max}|+\log^2k)\Delta)$ rounds in each component $C_i$, where  $|C_{max}|:=\max_i |C_i|$.
The memory needed in $O(\max\{\ell_{max},\zeta_{\max},\log (k+\Delta)\} \log (k+\Delta))$ bits per agent. Here 
$\ell_{\max}:=\max_i \ell_i$ and $\zeta_{max}:=\max_i \zeta_i$, with $\ell_i,\zeta_i$, respectively, being the  number of non-singleton nodes in $C_i$ and the number of agents that become local leaders  in $C_i$. 
We discuss Stage 1 and Stage 2 in detail separately below.

\sloppy
\subsubsection{Stage 1 -- Local Leader Election}
Stage 1 concerns with electing local leaders among $k$ agents. Stage 1 differs for singleton and non-singleton agents.  Singleton agents run {\em Singleton\_Election} and non-singleton agents run {\em Non\_Singleton\_Election}. 
We describe them separately below.

\vspace{2mm}
\noindent{\bf Singleton Election.}
The pseudocode of {\em Singleton\_Election} is in Algorithm \ref{algorithm:local_electionk}. Let $r_u$  be the agent positioned at node $u$; we write $\psi(u)=r_u$. 
For $\psi(u)$ to eligible to become a local leader, 
either all neighbors in $N(u)$ have to be empty or all the non-empty neighbors in $N(u)$ must have a singleton agent. If this condition is true, then $\psi(u)$ becomes a local leader if 
one of the following conditions is satisfied:
\begin{itemize}
\item {\bf Condition I.} $\forall v\in N(u), \delta_v>\delta_u$. In other words, $u$ has the unique smallest degree among the nodes in $N(u)$. 
\item {\bf Condition II.} $\forall v\in N(v), \delta_v\geq \delta_u$ with no node $v'\in N(u)$ with $\delta_{v'}=\delta_u$ is empty, and $\psi(u).ID>\psi(v').ID$. In other words, $\psi(u)$ has a higher ID than $\psi(v')$ when $u$ and $v'$ have the same smallest degree.
\item {\bf Condition III.} $\forall v\in N(v), \delta_v\geq \delta_u$ with (at least one) node $v'\in N(u)$ with $\delta_{v'}=\delta_u$ is empty, $\psi(u)$ runs $Neighbor\_Exploration\_with\_Padding(\psi(u))$ (Algorithm \ref{algorithm:padding}) and $v'$ remains empty even after  Algorithm \ref{algorithm:padding} finishes. In other words, $\psi(u)$ finds that the same smallest degree neighbor $v'$ is empty. 
\end{itemize}


If none of the above conditions are satisfied, $\psi(u)$ becomes a `non-candidate'.  
$\psi(u)$ checks for {\bf Conditions I} and {\bf II} after visiting all its neighbors in $N(u)$ starting from port 1 and ending at port $\delta_u$. Notice that visiting a neighbor finishes in 2 rounds, one round to reach to the neighbor and one round to return. If {\bf Condition II} does not satisfy because of empty $v'\in N(u)$ (same smallest degree as $u$), $\psi(u)$ checks for {\bf Condition III} after running $Neighbor\_Exploration\_with\_Padding(\psi(u))$ (Algorithm \ref{algorithm:padding}). We now discuss how $Neighbor\_Exploration\_with\_Padding(\psi(u))$ executes.

\vspace{2mm}
\noindent{\bf Same Smallest Degree Neighbor Exploration.}
When $\psi(u)$ shares its smallest degree among $N(u)$ with at least a node $v\in N(u)$, i.e., $\delta_u=\delta_{v}$, {\bf Condition I} cannot be used. We need to decide where {\bf Condition II} or {\bf Condition III} can be used. $\psi(u)$ needs to determine whether $v$ is (i) empty (ii) singleton or (iii) non-singleton.  To confirm this, if there is $\psi(v)$, then $\psi(u)$ needs to meet $\psi(v)$ or vice versa.
Knowing $k,\Delta$, this meeting can be done in $O(\Delta \log k)$ rounds as discussed under the message-passing model simulation in Section \ref{section:introduction}.
However, not knowing the parameters, this meeting is challenging and we develop the following meeting technique (Algorithm \ref{algorithm:padding}).
Suppose agent $\psi(u)$ has an ID of $b$ bits; note that $b\leq c\cdot \log k$ for some constant $c>1$. We pad a sequence of `10' bits $b^2$ times  
after the LSB of the $\psi(u)$'s ID, i.e., 
\begin{eqnarray*}
b\underbrace{10}_{\mbox{$1$}}\underbrace{10}_{\mbox{$2$}}\ldots\underbrace{10}_{\mbox{$b^2$}}.
\end{eqnarray*}

Now the ID of $b$ bits becomes the ID of $b+2b^2$ bits. 
Agent $\psi(u)$ starts from its MSB bit and ends at LSB bit. If the current bit is 1, then $\psi(u)$ visits all neighbors in $N(u)$ in the increasing order of port numbers, which finishes in $2\delta_u$ rounds. However, when the current bit is $0$, it remains at $u$ for $2\delta_u$ rounds. (Notice that $\phi(v)$, if any, also runs the same procedure padding appropriately its ID bits.) 
We will show that using this padding approach $\psi(u)$ meets $\psi(v)$ or vice-versa in $O(\Delta\log^2n)$ rounds, a crucial component in Algorithm \ref{algorithm:local_electionk} ($Singleton\_Election$). After Algorithm \ref{algorithm:padding} finishes, then $\psi(u)$ either becomes a local leader or a non-candidate applying either {\bf Condition II} or {\bf Condition III}.

\vspace{2mm}
\noindent{\bf Non-Singleton Election.}
The pseudocode of
{\em Non\_Singleton\_Election} is in Algorithm \ref{algorithm:dispersionk}. Suppose a node $v$ has $k'=|\mathcal{R}(v)|$ agents initially. Let $r_{min}$ be the smallest ID agent in $\mathcal{R}(v)$. 
Agent $r_{min}$ will become a local leader after setting agents in $\mathcal{R}(v)\backslash\{r_{min}\}$ to the nodes of $G$, one per node. The largest ID agent in $\mathcal{R}(v)$ stays at $v$ and becomes $\psi(v)$. Agent $r_{min}$ runs a DFS algorithm, we denote as $DFS(r_{min})$. As soon as an unoccupied node is visited, the largest ID agent in the group settles and others continue $DFS(r_{min})$. An empty node visited in confirmed as occupied or unoccupied running Algorithm \ref{algorithm:confirm-empty} ($Confirm\_Empty()$).
As soon as $r_{min}$ reaches to an unoccupied node $u$ alone, it declares itself as a local leader $\psi(u)$.  All the agents in $\mathcal{R}(v)\backslash\{\psi(u)\}$ become $non\_candidate$.  There are two issues to deal with in this process.

\begin{itemize}
\item {\bf Issue 1 -- Identifying an empty node visited occupied or unoccupied:} Suppose $DFS(r_{min})$ reaches an empty node $x$. An empty node can be of three types: (I) unoccupied
(II) home of a singleton agent running $Singleton\_Election$, (III) home of a local leader running $Global\_Election$. 
$Non\_Singleton\_Election$ that runs $DFS(r_{min})$ can only settle an agent on an unoccupied node. This is done through $Confirm\_Empty(r_{min})$ (Algorithm \ref{algorithm:confirm-empty}) procedure. At each empty node $x$ visited, $DFS(r_{min})$ waits at $x$ for a round. 
If an agent at $x$ is doing $Singleton\_Election$, it will return to $x$ within a round. 
If an agent at $x$ was a local leader (either through $Singleton\_Election$ or $Non\_Singleton\_Election$) doing $Global\_Election$, there is an agent at $N(x)$ that is oscillating to $x$ and that will be at $x$ in the next round. 
Therefore, if $x$ is empty even after waiting for one round, then $x$ is unoccupied, and  $DFS(r_{min})$ can settle an agent at $x$. 
\item {\bf Issue 2 -- \bf $DFS(r_1)$ meets $DFS(r_2)$:} If the head of $DFS(r_1)$ meets the head of $DFS(r_2)$ at a node $w$, then, if $r_1.ID>r_2.ID$, then the highest ID agent belonging to $DFS(r_1)$ stays at $w$ (and becomes $non\_candidate$), otherwise the highest ID agent  belonging to $DFS(r_2)$ stays at $w$ (and becomes $non\_candidate$). The winning DFS continues and the losing DFS stops and hands over the unsettled agents to the winning DFS to continue dispersing the agents. 
\item {\bf Issue 3 -- \bf $DFS(r_1)$ meets $DFS(roundNo_v,\psi(v))$:} This case is an example of $Non\_Singleton\_Election$ meeting $Global\_Election$. $DFS(r_1)$ continues.  If the meeting is at the head of $Global\_Election$ DFS, the agent doing   $Global\_Election$ DFS  becomes $non\_candidate$ and returns to its home node.     
    \end{itemize}

\subsubsection{Stage 2 -- Global Leader Election}
Stage 2 concerns electing global leaders among local leaders.   Stage 2 for a local leader agent $\psi(u)$ runs $Oscillation$ and {\em Global\_Election} (the pseudocodes are in Algorithms \ref{algorithm:oscillation} and \ref{algorithm:leader-election}).
The goal of running $Oscillation$ is to see whether a local leader can in fact run $Global\_Election$ or become a $non\_candidate$ even before running $Global\_Election$. For the local leaders that can run $Global\_Election$,  
the goal of running {\em Global\_Election} is to see whether $DFS(roundNo_u,\psi(u))$ can explore $C(\psi(u))$, the component $\psi(u)$ belongs to. Exploration here means visiting all the edges that belong to $C(\psi(u))$. Additionally, $roundNo_u$ denotes the round at which $\psi(u)$ became a local leader which  helps in preventing deadlock situations when $DFS(roundNo_u,\psi(u))$ meets $DFS(roundNo_v,\psi(v))$ from another local leader $\psi(v)$.  Except the use of $roundNo$, $DFS(roundNo_u,\psi(u))$ is the same as $DFS(\psi(u))$  we use in $Non\_Singleton\_Election$.
If $DFS(roundNo_u,\psi(u))$ can explore $C(\psi(u))$, it elevates itself  as a `global' leader and returns to  $home(\psi(u))$ (if it is not already at $home(\psi(u))$).  Agent $\psi(u)$ remains as a global leader if no other local leader meets $\psi(u)$ later in time. If another local leader $\psi(v)$ meets $\psi(u)$, then we say that overtaking occurred for $\psi(u)$. $\psi(u)$ (the current global leader) now becomes $non\_candidate$. We will show that overtaking does not occur after $roundNo=c_1\cdot (|C_{max}|+\log^2k) \Delta$ rounds, i.e.,  the leader election at each component $C_i$ stabilizes to a single global leader in at most $c_1\cdot (|C_i|+\log^2k) \Delta$ rounds. Notice that this approach does not use any parameter knowledge. 

There are several challenges to overcome. The first challenge to overcome is, once elected a local leader, whether that local leader can run $Global\_Election$ and, if so, when. The second challenge to overcome is how to run $Global\_Election$ provided that there may be multiple $Global\_Election$ procedures concurrently running from different local leaders. There may also be concurrent situations of a $Global\_Election$ meeting $Non\_Singleton\_Election$.  We describe our approaches to overcome these challenges separately below. 
\begin{itemize}
\item {\bf Issue 1 -- Can a local leader $\psi(u)$ run $Global\_Election$ and if so when?} We have two sub-cases.  The pseudocode is described in $Oscillation(\psi(u))$ (Algorithm \ref{algorithm:oscillation}).
\begin{itemize}
\item {\bf Issue 1.A -- $\psi(u)$ became  local leader through {\em Singleton\_Election}:} We further have two sub-cases.
\begin{itemize}
\item  {\bf Issue 1.A.i -- $\psi(u)$ find no neighbor in $N(u)$ occupied:} This is a special case. $\psi(u)$ elevates itself as a global leader and stays at $u$. That means, it does not run $Global\_Election$. Later, if it meets at $u$ another agent running $Global\_Election$ or $Non\_Singleton\_Election$, then $\psi(u)$ becomes $non\_candidate$, since the agent running  $Global\_Election$ or $Non\_Singleton\_Election$ will become leader in later time. 
\item {\bf Issue 1.A.ii -- $\psi(u)$ finds at least a neighbor in $N(u)$ occupied:} Let $w$ be that neighbor with agent $\psi(w)$ positioned. 
$\psi(u)$ now checks whether $\psi(w)$ is oscillating. This oscillation helps in the $Confirm\_Empty$ procedure. If so, there must be another local leader, say $\psi(w')$, $w'\neq w\neq u$, already running $Global\_Election$ and $\psi(w)$ is oscillating on the edge $(w,w')$.  
Therefore, if $\psi(w)$ is not oscillating, $\psi(u)$ starts $Global\_Election$ and asks $\psi(w)$ to oscillate between $w$ and $u$. However, if $\psi(w)$ is oscillating, $\psi(u)$ waits asking $\psi(w)$ to notify once $\psi(w)$ stops oscillating. $\psi(w)$ stops oscillating when the local leader $\psi(w')$ running $Global\_Election$ returns to $w'$. As soon as $\psi(w)$ notifies $\psi(u)$, $\psi(u)$ starts $Global\_Election$ and asks $\psi(w)$ to oscillate on the edge $(w,u)$.  

There are two cases that needs further attention:
The first case is of multiple local leaders finding $\psi(w)$ non-oscillating in the same round. In such case, symmetry is broken choosing the lexicographically highest priority local leader (we use $roundNo$ of when the agent became local leader and its ID together to break symmetry). Others become $non\_candidate$. 
The second case is of multiple local leaders waiting for $\psi(w)$ to stop oscillating.  In such case, after $\psi(w)$ stops oscillating, $\psi(u)$ runs $Global\_Election$ if it is the lexicographically highest priority local leader among those waiting, otherwise it becomes $non\_candiadate$.
\end{itemize}

\item {\bf Issue 1.B -- $\psi(u)$ became  local leader through {\em Non\_Singleton\_Election}:}
Consider the DFS tree $DFS(\psi(u))$ built while running $Non\_Singleton\_Election$. 
Let $w$ be the parent of $u$ in $DFS(\psi(u))$.
We have two sub-cases.
\begin{itemize}
\item {\bf Issue 1.B.i -- Node $w$ has an agent $\psi(w)$ which is a $non\_candidate$:}
In this case, $\psi(u)$ starts $Global\_Election$ and asks $\psi(w)$ to oscillate on the edge $(w,u)$. 

\item {\bf Issue 1.B.ii -- Node $w$ has a local leader agent $\psi(w)$:}
In this case, $\psi(u)$ waits until $\psi(w)$ notifies $\psi(u)$ that $\psi(w)$ became a $non\_candidate$. This is the case of local leader chain in which the subsequent agent in the chain became local leader later in time than the previous agent and hence except the last agent in the chain, all others can become $non\_candidate$.   

\item {\bf Issue 1.B.iii -- Node $w$ is empty and an agent $\psi(w')$ is oscillating to $w$:}
There must be the case that $\psi(w)$ was a local leader (i.e., $w$ happened to be $home(\psi(w'))$ of a  local leader $\psi(w)$) currently running $Global\_Election$. $\psi(u)$ asks $\psi(w')$ to notify once it stops oscillating. $\psi(u)$ waits at $u$ until such notification.  $\psi(w')$ stops oscillating once $\psi(w)$ returns to $w$ after finishing $Global\_Election$. $\psi(w')$ then notifies $\psi(u)$. $\psi(u)$ now checks if there is a local leader agent waiting (Issue 1.B.ii). If so, $\psi(u)$ informs that local leader and becomes $non\_candidate$. Otherwise, $\psi(u)$ starts $Global\_Election$ and asks $\psi(w)$ to oscillate on the edge $(w,u)$.  

\end{itemize}

\end{itemize}

\item {\bf Issue 2 -- How to run $Global\_Election$ handling meetings:}
$Global\_Election(\psi(u))$ of a local leader $\psi(u)$ runs $DFS(roundNo_u,\psi(u))$ to explore $C(\psi(u))$. We first discuss how $DFS(roundNo_u,\psi(u))$ runs. If a forward move  at a node $v$ takes $DFS(roundNo_u,\psi(u))$ to a node $w$ with a settled agent $\psi(w)$, we know that  $DFS(roundNo_u,\psi(u))$ is exploring $C(\psi(u))$. 
Suppose 
$w$ is empty; i.e., a forward move from $v$ took $DFS(roundNo_u,\psi(u))$ to an empty node. There are two situations: (i) either $w$ is a home node (which is currently empty) or (ii) it is an unoccupied node. $DFS(roundNo_u,\psi(u))$ runs $Confirm\_Empty()$ at $w$. If $Confirm\_Empty()$ returns $w$ occupied, $DFS(roundNo_u,\psi(u))$ continues making a forward move from $w$. However, if $Confirm\_Empty()$ returns $w$ unoccupied, $DFS(roundNo_u,\psi(u))$ returns to $v$. It is easy to see that this approach makes $DFS(roundNo_u,\psi(u))$ to explore only the nodes and edges of component $C(\psi(u))$.


During the traversal, $DFS(roundNo_u,\psi(u))$ may meet another DFS traversal. Notice that the another DFS traversal may be of $Global\_Election$ (type $DFS(roundNo_x,\psi(x))$) or of $Non\_Singleton\_Election$ (type $DFS(\psi(x))$).  
We describe separately below how we handle them.
\begin{itemize}
\item {\bf Issue 2.A -- $DFS(roundNo_u,\psi(u))$ meets $DFS(roundNo_x,\psi(x))$:} We give priority to the agent that became local leader later in time, i.e., $DFS(roundNo_x,\psi(x))$ continues if $roundNo_u>roundNo_x$, otherwise  $DFS(roundNo_x,\psi(x))$. There may be the case that $roundNo_u=roundNo_x$, which we resolve through agent IDs, i.e., $DFS(roundNo_u,\psi(u))$ continues if $\psi(u).ID>\psi(x).ID$, otherwise  $DFS(roundNo_x,\psi(x))$.  
If $DFS(roundNo_u,\psi(u))$ stops, $\psi(u)$ becomes a non-candidate, and returns $home(\psi(u))$. 

\item {\bf Issue 2.B -- $DFS(roundNo_u,\psi(u))$ meets head of $DFS(r)$ running $Non\_Singleton\_Election(r)$ (Algorithm \ref{algorithm:leader-election}):}  $DFS(roundNo_u,\psi(u))$ stops, becomes non-candidate, and returns $home(\psi(u))$. This is because agent $r$ is eligible to become a local leader since it will finish  $Non\_Singleton\_Election(r)$ later in time. 
\end{itemize}
\end{itemize}

\subsubsection{Analysis of the Algorithm}
We now analyze our stabilizing algorithm (Algorithm \ref{algorithm:leader_election}). We have $k<n$ and agents have no parameter knowledge.  We start with Stage 1 procedure {\em Singleton\_Election} (Algorithm \ref{algorithm:local_electionk}). Recall that {\em Singleton\_Election} is run by singleton agents.

\begin{lemma}
\label{lemma:padding}
 Suppose $\forall v\in N(u), \delta_v\geq \delta_u$ and $\exists v'\in N(u), \delta_{v'}=\delta_u$. Agent $\psi(u)$ meets a singleton agent $\psi(v')$ (if any positioned initially at node $v'$) or vice-versa in $O(\delta_u \log^2 k)$ rounds, running  $Neighbor\_Exploration\_with\_Padding(\psi(u))$ (Algorithm \ref{algorithm:padding}). 
\end{lemma}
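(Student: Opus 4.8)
The plan is to exploit the hypothesis that $u$ and $v'$ have equal degree, $\delta_u=\delta_{v'}$, so that both agents execute the $2\delta_u$-round bit-phases of Algorithm~\ref{algorithm:padding} in perfect lockstep. First I would establish this synchronization precisely: since the model is synchronous and each agent's opening neighbor sweep in \emph{Singleton\_Election} (Algorithm~\ref{algorithm:local_electionk}) takes exactly $2\delta_u$ rounds, both $\psi(u)$ and $\psi(v')$ enter $Neighbor\_Exploration\_with\_Padding$ in the same round and thereafter read their $i$-th transformed bit during the same window of $2\delta_u$ rounds. (If they had already met during the opening sweep, the bound is immediate, so I may assume both genuinely enter the padding phase, each having seen the other's node momentarily empty.) I would then record the meeting criterion: in any window where $\psi(u)$'s current bit is $1$ and $\psi(v')$'s is $0$ (or vice versa), the agent reading $1$ tours all $\delta_u$ neighbors in port order and is therefore parked at the other agent's stationary node for one full round inside that window, forcing a meeting; the same holds if one agent has exhausted its transformed ID and sits at its home node while the other reads a $1$. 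So it suffices to locate, within $O(\log^2 k)$ windows, a position where the two transformed strings disagree, or a $1$-bit of the longer string occurring after the shorter one has finished.

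Next I would bound the window count. Since IDs lie in $[1,k^{O(1)}]$, each original ID has $b\le c\log k$ bits, and each transformed ID has length $b+2b^2=O(\log^2 k)$; hence every window index under consideration is $O(\log^2 k)$, and each window costs $2\delta_u$ rounds, which will deliver the claimed $O(\delta_u\log^2 k)$ once a meeting window is exhibited.

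I would then split on bit-length. If the two IDs have equal length $b$, the appended paddings $(10)^{b^2}$ coincide, so the transformed strings agree exactly where the originals do; the IDs being distinct, they differ within the first $b$ positions, and the agents meet by window $b=O(\log k)$. If the lengths differ, let $X',Y'$ be the transformed strings with $|X'|\le|Y'|$. Either they already disagree at some position $\le|X'|$, giving a meeting there, or $X'$ is a prefix of $Y'$. In the prefix case the shorter agent finishes at window $|X'|$ and stays at its home node, and I would invoke the structure of the tail $(10)^{b_Y^2}$ of $Y'$: its padding region carries a $1$ in every other position, so the longer agent reads a $1$ — and thus tours to and meets the parked agent — within a constant number of windows past $\max(|X'|,b_Y)$, which is still a window of index $O(\log^2 k)$ (and strictly before $|Y'|$, so the longer agent is still active).

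I expect the prefix case of the unequal-length analysis to be the \textbf{main obstacle}: it is exactly where two ``alternating'' IDs can make the transformed strings coincide on their entire overlap, so the meeting is not forced by a disagreement but by the padding tail after the shorter agent halts. The delicate points to discharge are (i) arguing that the shorter agent really remains at its home node during the remaining windows — which holds because it cannot yet have resolved whether $v'$ is occupied — and (ii) verifying that the guaranteed $1$'s in $Y'$'s padding appear within $O(1)$ windows of the shorter agent finishing, so the total stays $O(\log^2 k)$ windows, i.e. $O(\delta_u\log^2 k)$ rounds.
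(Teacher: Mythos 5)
Your proposal is correct and follows essentially the same route as the paper's proof: aligned $2\delta_u$-round bit-windows, the equal-length case resolved by a disagreeing ID bit, and the unequal-length case resolved by the quadratic padding guaranteeing that the longer transformed ID still has `1'-windows after the shorter agent halts at its home node. Your treatment of the prefix subcase is in fact slightly more explicit than the paper's, which instead just computes that the transformed lengths differ by $2c_1^2+4dc_1+c_1\geq 7$ and notes that at least three of those trailing positions are `1's; both arguments yield the same $O(\delta_u\log^2 k)$ bound.
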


\begin{proof}
Each agent has a unique ID of size $\leq c\cdot \log k$, for some constant $c$. Therefore, for any two neighboring agents $\psi(u)$ and $\psi(v')$, the number of bits on their IDs are either equal or unequal. Consider the equal case first. If $\psi(u)$ and $\psi(v')$ have an equal number of bits (say, $b$) then their IDs must differ by at least a bit, i.e., if one has bit `1' at $\beta$-th place from MSB, another must have bit `0' at $\beta$-th place from MSB. Since an agent explores neighbors in $N(u)$ while the bit is `1' and stays at $u$ when the bit is `0', $\psi(u)$ finds $\psi(v')$ within $2 \delta_{u} \cdot b$ rounds. 
%

Now consider the unequal case. Let $\psi(u)$ and $\psi(v')$, respectively, have  $b$ and $d$ bits in their IDs, $b\neq d$. W.l.o.g., suppose $b>d$, i.e., $b = d+c_1$, $d, c_1 \geq 1$. Consider now the padding. 
The $b$ bit ID becomes $b+2b^2$ bit ID and the $d$ bit ID becomes $d+2d^2$ bit ID. 
We have that 
$$b+2b^2=(d+c_1) + 2(d+c_1)^2 = 2d^2+2c_1^2+4 \cdot d \cdot c_1+d+c_1.$$ 
The difference in the number of bits of the IDs of $\psi(u)$ and $\psi(v')$ is $$2c_1^2+4 \cdot d \cdot c_1+c_1.$$ 

Since $d,c_1\geq 1$, $2c_1^2+4 \cdot d \cdot c_1+c_1$ is at least  $7$-bit long, and out of the $7$ bits, at least  $3$ bits are `1'.
    What that means is, if $\psi(v')$ (the smaller ID than $\psi(u)$) stops after $2\delta_{v}(d+2d^2)$ rounds, then there are at least $3$ chances for $\psi(u)$ to meet $\psi(v')$ at  $v'$. Therefore, the round complexity becomes   $O(2\delta_{u} (b+2b^2))=O(\delta_u \log ^2k)$ rounds, since $b\leq c\cdot \log k$. 
\end{proof}

\begin{lemma}\label{lem:rounds_with_same_degree_leader}
     If $\psi(u)$  becomes a local leader running $Singleton\_Election(\psi(u))$, no singleton agent $\psi(v),~v\in N(u),$ becomes local leader running $Singleton\_Election(\psi(v))$.
     $Singleton\_Election(\psi(u))$ finishes in $O(\delta_u \log^2 k)$ rounds.
    \end{lemma}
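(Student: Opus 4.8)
The plan is to prove the two assertions separately: first the mutual-exclusion claim that no two adjacent singletons are both elected, and then the $O(\delta_u\log^2 k)$ running-time bound. For mutual exclusion I would fix a singleton agent $\psi(u)$ that reaches status $local\_leader$ in $Singleton\_Election$ and argue, by a case analysis on the degree of an arbitrary singleton neighbor $\psi(v)$, $v\in N(u)$, that $\psi(v)$ cannot simultaneously satisfy any of Conditions I--III. The one structural fact I would rely on throughout is that all three conditions share the prerequisite that the electing node has the (weakly) smallest degree in its own neighborhood, i.e.\ $\delta_w\ge\delta_v$ for every $w\in N(v)$.

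The degree case analysis splits into two cases. If $\delta_u<\delta_v$, then $u\in N(v)$ is a strictly smaller-degree neighbor of $v$, so the prerequisite $\forall w\in N(v),\ \delta_w\ge\delta_v$ fails; hence $\psi(v)$ never enters any branch that can set its status to $local\_leader$, and we are done. The interesting case is $\delta_u=\delta_v$, where $u$ and $v$ are mutual same-degree neighbors and both are occupied singletons. Here I would first invoke Lemma~\ref{lemma:padding}: because $\psi(v)$ is a singleton same-degree neighbor of $u$, the padding routine guarantees that $\psi(u)$ and $\psi(v)$ meet, so neither can misclassify the other as an empty node. Consequently $u$ must have elected itself through the ``no same-degree neighbor empty'' branch (Condition II), which forces $\psi(u).ID>\psi(v).ID$; and symmetrically, for $\psi(v)$ to win it would need either $\psi(v).ID>\psi(u).ID$ (Condition II, impossible) or an empty same-degree competitor (Condition III), which is ruled out because its same-degree neighbor $u$ is occupied. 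Thus at most one of $u,v$ is elected.

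For the running-time bound I would simply account the cost of each stage of $Singleton\_Election$ (Algorithm~\ref{algorithm:local_electionk}). The initial sweep of all neighbors in increasing port order costs $2\delta_u$ rounds. If $u$ has the strict minimum degree (Condition I) or all its same-degree neighbors are occupied (Condition II), the decision is taken immediately, in $O(\delta_u)$ rounds. Otherwise $u$ invokes $Neighbor\_Exploration\_with\_Padding$ (Algorithm~\ref{algorithm:padding}), whose cost is $O(\delta_u\log^2 k)$ rounds by Lemma~\ref{lemma:padding}. Summing the two contributions gives the claimed $O(\delta_u\log^2 k)$ bound.

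I expect the main obstacle to be the equal-degree case, and specifically ruling out the ``mixed'' situation in which $u$ wins by ID comparison (Condition II) while the adjacent same-degree neighbor $v$ tries to win via an empty same-degree neighbor (Condition III). The crux is that Condition III may only be used against an \emph{absent} same-degree competitor, so I must argue carefully that an occupied same-degree neighbor can never be treated as empty; this is exactly where Lemma~\ref{lemma:padding} is essential, since it certifies that two co-degree singleton neighbors always detect each other within $O(\delta_u\log^2 k)$ rounds. The remaining care is to confirm that the padding schedules of $\psi(u)$ and $\psi(v)$ are aligned in time so that the guaranteed meeting occurs before either agent commits to a leadership decision.
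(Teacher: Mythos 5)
Your proposal is correct and follows essentially the same route as the paper's proof: a case split on whether the neighbor's degree is strictly larger or equal to $\delta_u$, an appeal to Lemma~\ref{lemma:padding} to guarantee that co-degree singleton neighbors meet (so neither is misclassified as empty) before any leadership decision, and a time accounting of the $2\delta_u$-round neighbor sweep plus the $O(\delta_u\log^2 k)$-round padding exploration. If anything, your mutual-exclusion argument in the equal-degree case (explicitly ruling out the mixed Condition~II/Condition~III situation via the symmetry of the ID comparison) is spelled out more carefully than in the paper, which asserts the non-election of neighbors without the symmetric case analysis.
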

\begin{proof}
Suppose $\forall v\in N(u), \delta_v>\delta_u$.  Consider a singleton agent $\psi(v)$ at any node $v\in N(u)$. When $\psi(v)$ runs  $Singleton\_Election(\psi(v))$ it finds that $\delta_u<\delta_v$ and $\psi(v)$ becomes a non-candidate. An unoccupied node $v'\in N(u)$ does not run {\em Singleton\_Election}. Therefore, for $\psi(u)$, visiting all neighbors in $N(u)$ finishes in $2\delta_u$ rounds.

Now  suppose $\forall v\in N(u), \delta_v\geq \delta_u$ and $\exists v'\in N(u), \delta_{v'}=\delta_u$.  We have from Lemma \ref{lemma:padding} that $\psi(u)$ either finds $v'$ has no singleton agent or meets singleton agent $\psi(v')$ in  $O(\delta_u \log ^2k)$ rounds running $Neighbor\_Exploration\_with\_Padding(\psi(u))$ (Algorithm \ref{algorithm:padding}). 
If there is singleton agent $\psi(v')$ at $v'$, then $\psi(u)$ becomes a local leader if $\psi(u).ID>\psi(v').ID$, otherwise a non-candidate. If $v$ is unoccupied, $\psi(u)$ becomes a local leader and no agent in $N(u)$ becomes a local leader. The time complexity is to first visit all the neighbors in $N(u)$, which finishes in $2\delta_u$ rounds, and after that time to run  
$Neighbor\_Exploration\_with\_Padding(\psi(u))$ (Algorithm \ref{algorithm:padding}). Therefore, the total time complexity of $Singleton\_Election(\psi(u))$ is    $2\delta_u+O(\delta_u \log ^2k)=O(\delta_u \log ^2k)$ rounds.   
\end{proof}

 \begin{lemma}
     \label{lemma:dispersed-election}
Consider a dispersed initial configuration. At least one singleton agent $\psi(u)$ at node $u$  becomes a local leader running 
$Singleton\_Election(\psi(u))$
(Algorithm \ref{algorithm:local_electionk}).   \end{lemma}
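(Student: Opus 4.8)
The plan is to exhibit a single occupied node whose agent is forced into the $local\_leader$ branch of $Singleton\_Election$ (Algorithm~\ref{algorithm:local_electionk}). First I would record that the dispersed hypothesis makes the eligibility test trivial: since every occupied node carries exactly one agent, every non-empty neighbor of every node is singleton, so the precondition ``all occupied nodes in $N(u)$ are singleton'' holds simultaneously for all agents. Hence no agent is rejected at the eligibility step, and whether $\psi(u)$ is elected is decided purely by Conditions~I--III. Next I would fix a canonical candidate: among all occupied nodes let $u^\ast$ be one minimizing the pair $(\delta_{u^\ast}, -\psi(u^\ast).ID)$ lexicographically, i.e.\ the occupied node of smallest degree with ties broken toward the largest ID. Intuitively $u^\ast$ is the unique winner of the degree/ID contest among occupied nodes, so its agent should survive.

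The heart of the argument is then a case analysis on $N(u^\ast)$. If every neighbor has strictly larger degree, Condition~I fires at once and $\psi(u^\ast).status\leftarrow local\_leader$. Otherwise some neighbor ties $u^\ast$ in degree, and here I would invoke Lemma~\ref{lemma:padding}: running $Neighbor\_Exploration\_with\_Padding$ (Algorithm~\ref{algorithm:padding}) lets $\psi(u^\ast)$ correctly decide, for each same-degree neighbor, whether it is empty or carries a singleton agent, in $O(\delta_{u^\ast}\log^2 k)$ rounds. If at least one tied neighbor is empty, the padding procedure certifies it remains empty and Condition~III elects $\psi(u^\ast)$; if instead every tied neighbor is occupied, then by the choice of $u^\ast$ as the largest-ID minimum-degree node each tied occupied neighbor $v'$ satisfies $\psi(u^\ast).ID>\psi(v').ID$, so Condition~II elects $\psi(u^\ast)$. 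In every branch the status becomes $local\_leader$, which is exactly the conclusion of the lemma.

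The main obstacle is the single place where this ``degree-then-ID'' descent can stall: a neighbor of $u^\ast$ of \emph{strictly smaller} degree. Such a neighbor cannot be occupied, since $u^\ast$ already attains the minimum degree among occupied nodes, so it would have to be an empty node of smaller degree, which a literal reading of Condition~I treats as blocking. The delicate part of a fully rigorous proof is therefore to argue that this does not defeat the claim --- either by observing that an empty neighbor hosts no competing agent and so ought not disqualify $u^\ast$ (the Stage~1 contest is only among occupied nodes), or by routing such a node through the special ``no neighbor in $N(u)$ is occupied'' handling invoked later in $Oscillation$ (Issue~1.A.i). I expect this empty-smaller-degree-neighbor case, together with making the same-degree occupancy test rigorous through Lemma~\ref{lemma:padding}, to be the only real difficulty; the remaining bookkeeping (eligibility, the ID tie-break, and the two-round cost of each neighbor visit) is routine.
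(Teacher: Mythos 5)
Your overall strategy mirrors the paper's: fix an extremal occupied node by degree (ties broken by ID) and argue that one of Conditions I--III must fire there. The one substantive difference is the domain of minimization. You take the minimum degree over \emph{occupied} nodes, with a global ID tie-break; the paper takes $u$ with $\delta_u=\min_{v\in G}\delta_v$ over \emph{all} of $G$ and, for the ID-tie case, runs a chain argument along same-degree neighbours of increasing ID rather than a global extremal choice. Your tie-break is the cleaner of the two, but the change of minimization domain is exactly where the difficulty you flag lives, and you do not close it.

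The gap you name --- $u^\ast$ having an \emph{empty} neighbour of strictly smaller degree --- is genuine, and neither of your proposed escapes works with the algorithm as written. Algorithm~\ref{algorithm:local_electionk} compares $\delta_{u^\ast}$ against the degrees of \emph{all} neighbours, empty or not (Condition~III exists precisely because empty neighbours participate in the degree test); if some neighbour has strictly smaller degree then neither top-level guard is satisfied, and the surrounding text states that an agent satisfying none of Conditions I--III becomes $non\_candidate$. The ``no neighbour occupied'' handling in $Oscillation$ (Issue 1.A.i) is also unavailable: it applies only to agents that have \emph{already} become local leaders, which is the very thing that fails here. Concretely, take a star $K_{1,3}$ with a single agent at the centre ($k=1<n=4$, a dispersed configuration): the centre has degree $3$, every neighbour has degree $1$, no condition fires, and no local leader is elected. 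The paper's proof avoids your case only because the globally minimum-degree node of $G$ has no smaller-degree neighbour of any kind --- but it silently assumes that node is occupied, which is precisely what can fail when $k<n$. So you have correctly isolated the one case that matters, but it cannot be discharged by the arguments you sketch; as far as I can tell it reflects a real defect of the lemma for $k<n$ as literally stated, not merely a missing step in your write-up.
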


\begin{proof}
    Let $u$ be the smallest degree node in $G$, i.e., $\delta_u=\min_{v\in G}\delta_v$. 
    We have two cases: (I) $\forall v'\in G, v'\neq u, \delta_u<\delta_{v'}$
    (II) $\exists v'\in G, \delta_{v'}=\delta_u$.
    In Case (I), since $u$ is the unique smallest degree node, $\psi(u)$ becomes a local leader  in $O(\delta_u \log^2 n)$ rounds. No singleton agent $\psi(w)$ in $N(u)$ becomes a local leader since $\delta_w>\delta_u$. Now consider Case (II). We have two sub-cases: (II.A) $v'\in N(u)$ (II.B) $v'\notin N(u)$. In Case (II.B), $\psi(u)$ is elected as a local leader as discussed in Case (I). For Case (II.A), we have from Lemma \ref{lem:rounds_with_same_degree_leader} that $\psi(u)$ meets $\psi(v')$ (or vice-versa) in $O(\delta_u\log^2k)$ rounds. After that, either $\psi(u)$ or $\psi(v')$ becomes a non-candidate depending on their IDs. Suppose $\psi(u).ID>\psi(v').ID$, $\psi(v')$ becomes non-candidate, $\psi(u)$ becomes a local leader, and we are done. Otherwise, if $\psi(u).ID<\psi(v').ID$, $\psi(u)$ becomes a non-candidate. Now suppose even with $\psi(u).ID<\psi(v').ID$, $\psi(v')$ becomes a non-candidate. Then, there must be the case that $v'$ has a neighbor $v''\in N(v'')$ with $\delta_{v''}=\delta_{v'}=\delta_u$ and $\psi(v'').ID>\psi(v').ID$. This chain stops at the first node $v*$ such that $\psi(v''').ID< \psi(v*).ID>\ldots>\psi(v'').ID>\psi(v').ID>\psi(u).ID$, $v'''\neq v*\neq \ldots \neq v''\neq v'\neq u$, if $v'''$ has a singleton agent. If $v'''$ is empty, this chain stops at $v*$. In both cases, agent $\psi(v*)$ in this chain becomes a local leader.    
 \end{proof}

 We now analyze Stage 1 procedure $Non\_Singleton\_Election$ (Algorithm \ref{algorithm:dispersionk}). Recall that $Non\_Singleton\_Election$ is run by non-singleton agents. 
 The $Non\_Singleton\_Election$ procedure for an agent $r_{min}$ runs $DFS(r_{min})$.

 \begin{lemma}
\label{lemma:oscillatingneighbor}
 Let $w$ be an occupied node that is currently empty.
 There exists an agent  at $w'\in N(w)$ that visits $w$ every 2 rounds.  
\end{lemma}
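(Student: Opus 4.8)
The plan is to characterize exactly when a node becomes occupied-but-empty and then read off the oscillating neighbor directly from the $Oscillation$ procedure (Algorithm \ref{algorithm:oscillation}). First I would recall that, by the algorithm's terminology, a node $w$ is \emph{occupied} only because some agent settled there and was elected a local leader, making $w$ its home node; write $\psi(w)$ for that local leader. Since $w$ is currently empty, $\psi(w)$ must currently be away from $w$. The first step is therefore to argue that the only way $\psi(w)$ leaves $w$ is by running $Global\_Election$: an agent elected local leader either takes the special branch of Issue~1.A.i (it found no occupied neighbor), in which case it declares itself $leader$ and \emph{stays} at $w$ so that $w$ never becomes empty, or it executes $DFS(roundNo_w,\psi(w))$ under $Global\_Election$ (Algorithm \ref{algorithm:leader-election}). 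Moreover, whenever $Global\_Election$ terminates---either $\psi(w)$ becomes $leader$ or it is overtaken and becomes $non\_candidate$---it \emph{returns} to $home(\psi(w))=w$. Hence $w$ is empty precisely during the interval in which $\psi(w)$ is traversing $C(\psi(w))$.

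The second step is to exhibit the oscillating neighbor by inspecting Algorithm \ref{algorithm:oscillation}. In every branch in which $\psi(w)$ actually starts $Global\_Election$, it simultaneously designates a settled neighbor to oscillate on the edge incident to $w$: if $\psi(w)$ was elected through $Singleton\_Election$ (Issue~1.A.ii) it had found an occupied neighbor $w'\in N(w)$ and asks the agent $\psi(w')$ to oscillate on $(w',w)$; if $\psi(w)$ was elected through $Non\_Singleton\_Election$ (Issue~1.B), the oscillator is the agent residing at the DFS-parent $w'$ of $w$. Thus for every occupied-but-empty $w$ there is a designated agent $\psi(w')$ with $w'\in N(w)$. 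The third step is the motion pattern itself: oscillating on $(w',w)$ means $\psi(w')$ alternates $w'\to w\to w'\to w\to\cdots$, one edge traversal per round, so it is located at $w$ on every second round, i.e.\ it \emph{visits $w$ every $2$ rounds}; and this oscillation persists for exactly the empty interval, since $\psi(w')$ is told to stop (and notify) only when $\psi(w)$ returns to $w$, at which instant $w$ ceases to be empty.

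The main obstacle I anticipate is not the motion argument but showing that such an oscillator is \emph{always available and unique} throughout the empty interval despite concurrency: several local leaders may wish to reuse the same settled agent $\psi(w')$, and a candidate oscillator may itself be a waiting local leader. I would discharge this using the wait-and-notify and priority rules already built into Algorithm \ref{algorithm:oscillation} (Issues~1.A.ii and 1.B.ii--1.B.iii): they ensure that at most one local leader employs $\psi(w')$ as its oscillator at any time, that every other contender either waits or becomes $non\_candidate$, and---crucially---that a local leader which finds no eligible neighbor never leaves home (Issue~1.A.i). Consequently no occupied-but-empty node is ever left without an agent oscillating to it, which is exactly the assertion of the lemma.
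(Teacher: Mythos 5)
Your treatment of the $Global\_Election$ case is sound and in fact more detailed than the paper's own proof (which simply appeals to Algorithm~\ref{algorithm:oscillation} to supply the oscillator, including the Issue~1.A.i special case where the leader never leaves home). However, your opening step contains a genuine gap: you assert that a node is \emph{occupied} only if a local leader made it its home, and that ``the only way $\psi(w)$ leaves $w$ is by running $Global\_Election$.'' Both claims are false in the paper's setting. An occupied-but-empty node can also be the home of an ordinary settled singleton agent that is \emph{temporarily away running $Singleton\_Election$} (Algorithm~\ref{algorithm:local_electionk}) or $Neighbor\_Exploration\_with\_Padding$ (Algorithm~\ref{algorithm:padding}): such an agent repeatedly walks to a neighbor and back to check degrees and meet same-degree neighbors, so $w$ is empty on alternate rounds even though no local leader is involved. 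The paper's proof of this lemma explicitly lists this as its first case (and Lemma~\ref{lemma:confirmempty} lists ``home of an agent running $Singleton\_Election$'' as a separate type of empty node), so your case analysis is incomplete with respect to the statement as the paper intends it.

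The omission is repairable but must be addressed: in the $Singleton\_Election$ case the witness required by the lemma is $\psi(w)$ \emph{itself}, which during each neighbor probe sits at some $w'\in N(w)$ for one round and returns to $w$ the next, hence visits $w$ every $2$ rounds. Once you add this case (and drop the incorrect claim that only local leaders ever vacate their node), your argument covers everything the paper's proof does.
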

\begin{proof}
The occupied node $w$ is empty in the following situations.
\begin{itemize}
\item The agent $\psi(w)$ at $w$ is doing $Singleton\_Election$.
\item The agent $\psi(w)$ at $w$ is a local leader doing $Global\_Election$.
\end{itemize}
In the first case, $\psi(w)$ returns $w$ in the next round. In the second case, we consider two sub-cases: 
\begin{itemize}
    \item {\bf $\psi(w)$ became local leader through $Singleton\_Election$:} Suppose all neighbors in $N(w)$ were empty when $\psi(w)$ became the local leader. In this case, $\psi(w)$ never leaves $w$. If at least one neighbor $w\in N(w)$ is non-empty, Algorithm \ref{algorithm:oscillation} guarantees that the  settled agent $\psi(w')$ oscillates on the edge $(w',w)$, if $\psi(w)$ leaves $w$. 
    \item {\bf $\psi(w)$ became local leader through $Non\_Singleton\_Election$:} Algorithm \ref{algorithm:oscillation} again guarantees that the settled agent $\psi(w')$ oscillates on the edge $(w',w)$, if $\psi(w)$ leaves $w$.
\end{itemize}
\end{proof}

\begin{lemma}
\label{lemma:confirmempty}
 Suppose $DFS(r_{min})$ reaches an empty node $w$.  The $Confirm\_Empty()$ procedure (Algorithm \ref{algorithm:confirm-empty}) confirms in an additional round whether $w$ is unoccupied or occupied. 
\end{lemma}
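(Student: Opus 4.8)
The plan is to prove two things at once: that $Confirm\_Empty()$ terminates after a single extra round, and that the report it produces is correct. The first part is immediate from the pseudocode, since the procedure only waits at $w$ for one round before reporting, so the entire content of the lemma is the correctness of the classification. I would establish this by a case analysis on the three possible reasons a node can be empty when $DFS(r_{min})$ arrives at it in some round $t$: $w$ is (I) \emph{unoccupied} (no agent ever settled there), (II) the home of a singleton agent currently running $Singleton\_Election$, or (III) the home of a local leader currently running $Global\_Election$. The two occupied types, (II) and (III), must produce a meeting during the wait round, while the unoccupied type (I) must produce none; combined with the reporting rule of Algorithm~\ref{algorithm:confirm-empty}, this yields correctness in both directions.

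For type (II), I would use the fact that a singleton agent in $Singleton\_Election$ (and in $Neighbor\_Exploration\_with\_Padding$) explores its neighbors by go-and-return, so each neighbor visit occupies exactly two rounds and the agent is back at its home node on alternate rounds. Since $DFS(r_{min})$ found $w$ empty at round $t$, the home agent must be away at $t$, i.e. on an outbound step at a neighbor; hence in round $t+1$ it returns to $w$ and is met by the waiting traversal. For type (III), I would invoke Lemma~\ref{lemma:oscillatingneighbor}, which guarantees an agent $\psi(w')$ with $w'\in N(w)$ that visits $w$ every two rounds, i.e. strictly alternates between $w'$ and $w$. Emptiness of $w$ at round $t$ forces $\psi(w')$ to be at $w'$ at $t$, so it is at $w$ at $t+1$ and meets the waiting traversal. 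In both occupied cases the single wait round suffices precisely because observing $w$ empty at $t$ pins the phase of the returning/oscillating agent to the pattern ``away at $t$, present at $t+1$.''

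For type (I), the argument is that an unoccupied $w$ has neither a settled home agent nor, by Lemma~\ref{lemma:oscillatingneighbor} (which associates an oscillating neighbor only with occupied-but-currently-empty nodes), an oscillating neighbor directed at it; consequently no agent is scheduled to arrive at $w$ during round $t+1$, the wait produces no meeting, and $w$ is correctly reported unoccupied. I expect the main obstacle to be justifying the phase-pinning cleanly: one must use that agents reside at nodes, never on edges, at round boundaries, so that ``$w$ empty at $t$'' is an unambiguous statement about the home or oscillating agent's position at $t$, from which presence at $t+1$ follows for types (II) and (III). A secondary point worth a sentence is that the only other agents that could conceivably appear at $w$ during the wait are the heads of other concurrent traversals; these do not corrupt the classification, since such encounters are governed by the separate DFS-meeting rules rather than by $Confirm\_Empty()$ itself.
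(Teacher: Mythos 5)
Your proposal is correct and follows essentially the same route as the paper's proof: the same three-way classification of an empty node (unoccupied, home of a $Singleton\_Election$ agent, home of a local leader running $Global\_Election$), with Case (II) handled by the go-and-return rhythm of neighbor exploration and Case (III) by the oscillating neighbor guaranteed by Lemma~\ref{lemma:oscillatingneighbor}. Your explicit ``phase-pinning'' observation (emptiness at round $t$ forces presence at round $t+1$) and your treatment of the unoccupied case are welcome elaborations of steps the paper leaves implicit, but they do not constitute a different argument.
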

\begin{proof}
Throughout the algorithm, 
an empty node $w$ may be one of the three types: 
\begin{itemize}
    \item [(I)] an unoccupied node, 
    \item [(II)] home of an agent running $Singleton\_Election$, 
    \item [(III)] home of a local leader agent running $Global\_Election$, 
\end{itemize}
For Case (II), waiting for a round at $w$ is enough since $w$ is the home of the agent running $Singleton\_Election$ and hence the agent comes home every 2 rounds. For Case (III), Algorithm \ref{algorithm:oscillation} guarantees that  there is a neighboring agent at node $w'\in N(w)$ oscillating between $w'$ and $w$ which finishes in every 2 rounds (Lemma \ref{lemma:oscillatingneighbor}). 
\end{proof}
 
\begin{lemma}
\label{lemma:dispersion}
    $Non\_Singleton\_Election(r_{min})$ (Algorithm~\ref{algorithm:dispersionk}) run by an initially non-singleton agent $r_{min}$ of minimum ID among the $x>1$ co-located agents finishes dispersing $x$ agents to $x$ different nodes of $G$ in $O(|C(r_{min})|\Delta)$ rounds with $O(\ell_{C(r_{min})} \log (k+\Delta))$ bits memory per agent, where $\ell_{C(r_{min})}$ is the number of non-singleton nodes in component $C(r_{min})$ agent $r_{min}$  belongs to. 
\end{lemma}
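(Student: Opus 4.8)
The plan is to establish the three assertions of the lemma---correctness of the dispersion, the $O(|C(r_{min})|\Delta)$ time bound, and the $O(\ell_{C(r_{min})}\log(k+\Delta))$ memory bound---by analyzing $DFS(r_{min})$ as a depth-first exploration of (a connected portion of) the subgraph $C(r_{min})$, augmented with the $Confirm\_Empty()$ test and the merging rule described in Issue 2. Throughout, the skeleton is a standard DFS-traversal argument, transplanted into the agentic setting where ``occupied but currently empty'' nodes must be disambiguated.

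For \textbf{correctness}, I would argue that an agent is deposited only when $DFS(r_{min})$ reaches a node that $Confirm\_Empty()$ certifies as unoccupied; by Lemma~\ref{lemma:confirmempty} this certification is always correct after a single extra waiting round, so no two agents are ever settled at the same node and no agent is settled at an occupied home node. Since $G$ is connected and $k<n$, the search can always advance from its current frontier to a not-yet-settled node until the pool of $x$ carried agents is exhausted, so all $x$ co-located agents end on $x$ distinct nodes, which are exactly the nodes of $C(r_{min})$ that $DFS(r_{min})$ settles. I would then invoke the merging rule (Issue 2): if $DFS(r_{min})$ meets another $DFS(r')$, one agent settles at the meeting node and the remaining unsettled agents are pooled into the surviving search, so dispersion still completes with one agent per node.

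For the \textbf{time bound}, the key observation is that a depth-first search traverses every tree edge exactly twice (forward and on backtrack) and probes every back edge a constant number of times; combined with the fact that a settled node is never re-settled, the merging rule should guarantee that the surviving search reuses the already-built tree rather than re-exploring it, so across all merges each edge of $C(r_{min})$ is traversed $O(1)$ times. As the number of edges inside $C(r_{min})$ is at most $|C(r_{min})|\Delta/2$, the total count of edge traversals is $O(|C(r_{min})|\Delta)$; each traversal costs a constant number of rounds and each empty-node visit adds only the single $Confirm\_Empty()$ round (Lemma~\ref{lemma:confirmempty}), giving $O(|C(r_{min})|\Delta)$ rounds overall.

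For the \textbf{memory bound}, each settled agent stores a parent-port pointer and a next-port marker ($O(\log\Delta)$ bits) together with the identifier of the search it belongs to ($O(\log k)$ bits), i.e. $O(\log(k+\Delta))$ bits in isolation. The $\ell_{C(r_{min})}$ factor should enter through merging: each of the at most $\ell_{C(r_{min})}$ non-singleton nodes of $C(r_{min})$ launches its own search, and the agent carrying the surviving search must record bookkeeping---origin identifier and merge-boundary port---for each search it has absorbed in order to route backtracking correctly across the stitched-together trees, yielding $O(\ell_{C(r_{min})}\log(k+\Delta))$ bits. I expect the \emph{main obstacle} to be precisely this merge analysis: establishing simultaneously that (i) merging never causes an edge to be explored more than a constant number of times (needed for the time bound) and (ii) the per-agent bookkeeping for navigating the merged forest is bounded by $O(\ell_{C(r_{min})})$ entries rather than growing with the number of settled agents. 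The cleanest route would be to maintain an invariant that, at every round, the merged DFS tree and its active merge boundaries are describable by $O(\ell_{C(r_{min})})$ labels, from which both (i) and (ii) follow.
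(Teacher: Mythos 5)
Your proposal follows essentially the same route as the paper's proof: a DFS-traversal argument in which settlement happens only at nodes certified unoccupied by $Confirm\_Empty()$ (Lemma~\ref{lemma:confirmempty}), meeting DFSs merge by settling one agent and pooling the rest, the time bound comes from counting $O(|C(r_{min})|\Delta)$ edge traversals within the component, and the memory bound comes from each settled agent keeping $O(\log(k+\Delta))$ bits of bookkeeping for each of the up to $\ell_{C(r_{min})}$ DFS instances. The merge-bookkeeping invariant you flag as the main obstacle is in fact left at the level of assertion in the paper as well, so your sketch is, if anything, slightly more explicit about where the work lies.
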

\begin{proof}
Let $\ell$ be the number of non-singleton nodes in the initial configuration. $\ell$ DFSs will be run by $\ell$ minimum ID agents, one each from $\ell$ non-singleton nodes of $G$. 
Let $u$ be a non-singleton node of $x$ agents with the minimum ID agent $r_{min}$.
$DFS(r_{min})$ with $x$ initially co-located agents need to visit $x-1$ other empty nodes to settle all its co-located agents. The largest ID agent settles at $u$. When an empty node is visited by $DFS(r_{min})$ and it is unoccupied (Lemma \ref{lemma:confirmempty}),  an agent settles. If an unoccupied node is visited by the heads of two or more DFSs, an agent from one DFS settles. Since $k<n$, $DFS(r_{min})$ may need to traverse $|C(r_{min})|\Delta$ edges to settle its agents.  
Regarding memory, an agent settled at a node may need to store the information about $\ell_{C(r_{min})}$ DFSs that form the component $C(r_{min})$. Since each DFS needs  $O(\log (k+\Delta))$ bits at each node, the total memory needed at an agent at component $C(r_{min})$  is $O(\ell_{C(r_{min})} \log (k+\Delta))$ bits.  
%
%
%
%
\end{proof}


\begin{lemma}
\label{lemma:dispersion-election}
    If non-singleton nodes in the initial configuration are $\ell\geq 1$, $\ell_i\geq 1$ agents become local leaders in component $C_i$, where $\ell_i$ are the non-singleton nodes that belongs to $C_i$ after all agents disperse.
    The non-singleton local leader election finishes for each agent in $O(|C_{max}|\Delta)$ rounds. 
\end{lemma}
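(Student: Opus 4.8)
The plan is to prove the two claims separately: (i) \emph{existence} — at least one agent becomes a local leader in each component $C_i$ that contains $\ell_i\ge 1$ non-singleton origin nodes — and (ii) the \emph{time bound} of $O(|C_{max}|\Delta)$ rounds. Both build directly on Lemma~\ref{lemma:dispersion}, which already guarantees that a single $Non\_Singleton\_Election(r_{min})$ disperses its co-located agents within $O(|C(r_{min})|\Delta)$ rounds, and on Lemma~\ref{lemma:confirmempty}, which shows $Confirm\_Empty()$ costs only one extra round per empty node visited.

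For existence, I would track the $\ell_i$ DFS traversals launched by the $\ell_i$ non-singleton origin nodes whose settled agents ultimately lie in $C_i$. The key is that such a DFS stops only when its head loses a head-to-head meeting with another $Non\_Singleton\_Election$ DFS of strictly larger ID (the merging rule, Issue~2), handing its unsettled agents to the winner; and by Issue~3 a $Non\_Singleton\_Election$ DFS never loses to a $Global\_Election$ DFS. Consequently, among the $\ell_i$ DFSs whose agents settle in $C_i$, the one launched by the largest-ID initiator never loses a meeting and therefore runs to completion; by Lemma~\ref{lemma:dispersion} it settles all its (possibly augmented) agents, and the rule ``the minimum-ID agent that settles last becomes $local\_leader$'' yields at least one local leader in $C_i$. (DFSs whose heads never collide with a larger-ID head also survive, so the number of local leaders in $C_i$ lies between $1$ and $\ell_i$, which Stage~2 later reconciles into a single global leader.)

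For the time bound, I would argue that each DFS contributing to $C_i$ performs a \emph{single} depth-first traversal of a subgraph of $C_i$, so it traverses each edge $O(1)$ times; since every node of $C_i$ has degree at most $\Delta$ and $|C_i|\le |C_{max}|$, the number of edge traversals — hence rounds — is $O(|C_i|\Delta)=O(|C_{max}|\Delta)$. The crucial observation is that absorbing agents at a merge does not inflate this count: the total number of agents a winning DFS must settle never exceeds the number of nodes of $C_i$ (exactly one agent settles per node), and the DFS keeps exploring the \emph{same} region $C_i$, so its traversal length stays $O(|C_i|\Delta)$ regardless of how many merges occur. The $Confirm\_Empty()$ overhead (Lemma~\ref{lemma:confirmempty}) and the brief stalls caused by $Singleton\_Election$ home nodes contribute only $O(1)$ rounds per visited node and do not change the asymptotics. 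Because all DFSs start in the same round and run concurrently, the wall-clock completion time is the maximum single-DFS runtime, i.e.\ $O(|C_{max}|\Delta)$; every participating agent is therefore either settled as a $non\_candidate$ or elevated to $local\_leader$ within this many rounds.

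The main obstacle I expect is the time analysis across merges, specifically ruling out that a winning DFS's runtime compounds with each absorption. The safeguard is a conservation argument: merges occur only while both DFSs are still active (a completed DFS has no head and cannot merge), so the winner executes one uninterrupted DFS whose exploration is confined to $C_i$, and the absorbed agents simply occupy nodes the traversal would reach anyway, bounding the total edge traversals by $O(|C_i|\Delta)$. A secondary point requiring care is confirming that a DFS head stepping onto an already-\emph{settled} (non-head) agent is treated as an ordinary visit to an occupied node rather than a stopping meeting, so that two DFSs whose regions are adjacent but whose heads never collide can both survive — exactly the source of the ``at most $\ell_i$'' local leaders noted above.
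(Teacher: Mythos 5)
Your proposal is correct and follows essentially the same route as the paper's proof: existence comes from the merging rule (a losing DFS hands its unsettled agents to the winner and produces no local leader, so some DFS in each component survives to completion), and the time bound is inherited directly from Lemma~\ref{lemma:dispersion}. Your extremal argument via the largest-ID initiator and your explicit check that merges do not compound the traversal length are somewhat more careful than the paper's brief treatment, but they formalize the same underlying idea rather than introducing a different approach.
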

\begin{proof}
Consider first the case of $\ell=1$. A single $DFS(r_{min})$ runs forming a single component $C$.  When $DFS(r_{min})$ finishes, $r_{min}$ becomes a local leader. 

Consider now the case of $\ell\geq 2$.  There may be the case that the heads of two DFSs $DFS(r_1)$ and $DFS(r_2)$ meet. In this case, $DFS(r_1)$ continues, if $r_1.ID>r_2.ID$, otherwise $DFS(r_2)$. The DFS that stops will hand over remaining agents to be settled to the winning DFS. Additionally, no agent becomes local leader from the stopped DFS. Therefore, for two DFSs at least an agent from one DFS becomes a local leader. Therefore, if $\ell_i$ DFSs from $\ell_i$ non-singleton nodes meet to form a component $C_i$, there is at least one local leader elected.

We now prove the time complexity.
For $\ell=1$, an agent becomes a local leader in $O(C \cdot \Delta)=O(k\Delta)$ rounds (Lemma \ref{lemma:dispersion}), as $|C|=k$. For $\ell\geq 2$, the election finishes in $O(C_i \cdot \Delta)$ rounds in each component $C_i$.  
Therefore, for each agent,   the non-singleton local leader election finishes in $O(|C_{max}|\Delta)$ rounds.
\end{proof}

\begin{lemma}
\label{lemma:return-home-node}
Consider a local leader $\psi(u)$ that cannot become a global leader. $\psi(u)$ can return to $home(\psi(u))$.  
\end{lemma}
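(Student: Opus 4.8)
The plan is to show that $\psi(u)$ can always retrace the branch of its own search tree that joins its current position to the root $home(\psi(u))$. First I would recall that $Global\_Election(\psi(u))$ is launched from $home(\psi(u))$, so $DFS(roundNo_u,\psi(u))$ is rooted there and, at every instant, the nodes it has reached form a tree hanging off $home(\psi(u))$ whose parent pointers were written as the head advanced. A local leader fails to become global in only two ways. In the first (Issue 1.A.i) $\psi(u)$ never leaves home, so it is already at $home(\psi(u))$ and the claim is immediate. In the second, the head of $DFS(roundNo_u,\psi(u))$ meets a strictly higher-priority traversal --- either $DFS(roundNo_v,\psi(v))$ with $(roundNo_v,\psi(v).ID)$ lexicographically larger, or the head of a $Non\_Singleton\_Election$ DFS --- and is ordered to stop at some node $x$. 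I would concentrate on this case.

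The core step is to verify that ordinary DFS backtracking from $x$ to the root is executable hop by hop. The edges from $x$ up to $home(\psi(u))$ are exactly the current recursion stack of $DFS(roundNo_u,\psi(u))$, and the model guarantees that $\psi(u)$ knows its incoming port each time it crosses an edge, so a single move suffices to climb one level provided the pointer to the parent is available at the node being left. At a node still holding its settled agent this pointer is read directly from that agent (which retains the per-DFS information, as in Lemma \ref{lemma:dispersion}). The delicate situation is a node on the return path that is a currently empty home node of another local leader who is itself away running $Global\_Election$: there is no settled agent present to answer. Here I invoke Lemma \ref{lemma:oscillatingneighbor}, which guarantees a neighbor revisiting such a node every two rounds; waiting one round (exactly the $Confirm\_Empty()$ mechanism) lets $\psi(u)$ recover the tree edge toward the root and distinguish this occupied-but-empty node from a genuinely unoccupied one, so the climb does not stall.

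Finally I would bound the walk: the return path is a simple root-to-$x$ path in the DFS tree, hence of length at most $|C(\psi(u))|-1$, since $DFS(roundNo_u,\psi(u))$ only ever enters nodes of the component $C(\psi(u))$; with the constant per-node wait forced by oscillation this gives an $O(|C(\psi(u))|)$-round return that terminates at $home(\psi(u))$, where the oscillating neighbor (if any) can then be released. I expect the main obstacle to be precisely this backtracking-through-an-empty-home-node step: the node carrying $\psi(u)$'s parent pointer may be vacant at the very moment $\psi(u)$ wants to leave it, and I must make the oscillation timing of Algorithm \ref{algorithm:oscillation} line up with the return walk so that $\psi(u)$ is never misled into treating an occupied home node as unoccupied (or vice versa) while climbing back.
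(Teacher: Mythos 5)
Your proof is correct and follows essentially the same route as the paper's: the agent either never left $home(\psi(u))$ (note this also covers the leaders that become $non\_candidate$ while waiting at home during $Oscillation$, not only Issue 1.A.i), or it retraces the parent pointers of the DFS tree $T_u$ rooted at $home(\psi(u))$. The paper's proof is terser and does not spell out the empty-home-node subtlety on the backtracking path that you handle via Lemma \ref{lemma:oscillatingneighbor}; your added detail is consistent with the algorithm and does not change the argument.
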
 
\begin{proof}
Consider a local leader $\psi(u)$ that becomes $non\_candidate$ before running $Global\_Election$. That local leader is already at its $home(\psi(u))$. Now consider a local leader $\psi(u)$ that can run $Global\_Election$.
During $Global\_Election$ (Algorithm \ref{algorithm:leader-election}), $\psi(u)$ runs $DFS(roundNo_u,\psi(u))$. $DFS(roundNo_u,\psi(u))$ builds a DFS tree $T_u$ with its root $home(\psi(u))$. In $T_u$, there is a sequence of parent pointers from the head of $DFS(roundNo_u,\psi(u))$) to $home(\psi(u))$. Since every local leader runs its own $DFS(roundNo_u,\psi(u))$ and maintains $T_u$, $\psi(u)$ can follow the parent pointers in $T_u$ until reaching $home(\psi(u))$. 
\end{proof}

\begin{lemma}
\label{lemma:oscillation}
Consider an agent $\psi(u)$ that became local leader at $roundNo_u$. Let $\psi(u)$ belongs to component $C_i$. Using procedure $Oscillation(\psi(u))$ (Algorithm \ref{algorithm:oscillation}), $\psi(u)$ either becomes $non\_candidate$ or starts running $Global\_Election$ within  $O((|C_i|+\log^2k)\Delta)$ rounds. 
\end{lemma}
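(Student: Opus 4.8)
The plan is to bound separately (i) the time until $\psi(u)$ first becomes a local leader, and (ii) the additional time it spends inside $Oscillation(\psi(u))$ before committing to either $Global\_Election$ or $non\_candidate$. For (i), Lemmas~\ref{lem:rounds_with_same_degree_leader} and~\ref{lemma:dispersion-election} already give that every agent of $C_i$ settles its local-leader status by round $O((|C_i|+\log^2 k)\Delta)$: a singleton local leader is decided in $O(\delta_u\log^2 k)=O(\Delta\log^2 k)$ rounds, while a non-singleton local leader is decided after its dispersing DFS explores $C_i$, i.e.\ in $O(|C_i|\Delta)$ rounds. So it remains to show that the waiting incurred inside $Oscillation$ is itself $O(|C_i|\Delta)$.

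First I would record the cost of a single $Global\_Election$ DFS: it visits only the edges of $C_i$ (by the $Confirm\_Empty$ test of Lemma~\ref{lemma:confirmempty}, which classifies each empty node in one extra round), and $C_i$ has $O(|C_i|\Delta)$ edges, so one $Global\_Election$ completes and its owner returns home (Lemma~\ref{lemma:return-home-node}) in $O(|C_i|\Delta)$ rounds. The whole argument then reduces to the claim that any waiting $\psi(u)$ is blocked for at most one such $Global\_Election$ before being resolved.

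Then I would go through the branches of Algorithm~\ref{algorithm:oscillation}. The immediate branches need no wait: in Issue~1.A.i, $\psi(u)$ has no occupied neighbour and becomes leader at once; in Issue~1.A.ii with $\psi(w)$ not oscillating, and in Issue~1.B.i with a $non\_candidate$ parent, $\psi(u)$ starts $Global\_Election$ in $O(1)$ rounds (ties among agents that see $\psi(w)$ free in the same round are broken by the $(roundNo,ID)$ priority, the losers turning $non\_candidate$ immediately). The blocking branches are Issue~1.A.ii and Issue~1.B.iii, where the chosen neighbour is already oscillating for some other local leader that is running $Global\_Election$, and Issue~1.B.ii, where the parent is itself a local leader. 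In the oscillation-contention branches, $\psi(w)$ stops oscillating exactly when that other $Global\_Election$ terminates, i.e.\ within $O(|C_i|\Delta)$ rounds; at that instant the wait-and-notify rule selects the single highest-priority waiter to run $Global\_Election$ and forces every other waiter to $non\_candidate$. Hence a fixed $\psi(u)$ that began waiting is resolved as soon as the in-progress $Global\_Election$ finishes---it never waits through a second one---so its waiting time is $O(|C_i|\Delta)$.

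The main obstacle I expect is the chain case (Issue~1.B.ii), where $\psi(u)$ waits for a parent local leader $\psi(w)$ to turn $non\_candidate$: I must rule out an unbounded cascade of such waits. The key is the priority rule of $Global\_Election$, which favours the agent with the larger $roundNo$ (ties broken by ID): along any chain the latest-elected local leader wins and every earlier one is defeated when the DFS traversals meet, so each blocked $\psi(u)$ is ultimately waiting for the termination of a \emph{single} $Global\_Election$ run by the eventual winner, not for a sequence of them. Making this precise---showing that the notify signals propagate so that a given $\psi(u)$ incurs only one $Global\_Election$'s worth of delay, independent of the number $\zeta_i$ of local leaders in $C_i$---is the delicate step; once established, adding the $O((|C_i|+\log^2 k)\Delta)$ time to become a local leader to the $O(|C_i|\Delta)$ blocking time yields the stated $O((|C_i|+\log^2 k)\Delta)$ bound.
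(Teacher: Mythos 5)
Your proposal follows the same overall route as the paper's proof: split on whether $\psi(u)$ became a local leader via $Singleton\_Election$ or $Non\_Singleton\_Election$, bound $roundNo_u$ by $O(\Delta\log^2 k)$ or $O(|C_i|\Delta)$ respectively, and then bound the additional wait inside $Oscillation$ by the duration of one in-progress $Global\_Election$, i.e., $O(|C_i|\Delta)$. The difference is one of thoroughness rather than of method. The paper's proof treats only the two oscillation-contention situations (an oscillating neighbour $\psi(w)$ in the singleton case; an agent $\psi(w')$ oscillating to the parent $w$ in the non-singleton case) and simply asserts that the oscillation stops by $roundNo_u + O(|C_i|\Delta)$; it does not discuss the local-leader chain of Issue 1.B.ii at all. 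You, by contrast, enumerate every branch of Algorithm \ref{algorithm:oscillation}, isolate the chain case as the one place where an unbounded cascade of waits must be excluded, and point to the $(roundNo, ID)$ priority rule as the mechanism that collapses the chain to a single $Global\_Election$'s worth of delay. That identification is correct and is a genuine improvement in care over the published argument, although you stop short of proving that the notify signals resolve the entire chain within one $Global\_Election$ duration --- you explicitly flag it as the delicate step. Since the paper's own proof does not address this step either, your proposal is at least as complete as the paper's; to fully close the lemma one would still need to argue (for instance, by induction from the latest-elected end of the chain) that every waiting leader in a chain is notified or defeated within $O(|C_i|\Delta)$ rounds, independent of the number $\zeta_i$ of local leaders.
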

\begin{proof}
We look into whether $\psi(u)$ became local leader through $Singleton\_Election$ or $Non\_Singleton\_Election$.

We first consider the $Singleton\_Election$ case.
Suppose $\psi(u)$ became local leader at $roundNo_u$ and belongs to $C_i$. Let $\psi(w)$ be the agent at $w\in N(u)$ that is oscillating. $\psi(w)$ becomes non-oscillating by $roundNo_u+O(|C_i|\Delta)$. 
Additionally, each agent becoming local leader through $Singleton\_Election$ will be elected local leaders by $O(\Delta\log^2k)$ rounds, i.e., $roundNo_u\leq O(\Delta\log^2k)$ (Lemma \ref{lem:rounds_with_same_degree_leader}).
Therefore, by $roundNo_u+O(|C_i|\Delta)\leq O(\Delta\log^2 k)+O(|C_i|\Delta)$ rounds, either $\psi(u)$ becomes a non-candidate or starts running $Global\_Election$. 

We now consider the $Non\_Singleton\_Election$ case. Suppose $\psi(u)$ became local leader at $roundNo_u$ and belongs to $C_i$. Let $\psi(w)$ be the agent at the parent $w$ of node $u$ in the DFS tree built by $\psi(u)$. 
Let $\psi(w')$ be the agent at $w'\neq w\neq u$ oscillating to $w$. $\psi(w')$ becomes non-oscillating by $roundNo_u+O(|C_i|\Delta)$. Additionally, $roundNo_u\leq O(|C_i|\Delta$.  Therefore, by $roundNo_u+O(|C_i|\Delta)\leq O(|C_i|\Delta)$ rounds, either $\psi(u)$ becomes a non-candidate or starts running $Global\_Election$.

We have the claimed bound, combining the two results.
\end{proof}

\begin{lemma}
\label{lemma: leader-election}
Global\_Election (Algorithm \ref{algorithm:leader-election}) at each component $C_i$ elects a unique global leader at $C_i$ and terminates in $O((|C_i|+\log^2k)\Delta)$ rounds with $O(\zeta_i \log (k+\Delta))$ bits at each agent. 
\end{lemma}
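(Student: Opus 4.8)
The plan is to establish the three claims---uniqueness, running time, and per-agent memory---separately, leaning on the Stage-1 bounds and on the $Oscillation$ guarantee (Lemma \ref{lemma:oscillation}) already in hand. First I would fix, once and for all, the set of agents that ever become local leaders in $C_i$; by Lemmas \ref{lem:rounds_with_same_degree_leader} and \ref{lemma:dispersion-election} there are $\zeta_i$ of them, and all are elected within $O((|C_i|+\log^2 k)\Delta)$ rounds, after which Stage 1 produces no new local leader in $C_i$. Each local leader that actually starts $Global\_Election$ carries the label $(roundNo_u,\psi(u).ID)$; since IDs are distinct, these labels are pairwise distinct and hence totally ordered lexicographically. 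Let $\psi^*$ denote the local leader with the maximum label and write its traversal as $DFS(roundNo^*,\psi^*)$.

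For uniqueness, the core claim is that $DFS(roundNo^*,\psi^*)$ is never stopped and explores all of $C_i$, while every other local leader's $Global\_Election$ ends in status $non\_candidate$. By the meeting rule (Issue 2.A), whenever two $Global\_Election$ DFSs meet, the one with the larger label continues; thus $DFS(roundNo^*,\psi^*)$ wins every such encounter, and each lower-priority DFS stops, returns home (Lemma \ref{lemma:return-home-node}), and becomes $non\_candidate$. I must also rule out $\psi^*$ being halted by a $Non\_Singleton\_Election$ head (Issue 2.B): every $Non\_Singleton\_Election$ forming $C_i$ finishes within $O(|C_i|\Delta)$ rounds (Lemma \ref{lemma:dispersion}), and because $\psi^*$ is the latest-elected local leader in $C_i$, it starts its DFS only after all such traversals have completed, so no live $Non\_Singleton\_Election$ head remains. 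Throughout the traversal, $Confirm\_Empty$ correctly classifies each empty node as occupied or unoccupied via the oscillating neighbor (Lemmas \ref{lemma:oscillatingneighbor} and \ref{lemma:confirmempty}), so $DFS(roundNo^*,\psi^*)$ visits exactly the edges of $C_i$; upon setting $all\_component\_edges\_visited \leftarrow true$ it returns to $home(\psi^*)$ and declares itself leader. Any earlier local leader that had temporarily become global is overtaken when $DFS(roundNo^*,\psi^*)$ reaches its oscillation-guarded home node, whereupon it becomes $non\_candidate$; since no label exceeds $\psi^*$'s, $\psi^*$ is never overtaken, yielding a single stable global leader.

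For the running time I would sum three bounded terms: (i) all local leaders of $C_i$ are elected by $O((|C_i|+\log^2 k)\Delta)$ rounds; (ii) by Lemma \ref{lemma:oscillation}, $\psi^*$ either becomes $non\_candidate$ or begins $Global\_Election$ within $O((|C_i|+\log^2 k)\Delta)$ rounds; and (iii) an uninterrupted DFS over $C_i$ traverses $O(|C_i|\Delta)$ edges, with $Confirm\_Empty$ adding only $O(1)$ rounds per empty node, so exploration finishes in a further $O(|C_i|\Delta)$ rounds. Adding these gives stabilization in $O((|C_i|+\log^2 k)\Delta)$ rounds. For memory, each of the $\zeta_i$ concurrent $Global\_Election$ DFSs stores at a node it occupies only its entry/parent ports and its label, i.e.\ $O(\log(k+\Delta))$ bits; since an agent in $C_i$ may simultaneously lie on the stacks of all $\zeta_i$ such DFSs, the per-agent storage is $O(\zeta_i \log(k+\Delta))$ bits.

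The delicate point---and the step I expect to be the main obstacle---is the uniqueness argument under overtaking and concurrency: I must ensure that prioritizing the latest-elected leader neither deadlocks (two leaders each waiting on the other's oscillating home-guard) nor lets two DFSs complete full exploration at once. The $Oscillation$ bookkeeping (Lemma \ref{lemma:oscillation}) together with strict lexicographic tie-breaking are what I would rely on, and carefully verifying that $DFS(roundNo^*,\psi^*)$ is genuinely never blocked---against both competing $Global\_Election$ DFSs and any residual $Non\_Singleton\_Election$ DFS---is the crux.
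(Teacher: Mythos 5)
Your proposal is correct and follows essentially the same route as the paper's own proof: bound the round by which every local leader either becomes $non\_candidate$ or starts $Global\_Election$ via the oscillation lemma, use $Confirm\_Empty$ to keep each DFS inside $C_i$, resolve encounters between competing DFSs (and with $Non\_Singleton\_Election$) by the $(roundNo, ID)$ priority, and charge $O(\log(k+\Delta))$ bits per local leader to get the $O(\zeta_i\log(k+\Delta))$ memory bound. Your explicit identification of the maximum-label leader $\psi^*$ as the unique unstoppable survivor is a slightly more rigorous rendering of the uniqueness step that the paper states only tersely, but it is the same argument.
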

\begin{proof}
Consider an agent $\psi(u)$ that becomes a local leader at node $u$ of component $C_i$. It either becomes non-candidate or initiates $Global\_Election(\psi(u))$ (Algorithm \ref{algorithm:leader-election}) by round $O((|C_i|+\log^2k)\Delta)$. $Global\_Election(\psi(u))$  runs $DFS(roundNo_u,\psi(u))$.
Suppose $DFS(roundNo_u,\psi(u))$ visits an empty node $w$. We have from Lemma \ref{lemma:confirmempty} that it can be confirmed in an additional round whether $w$ is in fact unoccupied or occupied running Algorithm \ref{algorithm:confirm-empty}.  If it is unoccupied, then $DFS(roundNo_u,\psi(u))$ simply backtracks to the previous node since $w$ does not belong to $C_i$. If $DFS(roundNo_u,\psi(u))$ meets another $DFS(roundNo_v,\psi(v))$ then either $DFS(roundNo_u,\psi(u))$ or $DFS(roundNo_u,\psi(v))$ continues. If $DFS(roundNo_u,\psi(u))$ meets $DFS(r)$ doing $Non\_Singleton\_Election$ then $DFS(roundNo_u,\psi(u))$ stops as agent $r$ is eligible to become a local leader at $C_i$ later in time and which will run $Global\_Election$ to become a global leader in $C_i$.  
Therefore, $Global\_Election$ elects a unique global leader in $C_i$.  
The total time to finish $Global\_Election$ (Algorithm \ref{algorithm:leader-election}) in $C_i$ is  $O(|C_i|\Delta)$ rounds. To store information about $Global\_Election$ for each local leader in $C_i$, $O(\log(k+\Delta))$ bits at each agent is sufficient. Given $\zeta_i$ local leaders in $C_i$, the total memory requirement is $O(\zeta_i \log (k+\Delta))$ bits at each agent. 
\end{proof}

\begin{lemma}
\label{lemma:component-hop}
    After $Global\_Election$ finishes, for any two agents $\psi(u),\psi(v)$ such that $u\in C_i$ and $v\in C_j$, $hop(u,v)\geq 2$.  
\end{lemma}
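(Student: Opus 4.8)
The plan is to prove the contrapositive: if two occupied nodes $u,v$ are adjacent (so $hop(u,v)=1$), then their agents must belong to the same component. Since the lemma concerns the configuration reached after all $Global\_Election$ procedures have finished (the system has stabilized to one global leader per component by Lemma \ref{lemma: leader-election}), I would fix attention on the unique global leader $\psi(x)$ of the component containing $u$ and on the traversal $DFS(roundNo_x,\psi(x))$ that it completed. By the termination test of $Global\_Election$ (Algorithm \ref{algorithm:leader-election}), this DFS sets $all\_component\_edges\_visited\leftarrow true$ only after traversing \emph{every} edge incident to a node of its component; in particular it must traverse the edge $(u,v)$ emanating from $u$.

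The key step is then to show that when this DFS crosses $(u,v)$ and arrives at $v$, it correctly recognizes $v$ as occupied rather than backtracking. If $\psi(v)$ is physically present this is immediate. If instead $v$ is a (currently empty) home node whose owner is away running its own $Global\_Election$, then by Lemma \ref{lemma:oscillatingneighbor} some neighbor of $v$ oscillates onto $v$ every two rounds, so the $Confirm\_Empty$ procedure reports $v$ as occupied after one additional round (Lemma \ref{lemma:confirmempty}). In either case the DFS continues forward into $v$ rather than backtracking, so $v$ is visited by the very same exploration that visits $u$; hence $u$ and $v$ lie in a single connected region explored by one winning DFS and therefore belong to the same component, which is exactly the contrapositive of the claim.

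The main obstacle, and the part requiring care, is the concurrency inherent in the stabilizing algorithm: several local leaders may launch their own DFS's at different times, and the DFS that first reaches $(u,v)$ need not be the one that ultimately survives. I would handle this using the meeting rules already established for $Global\_Election$ and $Non\_Singleton\_Election$ (Issues 2.A--2.B). Whenever two explorations that both cover occupied endpoints of $(u,v)$ meet, the $(roundNo,ID)$ priority rule forces exactly one to continue while the other becomes $non\_candidate$ and returns home; because the surviving DFS still explores all incident edges, it in turn traverses $(u,v)$ and absorbs the far endpoint. Iterating this, I would conclude that at stabilization a single surviving leader's DFS spans the entire connected region of occupied nodes reachable through occupied-node adjacencies, so no adjacency can straddle two distinct stabilized components. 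The special case in which $\psi(x)$ became a leader through $Singleton\_Election$ with no occupied neighbor (Issue 1.A.i) is immediate: such a component is a single node with no occupied neighbor, so $hop\geq 2$ to every agent outside it holds vacuously. Combining these cases yields $hop(u,v)\geq 2$ whenever the agents lie in distinct components $C_i\neq C_j$.
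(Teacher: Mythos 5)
Your proposal is correct and takes essentially the same approach as the paper: the paper also argues (by contradiction rather than contrapositive) that the $Global\_Election$ DFS of the component that stabilizes last must visit the one-hop neighborhood of all its nodes, hence must reach the adjacent occupied node $v$ and absorb it into the same component. Your write-up is more detailed than the paper's — in particular the appeal to $Confirm\_Empty$ and the oscillation guarantee to show $v$ is recognized as occupied, and the explicit treatment of concurrent DFSs and the Issue 1.A.i special case — but the underlying argument is the same.
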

\begin{proof}
    We prove this by contradiction. Suppose $hop(u,v)=1$, meaning that agents $\psi(u)$ and $\psi(v)$ are neighbors but belong to two different components. Let $u\in C_i$ and $v\in C_j$. For this to happen, the $Global\_Election$ in $C_i$ ($C_j$) must not visit node $v$ ($u$). Consider the $Global\_Election$ DFS on either $C_i$ or $C_j$ that started last in time (suppose that is from $C_i$).  That $Global\_Election$ DFS visits one-hop neighborhood of all the nodes in $C_i$. Since $hop(u,v)=1$,  $Global\_Election$ DFS must visit node $v$ and since $v$ has agent $\psi(v)$, $\psi(v)$ is considered as the agent that belongs to component $C_i$. Hence, a contradiction. 
\end{proof}

\begin{theorem}[{\bf Stabilizing Algorithm, $k<n$}]
\label{theorem:stabilizing}
There is a stabilizing deterministic algorithm for $k<n$ agents in the agentic model that elects one agent as a leader in each component $C_i$, without agents knowing any graph parameter a priori.  
The time complexity of the algorithm is $O((|C_{max}|+\log^2k)\Delta)$ rounds and the memory complexity is $O(\max\{\ell_{\max},\zeta_{\max},\log (k+\Delta)\} \log (k+\Delta))$ bits per agent. 
\end{theorem}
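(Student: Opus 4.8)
The plan is to assemble the theorem from the stage-by-stage lemmas already established, checking three things in turn: (i) at least one global leader is elected in every component, (ii) each component stabilizes to a \emph{unique} leader within the claimed window, and (iii) the time and memory bounds hold. First I would fix an arbitrary initial configuration and let $C_1,\dots,C_\kappa$ be the components of the terminal dispersed configuration. The theorem is essentially a synthesis of the preceding lemmas together with a stabilization argument for the overtaking mechanism, so the main intellectual content lies in (ii).

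For existence I would split each component $C_i$ according to whether it contains an originally non-singleton node. If $\ell_i\ge 1$, Lemma~\ref{lemma:dispersion-election} already yields at least one local leader in $C_i$; if $C_i$ consists only of initially singleton nodes, the same minimum-degree chain argument as in Lemma~\ref{lemma:dispersed-election} (applied within $C_i$) produces one. In either case at least one agent reaches status \emph{local\_leader}. Feeding these into Stage~2, Lemma~\ref{lemma: leader-election} guarantees that \emph{Global\_Election} promotes exactly one of them to a global leader per component, relying on \emph{Confirm\_Empty} (Lemma~\ref{lemma:confirmempty}) and the oscillation invariant (Lemma~\ref{lemma:oscillatingneighbor}) to tell unoccupied nodes apart from occupied-but-temporarily-empty home nodes, so that a DFS never escapes its own component. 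The degenerate case of a singleton leader all of whose neighbors look empty (Issue~1.A.i) is reconciled by the same visit-based reclassification: such an agent becomes \emph{non\_candidate} precisely when another traversal reveals that a neighbor was an occupied home node, i.e.\ that it truly belongs to a larger block. Finally Lemma~\ref{lemma:component-hop} certifies that the $C_i$ so defined are exactly the connected blocks of occupied nodes (distinct components are $\ge 2$ hops apart), matching the definition used throughout.

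The delicate step, and the one I expect to be the main obstacle, is \emph{stabilization}: the first agent to declare itself global leader in $C_i$ may be overtaken, so I must show all overtaking ceases by round $c_1(|C_{max}|+\log^2 k)\Delta$. The argument rests on two facts. Every local leader is born by that round: singleton ones by $O(\Delta\log^2 k)$ (Lemma~\ref{lem:rounds_with_same_degree_leader}) and non-singleton ones by $O(|C_{max}|\Delta)$ (Lemma~\ref{lemma:dispersion-election}), and by Lemma~\ref{lemma:oscillation} each either dies in \emph{Oscillation} or launches its \emph{Global\_Election} within the same window without deadlock. Once no new local leader can appear and every surviving DFS has completed (each in $O((|C_i|+\log^2k)\Delta)$ rounds), the priority rule on $(roundNo,ID)$ — later-born, then larger-ID wins (Issue~2.A) — is a strict total order, so among the finitely many concurrent traversals in $C_i$ exactly one is never stopped; that survivor is the final leader and, having no remaining competitor, is never overtaken thereafter. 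The care needed is to rule out cyclic waiting in the oscillation/notify handoff when one neighbor stores home-node information for several local leaders; I would discharge this with the ``store one, others wait, then the highest-priority waiter proceeds'' invariant of Algorithm~\ref{algorithm:oscillation}, which bounds each wait by one completing \emph{Global\_Election} and hence keeps every agent inside the stated window.

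For the complexities I would add the stage costs. Time: Stage~1 costs $O(\Delta\log^2 k)$ for singletons and $O(|C_{max}|\Delta)$ for non-singletons, and Stage~2 costs $O((|C_{max}|+\log^2k)\Delta)$ (Lemmas~\ref{lemma:oscillation} and \ref{lemma: leader-election}); the dominant term is $O((|C_{max}|+\log^2k)\Delta)$ rounds, within which the system stabilizes by the argument above. Memory: an agent may store DFS state for the up to $\ell_i$ non-singleton traversals merged into its component, costing $O(\ell_{\max}\log(k+\Delta))$ (Lemma~\ref{lemma:dispersion}), plus \emph{Global\_Election} bookkeeping for up to $\zeta_i$ local leaders, costing $O(\zeta_{\max}\log(k+\Delta))$ (Lemma~\ref{lemma: leader-election}), plus $O(\log(k+\Delta))$ for its own ID, port labels, and round counters. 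Taking the maximum yields $O(\max\{\ell_{\max},\zeta_{\max},\log(k+\Delta)\}\log(k+\Delta))$ bits per agent, completing the proof.
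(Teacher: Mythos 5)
Your proposal is correct and follows essentially the same route as the paper: both proofs assemble the theorem by summing the Stage~1 and Stage~2 costs from the preceding lemmas (singleton/non-singleton election, oscillation, \emph{Global\_Election}, return-to-home, and the two-hop component separation). The only differences are organizational — you case-split per component on whether it contains a non-singleton node, whereas the paper cases on the initial configuration (dispersed versus general), and you spell out the stabilization/overtaking argument in the theorem proof itself while the paper delegates it to the oscillation and \emph{Global\_Election} lemmas — neither of which changes the substance.
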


\begin{proof}
Consider the dispersed initial configuration. In Stage 1, only $Singleton\_Election$ runs. We have from Lemma \ref{lemma:dispersed-election}, that at least one agent becomes a local leader. We have from Lemma \ref{lem:rounds_with_same_degree_leader}, $Singleton\_Election$ for each agent finishes in $O(\Delta\log^2k)$ rounds with memory $O(\log^2(k+\Delta))$ bits per agent. Let $\zeta_i$ be the number of local leaders in each $C_i$. In Stage 2, $Global\_Election$ for each local leader finishes in $O(|C_{max}|\Delta)$ rounds with $O(\zeta_{max} \log(k+\Delta))$ bits at each agent.  Therefore, time complexity becomes 
$O((|C_{max}|+\log^2k)\Delta)$ rounds and the memory complexity becomes $O(\max\{\zeta_{\max},\log (k+\Delta)\} \log (k+\Delta))$ bits per agent. 

Consider the general initial configuration. We have two sub-cases. (1) There is no singleton node. (2) There is a mix of singleton and non-singleton nodes. Consider no singleton node case first.  
$\ell$ instances of $Non\_Singleton\_Election$ finish in $O(|C_{max}| \Delta)$ rounds in each component.  The memory is $O(\ell_{max} \log(k+\Delta))$ bits per agent. $Global\_Election$ then finishes in next $O(|C_{max}| \Delta)$ rounds in each component with memory $O(\zeta_{\max} \log(k+\Delta))$ bits per agent (Lemma \ref{lemma: leader-election}).  Consider now the case of $\ell$ non-singleton nodes and at least a singleton node. $Singleton\_Election$ finishes in $O(\Delta\log^2k)$ rounds (Lemma \ref{lem:rounds_with_same_degree_leader}). Everything else remains the same.



Consider the local leaders 
    that cannot become global leaders after running $Global\_Election$. We have from Lemma \ref{lemma:return-home-node} that they can return to their home nodes. Notice that returning to home nodes takes $O(|C_{max}|)$ rounds since agents can traverse parent pointers on the DFS tree they build while running $Global\_Election$.  The memory remains $O(\zeta_{\max}\log(k+\Delta))$ bits per agent. Lemma \ref{lemma:component-hop} shows that the agents in two components are at least 2-hop away from each other, satisfying our component definition. Therefore, our leader election is correct that one component has exactly one leader. 
    
Combining all these time and memory bounds, we have time complexity $O((|C_{max}|+\log^2k)\Delta)$ and memory complexity $O(\max\{\ell_{\max},\zeta_{\max},\log (k+\Delta)\} \log (k+\Delta))$ bits per agent.
\end{proof}

\subsection{Explicit Algorithm}
We now discuss how Algorithm \ref{algorithm:leader_election} can be made explicit (no overtaking of global leader in each component $C_i$) with agents knowing $k,\Delta$. Our explicit algorithm follows the 2-stage approach as in the stabilizing algorithm. In Stage 1, singleton agents run  {\em Singleton\_Election} and non-singleton agents run {\em Non\_Singleton\_Election} to become local leaders. The local leaders then run {\em Global\_Election} in Stage 2. 
Stage 1 runs for $c_1\cdot k\Delta$ rounds, i.e., the agents that become local leaders in Stage 1 start Stage 2 at round $c_1\cdot k\Delta+1$.
Stage 2 also runs for $c_1\cdot k\Delta$ rounds, i.e., a local leader agent eligible to become a global leader declares itself as a global leader after Stage 2 runs for $c_1\cdot k\Delta$ rounds. 
We prove that this approach avoids  overtaking, meaning once one local leader agent becomes a global leader, there is no other local leader that becomes global leader later in time.

We now discuss how $Singleton\_Election(\psi(u))$ executes knowing $k,\Delta$. $\psi(u)$ meets all its neighboring singleton agents in $O(\Delta\log k)$ rounds (see discussion in message-passing model simulation in Section \ref{section:introduction}). Therefore,  $Singleton\_Election(\psi(u))$  finishes in $O(\Delta \log k)$ rounds. $Non\_Singleton\_Election(\psi(u))$ finishes in $O(k\Delta)$. The eligible agents to become local leader declare themselves as local leaders at round $c_1\cdot k\Delta$ (not before). 
Additionally, at $c_1\cdot k\Delta$, all agents have already dispersed. Let $C_1,C_2,\ldots,C_{\kappa}$ be $\kappa$ components formed at round $c_1\cdot k\Delta$.  It is easy to see that components remain fixed after $c_1\cdot k\Delta$.

The local leaders then run $Oscillation$ for $c_1\cdot k\Delta$ rounds, some of them become $non\_candidate$ and some of them are eligible to run  $Global\_Election$. The local leader $\psi(u)$ eligible to run $Global\_Election$ then runs  $DFS(\psi(u))$. We do not need $roundNo$ here since all eligible local leaders run $Global\_Election$ at the same round and symmetry breaking based on the leader ID is enough. When $DFS(\psi(u)$ meets $DFS(\psi(v))$, $DFS(\psi(v))$ stops if $\psi(u).ID>\psi(v).ID$. If $\psi(u)$ finishes visiting all the nodes in $C(\psi(u))$, then it can declare itself as a global leader of the component $C(\psi(u))$. Notice that if $\psi(u)$ visits all the nodes of $C(\psi(u))$ then it must be the case that it got the highest priority among all other local leaders in $C(\psi(u))$ that ran $Global\_Election$ and hence $DFS(\psi(u))$ stopped all of them.

\begin{theorem}[{\bf Explicit Algorithm, $k<n$}]
\label{theorem:explicitkn}
There is an explicit deterministic algorithm for $k<n$ agents in the agentic model that elects one agent as a leader in each component $C_i$ with agents knowing $k,\Delta$ a priori. The time complexity of the algorithm is $O(k\Delta)$ rounds and the memory complexity is  $O(\max\{\ell_{max},\zeta_{\max}\} \log (k+\Delta))$ bits per agent.
\end{theorem}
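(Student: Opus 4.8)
The plan is to reuse the machinery of the stabilizing algorithm (Theorem~\ref{theorem:stabilizing}) and exploit the knowledge of $k$ and $\Delta$ to turn the two stages into fixed-length phases of $c_1 \cdot k\Delta$ rounds each, for a suitably large constant $c_1$. First I would bound Stage~1. Since $\Delta$ is known, a singleton agent $\psi(u)$ can meet every occupied neighbor in $N(u)$ in $O(\Delta \log k)$ rounds using the direct ID bit-encoding meeting scheme from the message-passing simulation (Section~\ref{section:introduction}); this replaces the unknown-parameter padding procedure (Algorithm~\ref{algorithm:padding}), so $Singleton\_Election$ finishes in $O(\Delta \log k)$ rounds, while by Lemma~\ref{lemma:dispersion} each $Non\_Singleton\_Election$ finishes in $O(|C(r_{min})|\Delta)=O(k\Delta)$ rounds. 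Hence for large enough $c_1$, every agent that will ever become a local leader has done so, and every agent has dispersed, by round $c_1 \cdot k\Delta$. I would then impose the rule that eligible agents declare themselves local leaders \emph{only} at round $c_1 \cdot k\Delta$, so that all local leaders are born simultaneously.

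Next I would argue that the components $C_1,\dots,C_\kappa$ are frozen after round $c_1\cdot k\Delta$: dispersion is complete, no agent settles a new node, and by Lemma~\ref{lemma:component-hop} distinct components stay at hop-distance $\geq 2$. Because every local leader enters Stage~2 at the same round, the timestamp $roundNo$ used for tie-breaking in the stabilizing version is unnecessary: the $Oscillation$ step (Lemma~\ref{lemma:oscillation}) and the meetings inside $Global\_Election$ can be resolved by agent ID alone. When $DFS(\psi(u))$ meets $DFS(\psi(v))$, the lower-ID traversal stops, its agent becomes $non\_candidate$, and it returns home along parent pointers (Lemma~\ref{lemma:return-home-node}). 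In each frozen component $C_i$ this total ID-ordering leaves exactly one surviving traversal—that of the highest-priority local leader—which visits every node and edge of $C_i$ within $O(|C_i|\Delta)=O(k\Delta)$ rounds, giving uniqueness.

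The heart of the argument, and the step I expect to be the main obstacle, is the explicit (no-overtaking) guarantee. I would require each surviving local leader to wait until Stage~2 has run its full $c_1\cdot k\Delta$ rounds before declaring itself a global leader. Since all contenders in a component are active simultaneously throughout this window, it suffices to show that any two traversals co-residing in $C_i$ must meet within $O(|C_i|\Delta)=O(k\Delta)$ rounds of both being active; choosing $c_1$ to dominate this meeting time forces every competing DFS in $C_i$ to be eliminated strictly before the declaration round $2c_1\cdot k\Delta$. Combined with the fact that no new local leader can appear after round $c_1\cdot k\Delta$ (components are frozen), this ensures that once an agent declares itself the leader of $C_i$, no other agent can later overtake it—precisely the explicit property absent from the stabilizing version.

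Finally I would collect the complexities. Stage~1 and Stage~2 each take $c_1\cdot k\Delta=O(k\Delta)$ rounds, and returning a losing local leader to its home node costs $O(|C_{max}|)=O(k)$ additional rounds, so the total time is $O(k\Delta)$. For memory, discarding the padding procedure removes the extra $\log(k+\Delta)$ factor that sat inside the max in the stabilizing bound, because the direct meeting scheme needs only $O(\log k)=O(\log(k+\Delta))$ bits for the ID encoding. Each settled agent stores DFS bookkeeping for at most $\ell_{max}$ dispersal traversals and $\zeta_{max}$ $Global\_Election$ traversals, each costing $O(\log(k+\Delta))$ bits, yielding $O(\max\{\ell_{max},\zeta_{max}\}\log(k+\Delta))$ bits per agent, as claimed.
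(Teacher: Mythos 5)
Your proposal is correct and follows essentially the same route as the paper: two synchronized phases of $c_1\cdot k\Delta$ rounds, replacing the padding procedure with the direct ID-bit meeting scheme (giving $O(\Delta\log k)$ singleton election and dropping the $\log(k+\Delta)$ term from the memory max), deferring local-leader declaration to round $c_1\cdot k\Delta$ so that $roundNo$ timestamps can be replaced by ID-based tie-breaking, and arguing no-overtaking from the facts that components are frozen and all competing traversals in a component are eliminated within the Stage~2 window. Your explicit justification of the no-overtaking step via a bounded meeting time for co-resident traversals is slightly more detailed than the paper's, but it is the same argument.
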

\begin{proof}
The only difference with Theorem \ref{theorem:stabilizing} for the stabilizing algorithm is that since $k,\Delta$ are known, $Singleton\_Election$ finishes in $O(\Delta \log k)$ rounds with only $O(\log(k+\Delta))$ bits per agent. The total time complexity of Stage 1 and Stage 2 is $O(k\Delta)$. Therefore, the time complexity becomes $O(k\Delta)$ rounds and the memory complexity becomes $O(\max\{\ell_{max},\zeta_{\max}\} \log (k+\Delta))$ bits per agent. The correctness proof of two agents in two components are at least 2 hops apart follows easily from the proof of Lemma \ref{lemma:component-hop}. 
\end{proof}

\section{Leader Election, $k=n$} 
\label{section:leader}
We discuss here how the stabilizing algorithm for the case of $k<n$ (not knowing $k,\Delta$) can be made explicit for the case of $k=n$ without knowing $k,\Delta$.
Additionally, only one global leader will be elected (i.e., the number of components $\kappa=1$). The time complexity is $O(m)$ and memory complexity is $O(\max\{\ell,\zeta,\log n\} \log n)$ bits per agent with $\ell$ being the number of non-singleton nodes in the initial configuration and $\zeta$ being the number of nodes that become local leaders in Stage 1.

We discuss here where we modify Algorithm \ref{algorithm:leader_election} for the case of $k=n$. The modified algorithm still has two stages, Stage 1 and Stage 2. In Stage 1,  $Non\_Singleton\_Election$ procedure execute as in Algorithm \ref{algorithm:dispersionk}. 
The change is on $Singleton\_Election$ and $Global\_Election$.
$Singleton\_Election$ elects an agent $\psi(u)$ at node $u$ as a local leader if and only if $\psi(u)$ finds all neighbors in $N(u)$ have a singleton agent positioned. 
Regarding $Global\_Election$, 
the $DFS(roundNo_u,\psi(u))$ run by local leader agent $\psi(u)$ becomes a global leader if and only if it can visit all the edges of $G$. That is, if it visits an unoccupied node while running $DFS(roundNo_u,\psi(u))$, then $\psi(u)$ stops  $DFS(roundNo_u,\psi(u))$, becomes $non\_candidate$, and returns to $home(\psi(u))$. This is because the encounter of an unoccupied node by $Global\_Election$ means that at least one $Non\_Singleton\_Election(r_v)$ procedure by an initially non-singleton agent has not been finished yet. $Confirm\_Empty()$ (Algorithm \ref{algorithm:confirm-empty}) can correctly guarantee that whether an empty node visited is occupied or unoccupied since for the occupied node there is an oscillating agent visiting it every two rounds.  

\begin{theorem}[{\bf Explicit Algorithm, $k=n$}]
\label{theorem:explicit}
There is an explicit deterministic algorithm for $k=n$ agents in the agentic model that elects one agent as a leader in the graph $G$, without agents knowing any graph parameter a priori. The time complexity of the algorithm is $O(m)$ rounds and the memory complexity is $O(\max\{\ell,\zeta,\log n\} \log n)$ bits per agent. %
\end{theorem}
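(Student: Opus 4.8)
The plan is to reuse the two-stage machinery and the supporting lemmas already proved for the $k<n$ case, while exploiting the single structural simplification that $k=n$ brings: once dispersion completes, each of the $n$ nodes carries exactly one agent, so $G$ itself is the unique component ($\kappa=1$) and no genuinely unoccupied node remains. I would first argue correctness of Stage 1. For a general initial configuration, the $\ell$ minimum-ID agents at the non-singleton nodes run $Non\_Singleton\_Election$; by the $k=n$ specialization of Lemma \ref{lemma:dispersion} their merging DFSs settle one agent per node, and at least one agent becomes a local leader exactly as in Lemma \ref{lemma:dispersion-election}. For a dispersed initial configuration every agent is singleton and, since no node ever shows multiplicity, each agent becomes a local leader through the $k=n$ form of $Singleton\_Election$ after scanning its $\delta_u$ neighbors. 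In either case $\zeta\ge 1$ local leaders are produced, all inside the single component $C=G$.

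Next I would establish that Stage 2 elects exactly one global leader and that it is never overtaken (the explicit guarantee). The key new invariant is that a local leader's $Global\_Election$ can succeed only after dispersion has fully finished: $Confirm\_Empty$ (Lemma \ref{lemma:confirmempty}, together with the oscillation guarantee of Lemma \ref{lemma:oscillatingneighbor}) lets a DFS head classify each empty node it reaches as either a home node (occupied) or genuinely unoccupied, and encountering a genuinely unoccupied node means some $Non\_Singleton\_Election$ is still running, so $\psi(u)$ aborts and becomes $non\_candidate$. Hence the only $Global\_Election$ that can certify that all edges of $G$ are visited is one that runs entirely within the fully dispersed graph. Among all local leaders the priority order $(roundNo,ID)$ --- later round wins, ties broken by ID --- selects a unique survivor exactly as in Lemma \ref{lemma: leader-election}; because the local-leader set is frozen once dispersion ends, this survivor is the latest-formed leader, its DFS is stopped by no one, it meets no unoccupied node, and it therefore traverses all $m$ edges and declares itself leader. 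No later local leader exists to overtake it, which yields the explicit property. The two-hop separation argument of Lemma \ref{lemma:component-hop} is vacuous here since $\kappa=1$.

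For the running time I would show that both DFS phases cost $O(m)$ rather than the $O(n\Delta)$ obtained by substituting $k=n$ into the $k<n$ analysis. The point is that with $k=n$ neither the dispersion DFS nor the winning $Global\_Election$ DFS ever settles at, or backtracks from, a truly unoccupied node after dispersion is complete; each edge of $G$ is therefore traversed a constant number of times, and the single extra round charged by $Confirm\_Empty$ per visited node is absorbed into the $O(m)$ total. Thus dispersion finishes in $O(m)$ rounds, the latest local leader forms at that moment, and its $Global\_Election$ traverses $G$ in a further $O(m)$ rounds; aborted elections and oscillation waits of lower-priority leaders run concurrently and never outlast the single winning traversal, so the overall round complexity is $O(m)$.

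Finally, for memory I would account per agent for the $O(\log n)$ bits of persistent state, port, ID, and $roundNo$ information; up to $\ell$ records, one per merged dispersion DFS routed through the node (the $k=n$ case of Lemma \ref{lemma:dispersion}); and up to $\zeta$ records for the distinct $Global\_Election$ DFSs that may pass through the node. Since $k=n$ makes $\log(k+\Delta)=O(\log n)$ and each record is $O(\log n)$ bits, the total is $O(\max\{\ell,\zeta,\log n\}\log n)$ bits. The main obstacle is the time analysis: carefully arguing that for $k=n$ every edge is charged only $O(1)$ traversals and that the cascade of oscillation waits among competing local leaders cannot stretch the timeline beyond the single $O(m)$ winning DFS --- this is precisely where the $k=n$ bound genuinely improves on the $O(k\Delta)$ obtained from Theorem \ref{theorem:explicitkn}.
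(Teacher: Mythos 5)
Your proposal takes essentially the same route as the paper: reuse the two-stage $k<n$ machinery, make the election explicit by having a $Global\_Election$ abort upon confirming a genuinely unoccupied node (so a DFS can certify all of $G$ only after dispersion completes, leaving no later local leader to overtake the winner), bound each stage by $O(m)$, and account memory as $O(\log n)$ bits per dispersion/election record across $\ell$ and $\zeta$ records. In fact you supply more justification than the paper's own terse proof does (notably on why each edge is charged $O(1)$ traversals and why oscillation waits do not extend the timeline), so this is a correct match rather than a divergence.
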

\begin{proof}
This algorithm follows the stabilizing algorithm for $k<n$ but avoids overtaking. 
Stage 1 finishes in $O(m)$ rounds as the DFS may need to visit all the edges of the graph before $Non\_Singleton\_Election$ settles all the agents. $Singleton\_Election$ finishes in $O(\Delta\log^2n)$ rounds. Stage 2 also finishes in $O(m)$ rounds. A local leader that cannot become a global leader can return its home node in $O(n)$ rounds following the path in the DFS tree built during $Global\_Election$. Therefore, the total time complexity is $O(m)$ rounds.

Regarding memory complexity, $Neighbor\_Exploration\_with\_Padding()$ needs $O(\log^2n)$ bits per agent. $Non\_Singleton\_Election$ needs $O(\max\{\ell,\zeta\} \log n)$ bits per agent, with $\zeta$ being the number of local leaders in Stage 1. $Global\_Election$ also needs $O(\max\{\ell,\zeta\} \log n)$ bits per agent to facilitate local leaders that become $non\_candidate$ during $Global\_Election$ to return to their home nodes. Therefore, the total memory complexity is $O(\max\{\ell,\zeta,\log n\} \log n)$ bits per agent.
\end{proof}

\section{MST Construction}\label{sec: MST construction}
In this section, we present a stabilizing deterministic algorithm for $k<n$ to construct an MST of each component $C_j$. We will discuss, towards the end of this section, an explicit version of this algorithm for $k<n$, knowing $k,\Delta$ and explicit version for $k=n$ not knowing any graph parameters.

We start the MST construction process after a leader is elected.  
Given a leader $r_l$ of Section \ref{section:leader} and its DFS tree $T_{r_l}$ built while running $DFS(roundNo_l,r_l)$, we discuss the algorithm assuming leader $r_l$ has already been stabilized, i.e., it will not be overtaken. We will discuss later how to handle the case of the leader that started MST construction being overtaken later.   
The agents do not need to know any graph parameters.
The MST construction finishes in $O(\Delta\log^2k+|C_j|\Delta+|C_j|\log|C_j|)$ rounds with $O(\max\{\ell_j,\zeta_j,\Delta,\log (k+\Delta)\} \log (k+\Delta))$ bits at each agent. We consider $G$ to be weighted and hence our MST for each component $C_j$ is a minimum weight MST of $C_j$. Having $\kappa$ components with $\kappa$ leaders, we will have $\kappa$ MSTs, one per component. 


After Algorithm \ref{algorithm:leader_election}, let $r_v$ be a leader of component $C_j$ positioned at node $v\in C_j$. 
Leader $r_v$ runs $DFS(r_v)$ to visit all other $|C_j|-1$ agents (at $|C_j|-1$ nodes) at component $C_j$ and assign them rank based on the order they are visited, i.e., the agent visited $i$-th  in the order receives rank $i$. This is done by $r_v$ revisiting the DFS tree $T_{r_v}$ built by $r_v$ while running $DFS(roundNo_v,r_v)$ to elect itself as a leader during Algorithm \ref{algorithm:leader-election} ($Global\_Election$). This revisit finishes in $O(|C_j|\Delta)$ rounds. 

As soon as an agent receives its rank, it considers itself as a single sub-component. 
The leader $r_v$ at node $v$ (which has rank-1) starts MST construction. $r_v$ includes the MOE (minimum weight outgoing edge) adjacent to $v$ in its sub-component and passes a token (message) to the rank-2 agent. This process runs iteratively until the rank-$(|C_j|-1)$ agent passes the token to the rank-$|C_j|$ agent. Consequently, the rank-$|C_j|$ agent includes in its sub-component the MOE available and passes the token to the rank-1 agent ($r_v$). This whole process of passing the token from rank-1 node to rank-$|C_j|$ node and back to rank-1 node is one phase. 

In the next phase, the minimum rank agent would include the MOE available to its sub-component and pass the token to the next minimum rank agent that belongs to another sub-component (within the component). In this way, the token reaches the minimum rank agent from the highest rank agent and this phase is completed. This process is repeated phase-by-phase until there is a single sub-component left, which is component $C_j$. Eventually, we have an MST of component $C_j$ with $|C_j|$ agents. Let us call this algorithm \textit{MST\_Construction}. 
A complete pseudocode is given in Algorithm~\ref{alg: mst_construction}. 

Our algorithm resembles the MST algorithm of Gallager, Humblet, and Spira \cite{Gallager83} with the difference that we start MST construction through ranks already provided to agents, whereas in \cite{Gallager83} all nodes have the same rank. The merging of two same rank sub-components with rank $k$ in \cite{Gallager83} provides a new rank of  $k+1$ for the merged sub-component. In ours, there will be no same rank sub-component, and hence the merged sub-component gets the rank of one of the sub-components merged.  


\begin{figure}[!t]
    \centering
    \begin{tikzpicture}[scale=1.3, transform shape, level distance=1.5cm,
        level 1/.style={sibling distance=3cm},
        level 2/.style={sibling distance=1.5cm}]
        \node[circle,draw] (root) {$r_y$}
          child {node[circle,draw] (child1) {$r_x$}
            child {node[circle,draw, red] (child2) {$r_w$}
              child {node[circle,draw] (child8) {$r_h$}}}
            child {node[circle,draw] (child3) {$r_e$}}}
          child {node[circle,draw] (child4) {$r_c$}
            child {node[circle,draw] (child5) {$r_f$}}
            child {node[circle,draw] (child6) {$r_g$}}};
        \draw[->] (root) -- (child1);
        \draw[->] (child1) -- (child2);
        \draw[->] (child1) -- (child3);
        \draw[->] (root) -- (child4);
        \draw[->] (child4) -- (child5);
        \draw[->] (child4) -- (child6);
        \draw[->] (child2) -- (child8);
        
        \begin{scope}[xshift=-5.5cm]
        \node[circle,draw] (root2) {$r_a$}
          child {node[circle,draw, red] (child42) {$r_u$}
            child {node[circle,draw] (child52) {$r_v$}}
            child {node[circle,draw] (child62) {$r_z$}}};
        \draw[->] (root2) -- (child42);
        \draw[->] (child42) -- (child52);
        \draw[->] (child42) -- (child62);
        \draw[dotted, blue] (child42) -- (child2) node[midway, above, font=\footnotesize, blue] {MOE};
        \end{scope}
    \end{tikzpicture}
    \caption{Sub-components $C_{r_w}$ and $C_{r_u}$ before merging.}
    \label{fig1}
\end{figure}
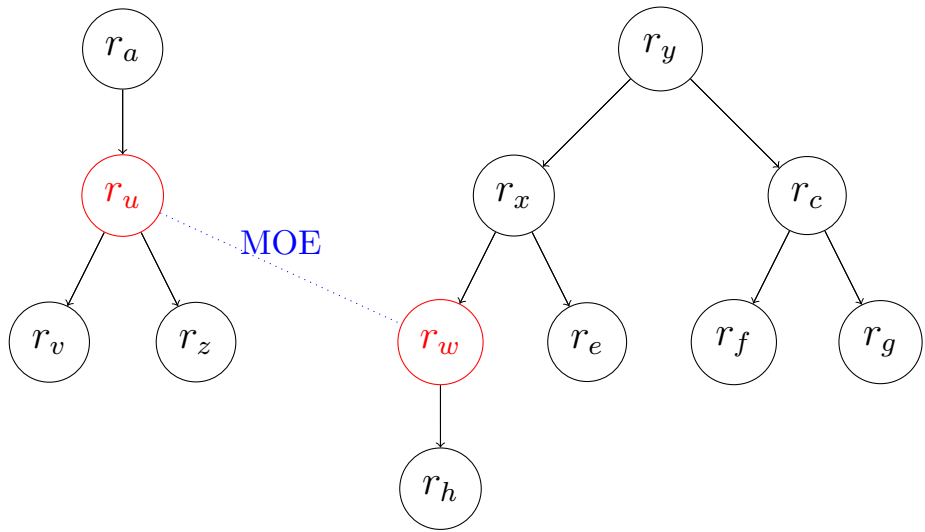

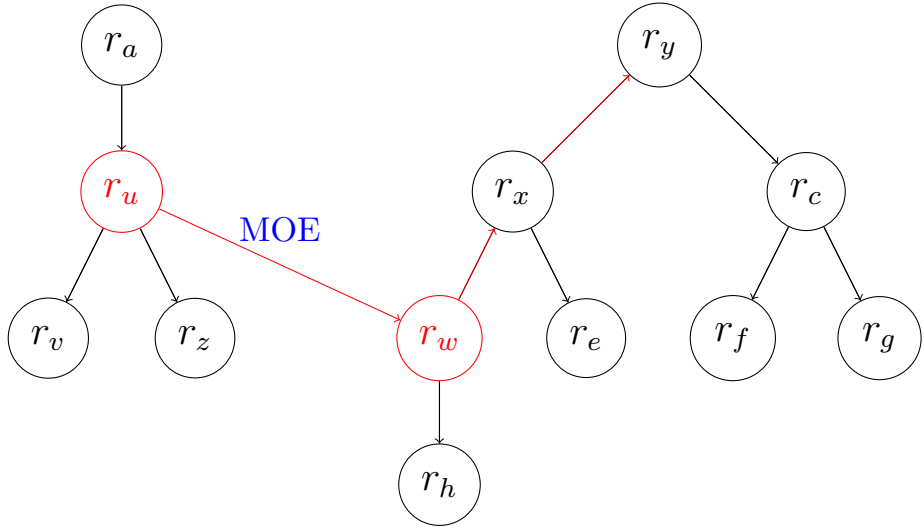
\begin{figure}[!h]
    \centering
    \begin{tikzpicture}[scale=1.3, transform shape, level distance=1.5cm,
        level 1/.style={sibling distance=3cm},
        level 2/.style={sibling distance=1.5cm}]
        \node[circle,draw] (root) {$r_y$}
          child {node[circle,draw] (child1) {$r_x$}
            child {node[circle,draw, red] (child2) {$r_w$}
              child {node[circle,draw] (child8) {$r_h$}}}
            child {node[circle,draw] (child3) {$r_e$}}}
          child {node[circle,draw] (child4) {$r_c$}
            child {node[circle,draw] (child5) {$r_f$}}
            child {node[circle,draw] (child6) {$r_g$}}};
        \draw[<-,red] (root) -- (child1);
        \draw[<-, red] (child1) -- (child2);
        \draw[->] (child1) -- (child3);
        \draw[->] (root) -- (child4);
        \draw[->] (child4) -- (child5);
        \draw[->] (child4) -- (child6);
        \draw[->] (child2) -- (child8);
        
        \begin{scope}[xshift=-5.5cm]
        \node[circle,draw] (root2) {$r_a$}
          child {node[circle,draw, red] (child42) {$r_u$}
            child {node[circle,draw] (child52) {$r_v$}}
            child {node[circle,draw] (child62) {$r_z$}}};
        \draw[->] (root2) -- (child42);
        \draw[->] (child42) -- (child52);
        \draw[->] (child42) -- (child62);
        \draw[->, red] (child42) -- (child2) node[midway, above = 0.1cm, font=\footnotesize, blue] {MOE};
        \end{scope}
    \end{tikzpicture}
    \caption{Sub-component $C^{new}_{r_u}$ after merging.}
    \label{fig2}
\end{figure}

\begin{algorithm}[h!]
\footnotesize{
\SetKwInput{KwInput}{Input}
\SetKwInput{KwEnsure}{Ensure}
\KwInput{An $n$-node anonymous network with $k\leq n$ agents with unique IDs placed on $k$ nodes forming $\kappa$ components with an agent $r_l$ at a node elected as a leader in each component $C_j$ (Algorithm \ref{algorithm:leader_election}) with DFS tree $T_{r_l}$  built while running $DFS(roundNo_l,r_l)$ (Algorithm \ref{algorithm:leader-election}).} 
\KwEnsure{MST construction of component $C_j$.}    

The leader assumes rank $1$.  It re-traverses the DFS tree $T_{r_l}$ and returns to its home node.  While re-traversing $T_{r_l}$, it provides the distinct rank from the range of $[2, |C_j|]$ to the agents at $|C_j|-1$ other nodes in order of visit. 
\label{line: rank_numbering}

Each agent $r_u\in C_j$ considers itself as a sub-component $C_{r_u}$ with $rank(C_{r_u})\leftarrow rank(r_u)$. 
\label{line: single_component}

The leader $r_l\in C_j$ generates a token.  Let its sub-component be $C_{r_l}$.
 
\While{$|C_{r_l}|<|C_j|$}
{   \If{$r_u$ has the token}
    {
        \If{$rank(r_u) \leq rank(C_{r_u})$}
        {
            Agent $r_u$ finds the MOE of the $C_{r_u}$ going to another sub-component $C_{r_w}$ connecting $r_u$ with $r_w$.

            \If{$rank(C_{r_u}) < rank(C_{r_w})$}
            {
                $r_w$ becomes the root node of $C_{r_w}$ by reversing the parent-child pointers from $r_w$ up to the root node of $C_{r_w}$ and $r_u$ becomes the parent of $r_w$ (Fig.~\ref{fig1}). $C_{r_w}$ then merges with $C_{r_u}$ giving a new sub-component $C_{r_u}^{new}$ with $rank(C^{new}_{r_u})\leftarrow rank(C_{r_u})$ (Fig.~\ref{fig2}).
            }
            
            \If{$rank(r_u) < |C_j|$}
            {
                $r_u$ passes the token to agent $r_v$ with  $rank(r_v)=rank(r_u)+1$ using $T_{r_l}$.
            }
        }
        \uElseIf{$rank(r_u) > rank(C_{r_u})$ and $rank(r_u) < |C_j|$}
        {
            $r_u$ passes the token to agent $r_v$ with  $rank(r_v)=rank(r_u)+1$ using $T_{r_l}$.
        }

        \ElseIf{$rank(r_u) = |C_j|$}
        {
        $r_u$ passes the token to the leader $r_l$ (with rank $1$)  using $T_{r_l}$. This token passing visits the parents of $r_u$ until the token reaches the root node of $T_{r_l}$, where the leader is positioned. 
          
        }
    }
} 
\caption{{$MST\_Construction$}}
\label{alg: mst_construction}

        
      
}
\end{algorithm}

Figs.~\ref{fig1} and \ref{fig2} illustrate these ideas. Fig.~\ref{fig1} shows sub-components $C_{r_u}$ and $C_{r_w}$ before they merge due to the MOE connecting $r_u$ with $r_w$. Fig.~\ref{fig2} captures the merged sub-component $C^{new}_{r_u}$ such that root of the  $C^{new}_{r_u}$ remains unchanged and the pointer changes occurred in the $C_{r_w}$ sub-component during the merging. The directed edge denotes new parent-child relationships.

We now discuss the overtaking issue of a leader that started MST construction in $C_j$. Suppose leader $r_l$ started MST construction and it is overtaken by another leader $r_l'$. In the MST construction, we ask leaders to provide nodes with $roundNo$ information at which they became leaders. If an agent $r$ now gets to know that a leader $r_l'$ with $roundNo_{l'}>roundNo_l$ started MST construction, $r$ discards its MST construction for $r_l$ and initiates the process for $r_l'$. Since leader election stabilizes to a single global leader in each component $C_i$, eventually, all MST construction processes from overtaken leaders stabilize to a single process from the leader that will not be overtaken. 

\vspace{2mm}
\noindent{\bf Analysis of the Algorithm.}
We now analyze Algorithm \ref{alg: mst_construction} for its correctness, time, and memory complexities. 

\begin{lemma}
\label{lemma:overtakeMST}
Suppose a leader $r_l$ running MST construction (Algorithm \ref{alg: mst_construction}) in component $C_j$ is overtaken by another leader $r_l'$. Algorithm \ref{alg: mst_construction} by $r_l'$ subsumes the MST construction of $r_l$.
\end{lemma}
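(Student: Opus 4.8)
The plan is to leverage the $roundNo$ tagging mechanism already built into $Global\_Election$ (Algorithm \ref{algorithm:leader-election}) together with the stabilization guarantee of Theorem \ref{theorem:stabilizing}. First I would pin down precisely what overtaking means: priority between two competing leaders in a component is resolved lexicographically by $(roundNo, ID)$, with the strictly larger $roundNo$ winning (ties broken by ID). Hence if $r_{l'}$ overtakes $r_l$ in component $C_j$, then necessarily $roundNo_{l'} > roundNo_l$. I would then make explicit that each leader, when it ranks the agents of its component at the start of Algorithm \ref{alg: mst_construction}, stamps every ranked agent with its own $roundNo$; thus every agent participating in a given MST instance carries the $roundNo$ of the leader that initiated that instance.

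Next I would describe how the two MST instances interact. When $r_{l'}$ becomes leader it re-traverses its own DFS tree $T_{r_{l'}}$ (built during its $Global\_Election$) to assign fresh ranks, stamping each visited agent with $roundNo_{l'}$. Any agent $r$ that still holds MST state from $r_l$ carries the smaller stamp $roundNo_l$; upon being reached by (or otherwise learning of) the instance of $r_{l'}$, the comparison $roundNo_{l'} > roundNo_l$ triggers $r$ to discard all of its $r_l$-state --- its current rank, its sub-component membership, and its parent--child pointers --- and to re-initialize as a singleton sub-component under $r_{l'}$. Because every agent resets to a singleton and all subsequent token-passing and merging reference $T_{r_{l'}}$ rather than $T_{r_l}$, the partially built forest of $r_l$ is cleanly erased and rebuilt under $r_{l'}$; this is precisely the sense in which the instance of $r_{l'}$ subsumes that of $r_l$.

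Finally I would close with stabilization. Theorem \ref{theorem:stabilizing} guarantees that overtaking in $C_j$ ceases after $O((|C_{max}|+\log^2 k)\Delta)$ rounds, so there is a unique final leader $r_{l^*}$ in $C_j$ whose $roundNo$ is largest and which is never overtaken. Applying the subsumption step inductively along the finite chain of overtaking leaders, every instance initiated by an earlier (overtaken) leader is absorbed by the instance of its successor, and ultimately by the instance of $r_{l^*}$. Since $r_{l^*}$ encounters no leader with a larger stamp, its instance of Algorithm \ref{alg: mst_construction} runs to completion unimpeded and outputs a correct MST of $C_j$.

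I expect the main obstacle to be the consistency of the discard-and-rebuild carried out \emph{mid-construction}: I must ensure that when an agent abandons a partially merged sub-component from $r_l$, the reversed parent--child pointers introduced during earlier merges (Figs.~\ref{fig1}--\ref{fig2}) leave no dangling references in agents not yet reached by $r_{l'}$'s re-traversal. The clean way to handle this is to note that an agent's MST-role is determined solely by its current $roundNo$ stamp and the tree it references, and that in Algorithm \ref{alg: mst_construction} ranking (which re-stamps every agent and hands it $T_{r_{l'}}$) fully precedes the first merge of the while-loop. Consequently all agents of $C_j$ are re-stamped to $roundNo_{l'}$ and re-initialized as singletons before $r_{l'}$ performs any merge, guaranteeing a consistent restart.
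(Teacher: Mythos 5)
Your proposal is correct and follows essentially the same route as the paper's proof: agents compare the $roundNo$ stamps of competing MST instances, discard the state of the overtaken leader $r_l$ in favor of $r_{l'}$, and stabilization of leader election guarantees the chain of subsumptions terminates at the final leader. You work out the mechanics (re-stamping during the ranking traversal, resetting to singleton sub-components, the induction along the overtaking chain) in considerably more detail than the paper, whose proof is three sentences and instead leans on the observation that agents do not change their home nodes during MST construction, but the underlying argument is the same.
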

\begin{proof}
The MST construction instance of $r_l'$ meets the MST construction instance of $r_l$. In every node, $r_l'$ will ask the agent to follow the MST construction instance of $r_l'$ discarding the information regarding $r_l$. Since agents in $C_i$ do not change their home nodes during MST construction, the subsumption is guaranteed.
\end{proof}

\begin{lemma}
    Algorithm~\ref{alg: mst_construction} 
    generates the MST of component $C_j$. \label{lem: mst_correctness}
\end{lemma}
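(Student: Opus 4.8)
The plan is to establish correctness through the classical \emph{cut property} (blue rule) for minimum spanning trees, exploiting the fact that the token-passing discipline serializes all edge selections so that, unlike in the concurrent GHS algorithm, no two sub-components ever choose edges simultaneously. First I would recall that since the edge weights are distinct (or are made distinct by a fixed consistent tie-breaking rule), the MST of $C_j$, call it $\mathrm{MST}(C_j)$, is unique, and that for any nonempty proper subset $S$ of $V(C_j)$ the minimum-weight edge crossing the cut $(S, V(C_j)\setminus S)$ belongs to $\mathrm{MST}(C_j)$.

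The core of the argument is the following invariant, which I would prove by induction on the number of merges performed: at every point in the execution each sub-component $C_{r_u}$ is a (rooted) subtree of $\mathrm{MST}(C_j)$, and the distinct sub-components partition $V(C_j)$. The base case holds because initially every agent is its own singleton sub-component (Line~\ref{line: single_component}). For the inductive step, consider the unique agent $r_u$ holding the token that actually performs a merge; by the guard $rank(r_u)\le rank(C_{r_u})$ together with the rule that a merged sub-component inherits the smaller of the two ranks, $r_u$ is exactly the minimum-rank (representative) agent of its current sub-component, so each sub-component has a single representative and the MOE of $C_{r_u}$ is computed with respect to its current, up-to-date membership. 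A merge is executed only when $rank(C_{r_u})<rank(C_{r_w})$, in which case the edge added is the minimum-weight edge leaving $V(C_{r_u})$; applying the cut property with $S = V(C_{r_u})$ shows this edge lies in $\mathrm{MST}(C_j)$. Because the other endpoint $r_w$ lies in a \emph{different} sub-component, adding the edge joins two disjoint subtrees of $\mathrm{MST}(C_j)$ into one, which remains a subtree and creates no cycle; the parent-pointer reversal from $r_w$ up to the root of $C_{r_w}$ (Figs.~\ref{fig1}--\ref{fig2}) only re-roots $C_{r_w}$ and therefore preserves the rooted-tree structure of the merged $C^{new}_{r_u}$. This establishes the invariant.

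Next I would argue progress and termination. At the start of each phase the leader $r_l$ (rank $1$) satisfies $rank(r_l)=rank(C_{r_l})=1$, so it always acts as a representative; as long as $C_{r_l}$ is a proper subset of $V(C_j)$, connectivity of $C_j$ guarantees it has at least one outgoing edge, so the phase performs at least one merge and the number of sub-components strictly decreases. Since there are initially $|C_j|$ sub-components, after finitely many phases the loop guard $|C_{r_l}|<|C_j|$ fails and a single sub-component remains. By the invariant this final sub-component is a subtree of $\mathrm{MST}(C_j)$ spanning all $|C_j|$ vertices with exactly $|C_j|-1$ edges; a spanning subtree of $\mathrm{MST}(C_j)$ that uses only MST edges must equal $\mathrm{MST}(C_j)$ itself, proving correctness.

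I expect the main obstacle to be the inductive step's bookkeeping: I must verify that the representative condition $rank(r_u)\le rank(C_{r_u})$, evaluated against the \emph{current} sub-component membership and rank, really does single out one agent per sub-component even as ranks propagate during same-phase merges, so that the MOE is always taken over the correct (current) vertex set $V(C_{r_u})$ rather than a stale one, and that visits which do not satisfy $rank(C_{r_u})<rank(C_{r_w})$ simply defer the merge to the lower-rank component without adding any edge. Handling the overtaking of the initiating leader is orthogonal and already covered by Lemma~\ref{lemma:overtakeMST}, so I would invoke that result rather than re-proving it here.
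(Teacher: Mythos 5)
Your proof is correct, but it takes a genuinely different route from the paper's. The paper argues in three separate contradiction steps: (i) no cycle is ever created, because a cycle would require two sub-components of equal rank to merge; (ii) the output spans $C_j$, because the loop terminates only when the leader's sub-component contains all $|C_j|$ agents; and (iii) minimality via an exchange argument --- if the output tree $T$ differed from the MST $T^{*}$, some minimum-weight edge $e \in T^{*}\setminus T$ would have been passed over in favor of a heavier edge in some phase, contradicting the MOE selection rule. You instead prove a single inductive invariant --- every sub-component is at all times a vertex-disjoint rooted subtree of the unique MST --- certifying each added MOE with the cut property, and then note that the final spanning sub-component, being a spanning subtree of the MST, must equal it. This is the standard blue-rule correctness proof for Bor\r{u}vka/GHS-style algorithms; it buys acyclicity, spanningness, and minimality simultaneously from one invariant, and it makes explicit \emph{why} no cycle can arise (an MOE is by definition outgoing, so it joins two vertex-disjoint subtrees of a tree), whereas the paper's cycle and minimality steps are terser and less formally grounded. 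Your termination argument (the leader's rank-$1$ sub-component always wins the rank comparison, so each phase performs at least one merge) is also a bit more careful than the paper's appeal to the loop guard, and your deferral to Lemma~\ref{lemma:overtakeMST} for the overtaking issue matches the paper's treatment. One caveat that applies to both proofs: uniqueness of the MST, and hence the cut property in the form you use it, requires distinct edge weights or a consistent tie-breaking rule; the paper only states this assumption in commented-out text, so it is worth making explicit in your write-up.
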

\begin{proof}
We prove this in three steps: firstly, Algorithm~\ref{alg: mst_construction} constructs a tree in $C_j$; secondly,  the constructed tree is a spanning tree in $C_j$; finally, the spanning tree is, indeed, a minimum spanning tree in $C_j$. Firstly, we prove by contradiction that no cycle is generated during Algorithm~\ref{alg: mst_construction}. Let us suppose, there exists a cycle at any point during the algorithm. Then it implies two sub-components with the same rank merged at some point, which is a contradiction. Secondly, let us suppose there exist at least two sub-components at the end of the algorithm. This implies that the leader sub-component (rank-$1$) did not merge with the other sub-component and terminated the algorithm, which contradicts the fact that the algorithm terminates when the leader is connected to $n$ agents altogether.

Finally, consider that the tree formed for $C_j$ by our algorithm is $T$ and the MST is $T^{*}$. Note that in Algorithm~\ref{alg: mst_construction}, each edge added to the MST tree is selected by the selection of an MOE.  If $T = T^{*}$ then $T$ is minimum spanning tree. If $T \neq T^{*}$ then there exists an edge $e \in T^{*}$ of minimum weight such that $e \notin T$. Therefore, there exists a phase in which $e$ was not considered during sub-component merging and an edge with higher weight, say $e'$, was considered. But this is contradictory to Algorithm~\ref{alg: mst_construction} which merges the sub-components with an MOE. Therefore, Algorithm~\ref{alg: mst_construction} constructs the MST of $C_j$.
\end{proof}



\begin{lemma}\label{lem: phase_number}
    In Algorithm~\ref{alg: mst_construction}, 
    the leader initiates the merging $O(\log |C_j|)$ times in component $C_j$.
\end{lemma}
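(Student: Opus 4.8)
The plan is to prove the lemma by showing that a single \emph{phase} — one full trip of the token from the rank-$1$ leader up to the rank-$|C_j|$ agent and back to the leader (Algorithm~\ref{alg: mst_construction}) — reduces the number of live sub-components by at least a factor of two. Since the construction begins with $|C_j|$ singleton sub-components (each agent is its own sub-component) and the \texttt{while}-loop halts as soon as a single sub-component of size $|C_j|$ remains, iterating the halving bound gives that at most $\lceil \log_2 |C_j| \rceil = O(\log|C_j|)$ phases occur. Because the leader (rank $1$) initiates exactly one merging round per phase, this yields the claimed bound.

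First I would pin down the structural invariant that each sub-component is a rooted tree whose rank equals the \emph{minimum} agent-rank it contains, which is maintained by the rule $rank(C^{new}_{r_u}) \leftarrow rank(C_{r_u})$ when the smaller-rank side absorbs the larger. This invariant implies that the test $rank(r_u) \le rank(C_{r_u})$ is passed exactly by the minimum-rank agent of each sub-component; hence in one phase every currently live sub-component gets precisely one opportunity, through its representative, to compute its MOE and merge.

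The core step is the Borůvka-style halving argument. Let $C_1,\dots,C_t$ be the sub-components at the start of a phase with $t \ge 2$, and form the auxiliary directed graph $H$ on $\{C_1,\dots,C_t\}$ that draws an arc from each $C_i$ to the sub-component reached by $C_i$'s MOE. Since $C_j$ is connected, each $C_i$ has an outgoing edge, so $H$ has out-degree exactly $1$ at every vertex. Because edge weights are distinct, the only directed cycles in such a functional graph are $2$-cycles, arising from a pair of sub-components whose mutually minimum outgoing edge coincides. Consequently each weakly connected component of $H$ contains at least two of the $C_i$, and the sub-components in one weakly connected component of $H$ coalesce into a single new sub-component during the phase; therefore at most $t/2$ sub-components survive the phase. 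Feeding the initial count $|C_j|$ into this recurrence gives a single sub-component after $O(\log|C_j|)$ phases, at which point $|C_{r_l}| = |C_j|$ and the loop terminates.

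The step I expect to be the main obstacle is verifying that the rank-directed, token-sequential merging genuinely collapses each weakly connected component of $H$ into one sub-component \emph{within the same phase} — that is, that no sub-component's MOE is dropped merely because its target currently carries a smaller rank (the merge fires only when $rank(C_{r_u}) < rank(C_{r_w})$). I would address this by exploiting the fact that the token visits representatives in increasing rank order together with the "smaller rank absorbs larger rank and keeps the smaller rank" rule: I would argue that along each arc of $H$ the merge is ultimately carried out from the lower-rank endpoint so that every weakly connected component of $H$ contracts to a single sub-component before the phase ends, which is exactly what the halving count needs. Getting this coalescence argument airtight (including the bookkeeping when deferred merges are handed off across ranks) is the delicate part; the remaining counting is then immediate.
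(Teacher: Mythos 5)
Your overall plan is the same Bor\r{u}vka-style halving argument the paper uses: one token round trip is a phase, each phase at least halves the number of live sub-components, hence $O(\log|C_j|)$ phases. The functional-graph setup (each sub-component points to the target of its MOE, distinct weights forbid directed cycles longer than $2$) is fine. But the step you yourself flag as ``the delicate part'' is a genuine gap, and I do not see how to close it for Algorithm~\ref{alg: mst_construction} as written. A merge along an arc of $H$ fires only when the \emph{tail} of the arc has the smaller rank: the representative $r_u$ adds its MOE only if $rank(C_{r_u})<rank(C_{r_w})$. If a sub-component $B$'s MOE leads to a lower-rank sub-component $A$, then $B$ initiates no merge, and $B$ is contracted in this phase only if it happens to be the MOE target of some lower-rank sub-component --- which nothing guarantees, since MOE selection is not symmetric. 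Concretely, if the weights are arranged so that every sub-component's cheapest outgoing edge leads toward the rank-$1$ sub-component while the rank-$1$ sub-component's own MOE picks off only one neighbor, a phase removes a single sub-component, not half of them, and the phase count degrades to $\Theta(|C_j|)$. Your suggested fix --- that the merge is ``ultimately carried out from the lower-rank endpoint'' within the same phase --- cannot work as stated, because the lower-rank endpoint's representative has already held and released the token by the time the higher-rank endpoint discovers the arc, so any deferred merge slips to the next phase.

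For what it is worth, the paper's own proof of this lemma consists of the bare assertion that ``in each phase, each sub-component merges at least once with another sub-component,'' which is exactly the unproved claim above; so you have correctly located the soft spot rather than introduced a new one. To actually obtain the $O(\log|C_j|)$ bound one needs either to modify the merge rule so that every sub-component's MOE is added in every phase regardless of the rank comparison (with tie-breaking on ranks only deciding the direction of the pointer reversal, as in Gallager--Humblet--Spira), or to prove an additional structural property of the rank assignment that rules out the bad configuration; neither is done in your proposal or in the paper.
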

\begin{proof}
    Initially, there are $|C_j|$ single-node sub-components in Algorithm~\ref{alg: mst_construction} (Line~\ref{line: single_component}) for each component $C_j$. In each phase (leader initiating a token until the token returns to the leader), each sub-component merges at least once with another sub-component. Therefore, after every phase, the number of sub-components is reduced by at least half. Consequently, after $O(\log n)$ phases, there remains only a single sub-component of $|C_j|$ nodes, which is component $C_j$ itself. 
\end{proof}


\begin{theorem}[{\bf Stabilizing MST, $k<n$}]
\label{theorem:stabilizingMST}
There is a stabilizing deterministic algorithm for constructing MST for each component $C_j$ in the agentic model, without agents knowing any graph parameter a priori.  The time complexity of the algorithm is $O(\Delta\log^2k+|C_{\max}|(\Delta+\log |C_{\max}|))$ rounds and the memory complexity is 
 $O(\max\{\ell_{\max},\zeta_{\max},\Delta,\log (k+\Delta)\} \log (k+\Delta))$ bits per agent.
\end{theorem}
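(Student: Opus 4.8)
The plan is to prove the theorem in four layers, treating the stabilizing leader election of Theorem \ref{theorem:stabilizing} as a black box and analyzing Algorithm \ref{alg: mst_construction} on top of it: correctness of the construction for a fixed (stable) leader, robustness of the whole system under leader overtaking, the running-time bound, and the memory bound. For correctness I would lean directly on Lemma \ref{lem: mst_correctness}: assuming the leader $r_l$ of a component $C_j$ is stable, the algorithm never closes a cycle (two equal-rank sub-components never merge), terminates only when one sub-component spans all $|C_j|$ nodes, and adds only MOEs, so the cut property forces the output to be the MST of $C_j$. The one extra thing to check is that the ranking step (Line \ref{line: rank_numbering}) assigns the distinct ranks $1,\dots,|C_j|$ through a single re-traversal of the DFS tree $T_{r_l}$, which follows since the tree is fixed once the leader is stable.

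Next I would handle stabilization. Because MST construction is launched while leader election is still settling, the leader of $C_j$ may be overtaken mid-construction. Here I invoke Lemma \ref{lemma:overtakeMST}: a later leader $r_l'$ (larger $roundNo$) meeting an in-progress construction of $r_l$ forces every agent in $C_j$ to discard $r_l$'s bookkeeping and adopt $r_l'$'s, and since home nodes stay fixed during construction, the new run strictly subsumes the old. Theorem \ref{theorem:stabilizing} guarantees the leader sequence in each $C_j$ stabilizes after $O((|C_{max}|+\log^2k)\Delta)$ rounds, so there is a last, stable leader whose construction runs uninterrupted to completion, at which point the correctness argument above applies. For the time bound I would then sum, measured from the start of the system and using that overtaking only subsumes rather than compounds earlier partial work, three contributions: (i) leader election, $O((|C_{max}|+\log^2k)\Delta)=O(|C_{max}|\Delta+\Delta\log^2k)$ rounds; (ii) rank assignment, one re-traversal of $T_{r_l}$, costing $O(|C_{max}|\Delta)$ rounds; and (iii) the phases. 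By Lemma \ref{lem: phase_number} there are $O(\log|C_{max}|)$ phases, and the crux is to show each costs only $O(|C_{max}|)$ rounds: one phase is a single token sweep through $T_{r_l}$ (an Euler-tour traversal of a tree on $\leq|C_{max}|$ nodes, hence $O(|C_{max}|)$ moves) during which each sub-component's minimum-rank representative reads a pre-maintained per-node minimum outgoing edge and performs a merge, and since sub-components partition $C_j$ the reads and the pointer/rank updates total $O(|C_{max}|)$ per phase. The only $\Delta$-dependent work, namely re-scanning a node's incident edges as they become internal, is charged at most once per edge and so contributes $O(|C_{max}|\Delta)$ across the entire execution, absorbed into (ii). Summing gives $O(\Delta\log^2k+|C_{max}|(\Delta+\log|C_{max}|))$.

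For memory, the leader-election state already needs $O(\max\{\ell_{\max},\zeta_{\max},\log(k+\Delta)\}\log(k+\Delta))$ bits per agent. On top of this an agent stores its own rank and its sub-component's rank ($O(\log k)$ bits), a constant number of parent/child port pointers ($O(\log\Delta)$ bits), and, decisively, the set of its incident ports currently selected into the MST, which may be as many as $\Delta$ and hence cost $O(\Delta\log(k+\Delta))$ bits. Taking the maximum of these with the leader-election requirement yields $O(\max\{\ell_{\max},\zeta_{\max},\Delta,\log(k+\Delta)\}\log(k+\Delta))$ bits per agent, as claimed.

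The step I expect to be the main obstacle is the per-phase time analysis, specifically showing each phase is $O(|C_{max}|)$ rather than $O(|C_{max}|\Delta)$: this hinges on charging the $\Delta$-factor edge scanning to a once-per-edge ``becomes internal'' event instead of redoing it every phase, and on bounding pointer reversal and rank propagation during merges by an amortized $O(|C_{max}|)$ per phase, so that the $O(\log|C_{max}|)$ phases do not multiply the $\Delta$ factor back in. The most delicate point is verifying that this amortization survives leader overtaking—that a restart triggered by a later leader does not reset the once-per-edge charges in a way that compounds the bound across the finitely many overtaking events before stabilization.
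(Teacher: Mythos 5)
Your proposal is correct and follows essentially the same route as the paper: correctness via Lemma \ref{lem: mst_correctness}, overtaking handled by subsumption (Lemma \ref{lemma:overtakeMST}), $O(\log|C_j|)$ phases from Lemma \ref{lem: phase_number} with token passing and pointer/rank updates costing $O(|C_j|)$ per phase, the $\Delta$-dependent edge scanning amortized once per edge to $O(|C_j|\Delta)$ over the whole execution, and memory dominated by the max of the leader-election state and the $O(\Delta\log(k+\Delta))$ bits for MST edges at a high-degree node. The amortization concern you flag at the end is real but is treated no more carefully in the paper's own proof, which likewise relies only on the subsumption lemma.
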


\begin{proof}
Providing ranks to agents finishes in $O(|C_j|\Delta)$ rounds for each component $C_j$. The while loop (Algorithm~\ref{alg: mst_construction}) performs two operations -- token passing and merging.
In token passing, the token is passed through the edge of the tree $T_{DFS}$, and an edge is not traversed more than twice. Therefore, in a phase, to pass the token from rank-$1$ agent to rank-$|C_j|$ agent takes $O(|C_j|)$ rounds. From rank-$n$ agent the token returns to the leader again in $O(|C_j|)$ rounds. Combining this with Lemma~\ref{lem: phase_number}, token passing takes $O(|C_j| \log |C_j|)$ rounds. 

In the process of merging, an agent $r_u$ visits at most three types of edges: i) MOE edges within its sub-component $C_{r_u}$ ii) edges traversed to find MOE iii) reversing the edges from $r_w$ until the root of $C_{r_w}$ when it merges with another sub-component $C_{r_u}$ at $r_w$. In the case of i) MOE is the part of the sub-component $C_{r_u}$, i.e., a tree. Its traversal finishes in $O(|C_{r_u}|)$ rounds. In a phase, the combined size of all the sub-components is $O(|C_j|)$. In case ii) edges that are traversed to find the MOE were either part of MOE or not. In case they became part of MOE, they were traversed two times. There are at most $(|C_j|-1)$ such edges throughout the process. On the other hand, if some edges did not become part of the MOE then they were never traversed again. Therefore, there are in total $|C_j|\Delta-(|C_j|-1)$ such edges. In case iii) reversing an edge from its merging point to the root can not be more than its sub-component size. Therefore, reversing of edge for agent $r_u$ takes $O(|C_{r_u}|)$. Combining the time for the cases i) and iii) per phase with $O(\log |C_j|)$ phases (Lemma~\ref{lem: phase_number}), we have total runtime  $O(|C_j| \log |C_j|)$ rounds and for case ii) we have total $O(|C_j|\Delta)$ rounds throughout the execution. Thus, the overall round complexity becomes $O(|C_j|\Delta + |C_j| \log |C_j|)$ for MST construction. Combining this with the leader election time complexity of $O((|C_j|+\log^2k)\Delta)$ (Theorem \ref{theorem:stabilizing}), we have the claimed time bound. 


For memory, rank numbering takes  $O(\log k)$ bits at each agent to re-traverse the DFS tree (constructed during Algorithm \ref{algorithm:leader-election}). Furthermore, each agent stores $O(\log k)$ bits to keep the account of the ID/rank and sub-component rank. Also, there might be a case in which all the neighbors are part of the MST. Therefore, in the worst case, the highest degree ($\Delta$) agent (agent placed at the highest degree node) keeps the account of all the MST edges and requires $O(\Delta \log k)$ memory. Hence, the overall memory per agent in Algorithm~\ref{alg: mst_construction} is $O(\Delta \log k)$ bits. Combining this time/memory with the time/memory needed for leader election (Theorem \ref{theorem:stabilizing}), we have the claimed memory complexity bound.
\end{proof}

\noindent{\bf Explicit MST, $k<n$.} The MST construction starts after a leader is elected through 
the explicit leader election algorithm,
i.e., after $O(k\Delta)$ rounds. Due to explicit leader election, when MST construction starts, there is a unique global leader in each component $C_j$. The MST construction at $C_j$ then finishes in $O(|C_j|\Delta + |C_j| \log |C_j|)$ rounds with memory $O(\Delta \log k)$ bits. Therefore, combining time/memory bounds for leader election (Theorem \ref{theorem:explicitkn}), the total time complexity becomes $O(k\Delta+|C_j|\Delta + |C_j| \log |C_j|)=O(k\Delta)$ rounds and the memory complexity becomes $O(\max\{\ell_{\max},\zeta_{\max},\Delta\} \log (k+\Delta))$ bits per agent. 

\begin{theorem}[{\bf Explicit MST, $k<n$}]
\label{theorem:explicitMST}
There is an explicit deterministic algorithm for constructing MST for each component $C_j$ in the agentic model, with agents knowing $k,\Delta$ a priori.  The time complexity of the algorithm is $O(k\Delta)$ rounds and the memory complexity is 
 $O(\max\{\ell_{\max},\zeta_{\max},\Delta\} \log (k+\Delta))$ bits per agent.
\end{theorem}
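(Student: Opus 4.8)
The plan is to obtain Theorem~\ref{theorem:explicitMST} as a clean composition of the explicit leader election result (Theorem~\ref{theorem:explicitkn}) with the MST-construction machinery already developed for the stabilizing case, exploiting the fact that explicitness removes the overtaking complication. First I would invoke Theorem~\ref{theorem:explicitkn}: after the $O(k\Delta)$ rounds allotted to leader election, every component $C_j$ has a unique leader $r_l$ that will never be overtaken, together with its DFS tree $T_{r_l}$ built during $Global\_Election$. Because the leader is final, the MST construction of Algorithm~\ref{alg: mst_construction} is started only once leader election is guaranteed complete and then runs a single time to completion; in particular, the overtaking-handling mechanism of Lemma~\ref{lemma:overtakeMST}, which in the stabilizing version forces an agent to discard a running MST instance and restart for a later leader, is never triggered here. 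This is the single structural simplification the proof relies on.

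Next I would establish correctness and the per-component cost by reusing the analysis already carried out for the stabilizing MST (Theorem~\ref{theorem:stabilizingMST}). Correctness is unchanged and independent of how the leader was elected: Lemma~\ref{lem: mst_correctness} shows Algorithm~\ref{alg: mst_construction} produces a spanning tree of $C_j$ of minimum weight, since no cycle can form (two sub-components of equal rank never merge) and the loop halts only when a single sub-component of size $|C_j|$ remains, while minimality follows from always merging along a minimum-weight outgoing edge. For the running time I would split the cost into rank assignment, which re-traverses $T_{r_l}$ in $O(|C_j|\Delta)$ rounds, and the main \texttt{while} loop. By Lemma~\ref{lem: phase_number} the number of phases is $O(\log|C_j|)$; token passing along $T_{r_l}$ over all phases costs $O(|C_j|\log|C_j|)$, edge scanning to locate the MOEs costs $O(|C_j|\Delta)$ across the whole execution, and the pointer reversals performed during merges cost $O(|C_j|\log|C_j|)$. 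Summing gives $O(|C_j|\Delta + |C_j|\log|C_j|)$ rounds for the MST phase, exactly as in Theorem~\ref{theorem:stabilizingMST}.

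Finally I would combine the two phases. Since $|C_j|\le k$, both $|C_j|\Delta$ and $|C_j|\log|C_j|$ are $O(k\Delta)$, so the MST phase is absorbed into the $O(k\Delta)$ cost of explicit leader election, yielding total time $O(k\Delta)$. For memory I would add the two contributions: explicit leader election needs $O(\max\{\ell_{\max},\zeta_{\max}\}\log(k+\Delta))$ bits per agent (Theorem~\ref{theorem:explicitkn}), while the MST phase needs an extra $O(\Delta\log k)$ bits, because in the worst case an agent at a degree-$\Delta$ node must record all of its incident tree edges. Taking the maximum gives $O(\max\{\ell_{\max},\zeta_{\max},\Delta\}\log(k+\Delta))$ bits per agent.

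I expect the only genuine subtlety to be the clean separation of the two phases: I must argue that because $k$ and $\Delta$ are known, all agents can agree on a common round $\Theta(k\Delta)$ at which leader election is certainly finished and MST construction begins, so that the MST phase never interleaves with a still-running $Global\_Election$ or $Non\_Singleton\_Election$. This is what makes overtaking vacuous and lets me reuse the stabilizing analysis verbatim. Everything else is a direct restatement of Theorem~\ref{theorem:stabilizingMST} with the bound $|C_j|\le k$ applied and the overtaking case deleted.
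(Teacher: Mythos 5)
Your proposal follows essentially the same route as the paper: the paper's own justification is precisely the composition of Theorem~\ref{theorem:explicitkn} with the per-component MST analysis of Theorem~\ref{theorem:stabilizingMST}, observing that explicit leader election makes overtaking (Lemma~\ref{lemma:overtakeMST}) vacuous and that the MST phase cost $O(|C_j|\Delta + |C_j|\log|C_j|)$ is absorbed into the $O(k\Delta)$ leader-election bound, with the $\Delta\log k$ memory term added for storing incident MST edges. Your additional remark about agents agreeing on a common round $\Theta(k\Delta)$ to separate the two phases is a reasonable elaboration of what the paper leaves implicit, but it does not change the argument.
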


\noindent{\bf Explicit MST, $k=n$.} The MST construction starts after a leader is elected, i.e., after $O(m)$ rounds. Due to explicit leader election, when MST construction starts, there is a unique global leader in the single component $C$ which is $G$. The MST construction at $C$ then finishes in $O(|C_j|\Delta + |C_j| \log |C_j|)\leq  O(m + n \log n)$ rounds with memory $O(\Delta \log n)$ bits; when whole graph is a single component, then $|C_j|\Delta$ is an overestimate on time bound since leader election and MST construction both finish after visiting all edges, i.e., $|C_j|\Delta$ can be replaced with $m$.  Therefore, combining time/memory bounds for leader election (Theorem \ref{theorem:explicit}), the total time complexity becomes $O(m+n\log n)$ rounds and the memory complexity becomes $O(\max\{\ell,\zeta,\Delta,\log n\} \log n)$ bits per agent. 

\begin{theorem}[{\bf Explicit MST, $k=n$}]
\label{theorem:explicitMSTkn}
There is an explicit deterministic algorithm for constructing MST of $G$ in the agentic model when $k=n$, without agents knowing any graph parameter a priori.  The time complexity of the algorithm is $O(m+n\log n)$ rounds and the memory complexity is 
 $O(\max\{\ell,\zeta,\Delta,\log n\} \log n)$ bits per agent.
\end{theorem}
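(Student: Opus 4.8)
The plan is to reduce this to two results already established in the excerpt: the explicit leader election for $k=n$ (Theorem~\ref{theorem:explicit}) and the MST construction analysis carried out in Theorem~\ref{theorem:stabilizingMST}, specializing both to the degenerate case where the single component $C$ coincides with all of $G$. First I would invoke Theorem~\ref{theorem:explicit} to elect a unique global leader $r_l$ in $O(m)$ rounds using $O(\max\{\ell,\zeta,\log n\}\log n)$ bits per agent. Because $k=n$, every node carries exactly one agent in the dispersed configuration, so there is exactly one component ($\kappa=1$) and it equals $G$; moreover the election is explicit, hence the leader is fixed and never overtaken. This removes any need for the overtaking machinery of Lemma~\ref{lemma:overtakeMST}: the procedure $MST\_Construction$ (Algorithm~\ref{alg: mst_construction}) runs once, on a single component, from a permanent root.

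Next I would run $MST\_Construction$ and argue correctness directly from Lemma~\ref{lem: mst_correctness}, which shows the output is the MST of the component --- here the whole graph $G$ --- and from Lemma~\ref{lem: phase_number}, giving $O(\log n)$ merge phases since $|C|=n$. For the running time I would reuse the per-operation accounting of Theorem~\ref{theorem:stabilizingMST} verbatim with $|C_j|$ replaced by $n$: rank assignment by re-traversing $T_{r_l}$, token passing across tree edges ($O(n)$ per phase, hence $O(n\log n)$ overall), and the merging work (MOE search, MOE inclusion, and pointer reversals). The one place that needs care --- and the only genuine obstacle --- is the edge-search term. The component bound would naively read $|C|\Delta = n\Delta$, which overcounts when $C=G$: because the DFS and tree traversals cross each of the $m$ graph edges only a constant number of times, this term is in fact $O(m)$. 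Substituting $n\Delta \mapsto m$ collapses the merging cost to $O(m+n\log n)$, which dominates both the $O(n\log n)$ token-passing cost and the $O(m)$ rank-assignment cost.

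Finally I would combine the bounds. The leader election phase costs $O(m)$ rounds and the construction phase costs $O(m+n\log n)$ rounds, so the total time is $O(m+n\log n)$. For memory I would again lift the analysis of Theorem~\ref{theorem:stabilizingMST}: rank numbering and sub-component bookkeeping need $O(\log n)$ bits, while an agent at a degree-$\Delta$ node may have to record up to $\Delta$ incident MST edges, costing $O(\Delta\log n)$ bits; taking the maximum with the $O(\max\{\ell,\zeta,\log n\}\log n)$ bits already used by the explicit leader election yields $O(\max\{\ell,\zeta,\Delta,\log n\}\log n)$ bits per agent. No graph parameter is used anywhere, since both Theorem~\ref{theorem:explicit} and Algorithm~\ref{alg: mst_construction} are parameter-oblivious, which completes the argument.
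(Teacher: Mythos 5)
Your proposal is correct and follows essentially the same route as the paper: elect the unique leader via Theorem~\ref{theorem:explicit}, run Algorithm~\ref{alg: mst_construction} on the single component $C=G$, and observe that the $|C|\Delta$ edge-search term is an overcount that collapses to $O(m)$ because every edge is traversed only a constant number of times, giving $O(m+n\log n)$ time and $O(\max\{\ell,\zeta,\Delta,\log n\}\log n)$ bits per agent. The paper's own argument is a condensed version of exactly this accounting.
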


\section{Concluding Remarks}
\label{section:conclusion}
We have studied two fundamental distributed graph problems, leader election, and MST, in the agentic model. The agentic model poses unique challenges compared to the well-studied message-passing model. We provided stabilizing as well as explicit deterministic algorithms for leader election when $k<n$. The stabilizing algorithms do not require agents to know graph parameters but overtaking of a leader may occur until a certain time. The explicit algorithms avoid overtaking but agents require knowledge of $k,\Delta$. For the case of $k=n$, we showed that the algorithm for leader election can be made explicit (no overtaking) without knowing $k,\Delta$. We then developed respective algorithms for MST for both $k<n$ and $k=n$.  To the best of our knowledge, both problems were studied for $k<n$ for the very first time in this paper. 

For future work, it would be interesting to improve the time and/or memory complexities of our leader election and MST algorithms.  It would be interesting to see where overtaking can be avoided without parameter knowledge for $k<n$. It would also be interesting to solve other fundamental distributed graph problems, such as maximal independent set, maximal dominating set, coloring, maximal matching, and minimum cut, in the agentic model for $k\leq n$. 
Some fundamental problems, such as the maximal independent set (MIS) and maximal dominating set (MDS), have been considered in the agentic model for $k=n$ \cite{PattanayakBCM24,ChandMS23}. The developed algorithms gather the robots to a single node after leader election and then present a technique to solve the problems. For the case of $k<n$, this idea does not extend well since some components may still be in leader election phase and some components may progress toward computing MIS and MDS and this creates a challenge on how to synchronize the agents that are in different phases. Finally, it would also be interesting to consider agent faults (crash or Byzantine) and be able to solve graph level tasks with non-faulty agents. 

\bibliographystyle{elsarticle-num}
\bibliography{references}
\end{document}